\newcommand{\commentout}[1]{}
\newcommand{\eat}[1]{}
\newcommand{\topic}[1]{\vspace{5pt}\noindent{{\bf #1:}}}
\newcommand{\SA}{\mathsf{SA}}
\newenvironment{proof}{\noindent {\em Proof: }\ignorespaces}%
                {\hspace*{\fill}$\Box$\par}
        {\hspace*{\fill}$\Box$\par\vspace{4mm}}
        {\hspace*{\fill}$\Box$\par}
\def\namedlabel#1#2{\begingroup
   \def\@currentlabel{#2}%
   \label{#1}\endgroup
}
\newtheorem{theorem}{Theorem}
\newtheorem{lemma}{Lemma}
\newtheorem{definition}{Definition}
\newtheorem{property}{Property}
\newtheorem{observation}{Observation}
\newcommand{\suf}[1]{$\mathsf{suf}(#1)$}
\newcommand*{\QEDB}{\hfill\ensuremath{\square}}
\newenvironment{example}
{ \vspace{4pt}
	\begin{small}
	\noindent{\bf Example}:  }%
                {
                	\end{small}
                \vspace{4pt} }
\newcommand{\redb}[1]{\textcolor{red}{\mathbf{#1}}}
\newcommand{\udl}[1]{\underline{#1}}
\newcommand{\urb}[1]{\udl{\redb{#1}}}
\newcommand{\vrb}[1]{\overrightarrow{\redb{#1}}}
\newcommand{\vlrb}[1]{\overleftarrow{\redb{#1}}}
\newcommand{\LMS}{\mathsf{LMS}}
\newcommand{\sfindex}{\mathsf{Index}}
\newcommand{\type}{\mathsf{Type}}
\newcommand{\Suf}{\mathsf{suf}}
\newcommand{\Unique}{\mathsf{Unique}}
\newcommand{\Empty}{\mathsf{Empty}}
\newcommand{\NUnique}{\mathsf{Multi}}
\newcommand{\Uniquei}{\mathsf{Unique1}}
\newcommand{\NUniquei}{\mathsf{Multi1}}
\newcommand{\Uniqueii}{\mathsf{Unique2}}
\newcommand{\NUniqueii}{\mathsf{Multi2}}
\newcommand{\select}{\mathsf{select}}
\newcommand{\Ept}{\mathsf{E}}
\newcommand{\BH}{\mathsf{B_H}}
\newcommand{\BT}{\mathsf{B_T}}
\newcommand{\Rone}{\mathsf{R_1}}
\newcommand{\Rtwo}{\mathsf{R_2}}
\begin{document}

\title{Optimal In-Place Suffix Sorting}
\author{Zhize Li\\ IIIS, Tsinghua University \\{\small zz-li14@mails.tsinghua.edu.cn}
        \and Jian Li\\ IIIS, Tsinghua University \\ {\small lijian83@mail.tsinghua.edu.cn} 
        \and Hongwei Huo\\ SCST, Xidian University \\ {\small hwhuo@mail.xidian.edu.cn}
        }

\date{}
\maketitle


\begin{abstract}
The suffix array is a fundamental data structure for many applications that involve string searching and data compression. Designing time/space-efficient suffix array construction algorithms has attracted significant attention and considerable advances have been made for the past 20 years. We obtain the \emph{first} in-place suffix array construction algorithms that are optimal both in time and space for (read-only) integer alphabets. Concretely, we make the following contributions:
\begin{enumerate}
    \item
    For integer alphabets, we obtain the first suffix sorting algorithm which takes linear time and uses only $O(1)$ workspace (the workspace is the total space needed beyond the input string and the output suffix array). The input string may be modified during the execution of the algorithm, but should be restored upon termination of the algorithm.
    \item
    We strengthen the first result by providing the first in-place linear time algorithm for read-only integer alphabets with $|\Sigma|=O(n)$ (i.e., the input string cannot be modified). This algorithm settles the open problem posed by Franceschini and Muthukrishnan in ICALP 2007. The open problem asked to design in-place algorithms in $o(n\log n)$ time and ultimately, in $O(n)$ time for (read-only) integer alphabets with $|\Sigma| \leq n$. Our result is in fact slightly stronger since we allow $|\Sigma|=O(n)$.
    \item
    Besides, for the read-only general alphabets (i.e., only comparisons are allowed), we present an optimal in-place $O(n\log n)$ time suffix sorting algorithm, recovering the result obtained by Franceschini and Muthukrishnan which was an open problem posed by Manzini and Ferragina in ESA 2002.
\end{enumerate}
\end{abstract}



\section{Introduction}

In SODA 1990, suffix arrays were introduced by Manber and Myers~\cite{manber1990suffix}
as a space-saving alternative to suffix trees~\cite{McCreight1976,Farach1997}.
Since then, it has been used as a fundamental data structure for many applications in string processing, data compression, text indexing, information retrieval and computational biology~\cite{Ferragina2000,Abouelhoda2002,Grossi2005,huo2014practical,huo2016cs2a}.
Particularly, the suffix arrays are often used to compute the Burrows-Wheeler transform ~\cite{Burrows1994} and Lempel-Ziv factorization~\cite{Ziv1978}.
Comparing with suffix trees, suffix arrays use much less space in practice.
Abouelhoda et al.~\cite{Abouelhoda2004} showed that any problem which can be computed using suffix trees can also be solved using suffix arrays with the same asymptotic time complexity, which makes suffix arrays very attractive both in
theory and in practice.
Hence, suffix arrays have been studied extensively over the last 20 years (see e.g., \cite{karkkainen2003simple,ko2003space,Kaerkkaeinen2006,franceschini2007place,Nong2009,Nong2011,Nong2013}).
We refer the readers to the surveys~\cite{Puglisi2007,Dhaliwal2012} for many suffix sorting algorithms.

In 1990,  Manber and Myers~\cite{manber1990suffix} obtained the first $O(n\log n)$ time suffix sorting algorithm over general alphabets. In 2003, Ko and Aluru \cite{ko2003space}, K{\"a}rkk{\"a}inen and Sanders \cite{karkkainen2003simple} and Kim et al. \cite{Kim2003} independently obtained the first linear time algorithm for suffix sorting
over integer alphabets. Clearly, these algorithms are optimal in terms of asymptotic time complexity.
However, in many applications, the computational bottleneck is the \emph{space} as we need the space-saving suffix arrays instead of suffix trees, and significant efforts have been made in developing \emph{lightweight} (in terms of space usage) suffix sorting algorithms for the last decade (see e.g.,~\cite{Manzini2002,Burkhardt2003,ko2003space,Hon2003,Maniscalco2006,franceschini2007place,Nong2007,Nong2009,Nong2011,Nong2013}).
In particular, the ultimate goal in this line of work is to
obtain {\em in-place algorithms} (i.e., $O(1)$ additional space), which are also asymptotically optimal in time.

\vspace{-2.5mm}
\subsection{Problem Setting}
\vspace{-3mm}
\topic{Problem} Given a string $T=T[0\ldots n-1]$ with $n$ characters, we need to construct the \emph{suffix array} ($\SA$) which contains the \emph{indices} of all sorted suffixes of $T$ (see Definition \ref{def:sa} for the formal definition of $\SA$).

We consider the following three popular settings.
Note that the constant alphabets (e.g., ASCII code) is a special case of integer alphabets, and the (read-only) integer alphabets is commonly used in practice.
We measure the space usage of an algorithm in the unit of {\em words} same as \cite{franceschini2007place,Nong2013}.
A word contains $\lceil\log n\rceil$ bits.
One standard arithmetic or bitwise boolean operation on word-sized operands costs $O(1)$ time.

\begin{enumerate}
  \item Integer alphabets:  Each $T[i]\in[1,|\Sigma|]$ where the cardinality of the alphabets is $|\Sigma| \leq n$ and each $T[i]$ is stored in a word.
  The input string $T$ may be modified by the algorithm, but should be restored upon termination of the algorithm.

  \item Read-only integer alphabets:
  Each $T[i]\in[1,|\Sigma|]$ where $|\Sigma|=O(n)$.
  Moreover, the input string $T$ is \emph{read-only}.
  Each $T[i]$ can be read in $O(1)$ time.
  Note that we allow $|\Sigma|=O(n)$ rather than $|\Sigma| \leq n$ in Case 1.

  \item Read-only general alphabets:
    The only operations allowed on the characters of $T$ are comparisons.
    The input string $T$ is read-only and we assume that each comparison takes $O(1)$ time.
    We {\em cannot} write the input space, make bit operations, even
    copy an input character $T[i]$ to the work space.
    Clearly, $\Omega(n\log n)$ time is a lower bound for suffix sorting in this case,
    as it generalizes comparison-based sorting.
  \end{enumerate}

The \emph{workspace} used by an algorithm
is the total space needed by the algorithm,
excluding the space required by the input string $T$ and
the output suffix array $\SA$.
An algorithm which uses $O(1)$ words workspace to construct $\SA$
is called an \emph{in-place} algorithm.
See Tables~\ref{tab:int}
\footnote{Some previous algorithms state the space usages in terms of bits. We convert them into words.} and \ref{tab:gel} for an overview.

\vspace{-2.5mm}
\subsection{Related Work and Our Contributions}
\vspace{-1mm}
\subsubsection{Integer Alphabets}
\vspace{-1.5mm}
In this case, we allow the algorithm to modify the string $T$ during the execution of the algorithm.
We also describe how to restore $T$ (if needed) at the end of the algorithm in Appendix \ref{app:rec}.
Chan et al.~\cite{chan2014selection} denote this model as the \emph{restore model} in their paper.
We list several previous results and our new result in Table~\ref{tab:int}.
Earlier algorithms that require more than $O(n)$ words workspace
(see Table~\ref{tab:int})
do not need to modify the string $T$ as they can afford to create a new array with $n$ words
to store the input.

Nong et al.~\cite{Nong2009a,Nong2011} obtained the first nearly linear time
algorithm that used sublinear workspace.
Recently, Nong \cite{Nong2013} obtained a linear time algorithm which used $|\Sigma|$ words workspace
without modifying the string $T$.
We improve their results as in the following theorem.
\vspace{-1.5mm}
\begin{theorem}
\label{thm:mainthm1}
There is an in-place linear time algorithm for suffix sorting over integer alphabets with $|\Sigma| \leq n$.
\end{theorem}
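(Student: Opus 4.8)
The plan is to build on the induced-sorting framework that underlies all the known linear-time algorithms (the SA-IS--style algorithm of Ko--Aluru and Nong et al.), and to reimplement every phase so that it runs in linear time using only $O(1)$ words of workspace. Recall the four phases applied to $T$: (1) classify each suffix as type $L$ or type $S$ and mark the leftmost-$S$ (LMS) positions; (2) sort the LMS substrings and rename them, producing a reduced string $T_1$ of length $n_1 \le n/2$ over the integer alphabet $[1,n_1]$; (3) recursively compute $\SA_1$, the suffix array of $T_1$; (4) from the sorted LMS suffixes read off $\SA_1$, induce the full $\SA$ by one left-to-right scan that places the $L$-type suffixes followed by one right-to-left scan that places the $S$-type suffixes, each being a stable bucket distribution. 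Since we work in the restore model, I would freely (and temporarily) overwrite cells of both $T$ and $\SA$ as scratch, restoring $T$ at the end via the procedure of Appendix~\ref{app:rec}; correctness then reduces to showing that each phase admits an in-place implementation.

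For the recursion I would store $T_1$ in the top $n_1$ cells of $\SA$ and run the algorithm recursively so that it writes $\SA_1$ into the bottom $n_1$ cells; both halves fit because $n_1 \le n/2$, the reduced alphabet $[1,n_1]$ still satisfies the precondition $|\Sigma|\le n$, and nested recursive calls simply reuse the same $\SA$ region with a shrinking offset, so the workspace across the whole recursion stays $O(1)$. If the non-recursive work at a string of length $m$ is $O(m)$, the running time satisfies $W(m)=O(m)+W(m/2)=O(m)$. One routine point is the marshalling between levels --- translating indices into $T_1$ to and from the corresponding LMS positions in $T$ --- which needs only a single $O(1)$-state scan of $T$ enumerating the LMS positions in text order, performed once going down and once coming up; in particular no explicit type array is stored across the recursion.

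The two genuine difficulties are the $\Theta(|\Sigma|)$ bucket pointers and the $\Theta(n)$-bit type array that an out-of-the-box implementation would keep explicitly. For the type information I would never materialize it: LMS positions are produced on the fly by a right-to-left scan that remembers only the type of the current position, and inside the inducing passes the type needed to decide where to deposit suffix $j-1$ (after popping $j$ from $\SA$) is recovered locally --- it is fixed by the single comparison of $T[j-1]$ against $T[j]$, with ties broken by the invariant the scan maintains about which portion of each bucket has already been filled. For the bucket pointers I would interleave the pointer array into $\SA$ itself: at the moment the inducing passes begin, the only committed entries of $\SA$ are the $\le n/2$ LMS suffixes, so after compacting them there is room to park the current head/tail pointer of each active bucket inside $\SA$, using a reserved marker (an out-of-range value or the bit-complement of an index) to distinguish a parked pointer from a genuine suffix index, with size-one buckets handled as a special case since they are finalized on first insertion.

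The step I expect to be the main obstacle is exactly this last one: making the ``bucket-pointers-inside-$\SA$'' scheme coexist with the stable-distribution semantics of the two inducing passes and with the on-the-fly type recovery, and then arguing that the re-scans this forces cost only $O(n)$ in total rather than $O(n\log n)$ or $O(n|\Sigma|)$. Pinning down the invariant --- that at every instant each cell of $\SA$ is unambiguously interpretable as a final suffix index, a still-empty slot, or a temporarily parked bucket pointer, and that no placement ever clobbers a pointer that is still needed --- is where the bulk of the technical work lies; once it is in place, linear time and $O(1)$ workspace follow from the amortization above.
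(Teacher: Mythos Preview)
Your overall plan tracks the paper's: the same SA-IS recursion, the same layout of $T_1$ and $\SA_1$ inside $\SA$, the same idea of never materializing a type array and instead recovering types on the fly (the paper formalizes the tie-breaking you describe as Lemma~\ref{lem:knowtype}). The recursion-depth bookkeeping you sketch is also what the paper does.

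The gap is in your treatment of the bucket pointers. You propose to ``park the current head/tail pointer of each active bucket inside $\SA$'' in the free cells left after compacting the LMS suffixes, but you do not say how you would \emph{look up} that pointer given only the character $T[j]$. If the parked pointers live somewhere inside bucket $T[j]$, you need to know where that bucket sits in $\SA$ to read them --- which is exactly the information the pointer was supposed to supply. If instead you mean a table indexed by character value, you need $|\Sigma|$ cells, and $|\Sigma|$ can exceed $n-n_1$; moreover the inducing pass would then be writing suffix indices on top of that table. Your final paragraph correctly flags this as the crux, but as written the scheme is circular and the $O(n)$ amortization you promise cannot be established.

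The paper resolves this with a move you do not mention: since $T$ is writable in this model, it first \emph{renames} $T$ so that each $T[i]$ literally equals the index in $\SA$ of its bucket head (if $T[i]$ is L-type) or bucket tail (if S-type); this is Lemma~\ref{lem:rename}. After renaming, a single read of $T[j]$ gives the bucket boundary in $O(1)$, with no lookup table at all. The dynamic LF/RF position inside each bucket is then maintained by the ``interior counter trick'' of Section~\ref{secint:putlms}: the cell $\SA[T[j]]$ holds a special marker ($\Unique$/$\NUnique$) and the adjacent cell holds a counter, with a constant-per-bucket amount of shifting to clean up as the bucket fills. Your parked-pointer idea is close in spirit to this counter trick, but the counter trick only works \emph{because} the renamed $T$ provides $O(1)$ random access to bucket boundaries --- that is the missing ingredient.
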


Our algorithm is based on the induced sorting framework developed in \cite{ko2003space} (which is also used in several previous algorithms \cite{franceschini2007place,Puglisi2007,Nong2009,Nong2009a,Nong2011,Nong2013}).
We develop a few elementary, yet effective tricks to further reduce the space usage to constant. The proposed algorithm and the new tricks are also useful
for the read-only integer and general alphabets.

\vspace{-2mm}
\subsubsection{Read-only Integer Alphabets}
\label{sec1:roint}
\vspace{-1mm}
\begin{table}[t]
  \caption{Time and workspace of suffix sorting algorithms for (read-only) integer alphabets $\Sigma$}
  \vspace{1mm}
 \label{tab:int}
 \centering
 \begin{tabular}{ccc}
  \toprule
Time & Workspace (words) & Algorithms\\
  \midrule
        $O(n^2)$ & $O(n)$ & ~\cite{Schuermann2007} \\
	$O(n\log^2 n)$ & $O(n)$ & ~\cite{Sadakane1998} \\
  $O(n\sqrt{|\Sigma|\log (n/|\Sigma|)})$ & $O(n)$ &                                          ~\cite{Baron2005}\\
  $O(n\log n)$   & $O(n)$ & ~\cite{manber1990suffix,Larsson2007} \\
  $O(n\log\log n)$ & $O(n)$ & ~\cite{Kim2004} \\
  $O(n)$ & $O(n)$ &  ~\cite{Kim2003,karkkainen2003simple,ko2003space}\\
        $O(n\log\log |\Sigma|)$ & $O(n\log |\Sigma|/\log n)$ & ~\cite{Hon2003} \\
        $O(vn)$ & $O(n/\sqrt{v})~~v\in[1,\sqrt{n}]$ & ~\cite{Kaerkkaeinen2006} \\
        $O(n)$ & $n + n/\log n +O(1)$ & ~\cite{Nong2009} \\
        $O(n^2\log n)$ & $cn+O(1)~~ c<1$ & ~\cite{Manzini2002,Maniscalco2006} \\
        $O(\frac{1}{\epsilon}n)$ $^\star$ & $n^\epsilon +n/\log n +O(1)$
	        \footnotemark
          & ~\cite{Nong2009a,Nong2011} \\
        $O(n^2\log n)$ & $|\Sigma|+O(1)$ & ~\cite{Itoh1999} \\
		$O(n\log |\Sigma|)$ & $|\Sigma| +O(1)$ & ~\cite{Nong2007} \\
        $O(n)$ & $|\Sigma| + O(1)$ & ~\cite{Nong2013} \\
        $O(n)$ & $O(1)$ & This paper \\
  \bottomrule
 \end{tabular}
\vspace{1mm}

$T$ is read-only in all algorithms except in the fifth to last row (marked with $^\star$).
\end{table}
\footnotetext{
	Nong et al. \cite{Nong2009a,Nong2011} assumed that the word size is 32 bits and any integer can fit into one word.
	 The result listed here is under the standard assumption that a word contains $\lceil\log n\rceil$ bits.
	 It is not hard to verify that the bucket array $B$ in their algorithm requires $n^\epsilon$ words. They also need an $n$ bits array (or equivalently $n/\log n$ words).
   }

Now, we consider the more difficult case where the input
string $T$ is read-only.
This is the main contribution of this paper.
There are many existing algorithms for this case.
See Table \ref{tab:int} for an overview.
In ICALP 2007, Franceschini and Muthukrishnan~\cite{franceschini2007place}
posed an open problem for designing an in-place algorithm that takes $o(n\log n)$ time or ultimately $O(n)$ time for (read-only) integer alphabets with $|\Sigma| \leq n$ (in fact, they did not specify whether the input
string $T$ is read-only or not).
The current best result along this line is provided by
Nong~\cite{Nong2013}, which used $|\Sigma|$ words workspace (Nong's algorithm is in-place if $|\Sigma|=O(1)$,
i.e., constant alphabets).
Note that in the worst case $|\Sigma|$ can be as large as $O(n)$.

In this paper, we settle down this open problem by providing the first optimal linear time in-place algorithm,
as in the following theorem.
Note that our result is in fact slightly stronger since we allow $|\Sigma|=O(n)$ instead of $|\Sigma| \leq n$ mentioned in the open problem~\cite{franceschini2007place}.
\vspace{-1mm}
\begin{theorem}[Main Theorem]
	\label{thm:mainthm2}
	There is an in-place linear time algorithm for suffix sorting over integer alphabets, even if the input string $T$ is read-only and the size of the alphabet $|\Sigma|$ is $O(n)$.
\vspace{2mm}
\end{theorem}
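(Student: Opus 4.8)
The plan is to run the induced-sorting recursion of Ko and Aluru that already underlies Theorem~\ref{thm:mainthm1}, but to carry out every recursion level entirely inside the output array $\SA$ together with $O(1)$ registers, now without ever writing to $T$. Recall the skeleton of one level: classify every suffix as $S$-type or $L$-type and mark the LMS positions; sort the LMS-substrings by two induced-sorting sweeps; assign ranks to the sorted LMS-substrings to obtain a reduced string $T_1$ of length $n_1\le n/2$; recurse on $T_1$ to obtain $\SA_1$ (skipping the call if all ranks are distinct); and finally place the LMS-suffixes in the order given by $\SA_1$ into the tails of their first-character buckets and complete $\SA$ by one left-to-right $L$-type induce pass followed by one right-to-left $S$-type induce pass. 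Since $n_1\le n/2$, the recursion has depth $O(\log n)$ and the instance sizes decay geometrically, so it suffices to realize one level in $O(n)$ time and $O(1)$ workspace; the bound $O(n)$ for the whole algorithm then follows by summation. As a preliminary step the first-character bucketing is obtained by sorting the index array $\SA$ by the key $T[\cdot]$; since at most $n$ distinct characters occur and their values lie in a range polynomial in $n$, this is a linear-time, $O(1)$-space in-place sort, after which the bucket boundaries are exactly the positions $i$ with $T[\SA[i]]\ne T[\SA[i-1]]$.

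The read-only constraint removes the ability, exploited in the writable case, to stash auxiliary information in $T$, so anything that would be stored there must instead be regenerated by rescanning. This is affordable for the cheap structures: the $S$/$L$ type of suffix $i$ is determined by $T[i]$, $T[i+1]$ and the type of suffix $i+1$, so a single right-to-left scan reconstructs all types in $O(n)$ time with $O(1)$ state, and the LMS positions are read off from the types; we perform only a constant number of such scans per level. The delicate structure is the bucket array: with $|\Sigma|=O(n)$ an explicit array of bucket head/tail pointers costs $\Theta(n)$ words, which is not $O(1)$ workspace and need not even fit in the $n$ cells of $\SA$. We therefore never materialize it: a bucket boundary is recomputed on demand from a scan of $\SA$ (or of $T$), and during the induced-sorting sweeps only the $O(1)$ pointers that are currently active are kept, the sweeps being arranged so that the cells used for this bookkeeping are exactly the cells of $\SA$ not yet holding a finalized suffix entry.

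The recursion itself is where the memory choreography is hardest. After the LMS-sorting sweeps there are at most $n_1\le n/2$ LMS positions and at most $n_1$ distinct ranks, so compacting the ranks into a contiguous reduced string $T_1$ living in $\SA[0\ldots n_1-1]$ must be done in place, leaving $\SA[n_1\ldots n-1]$ free as the output buffer of the recursive call. When the call returns $\SA_1$ we have lost the explicit types and LMS marks, so we re-run the right-to-left type scan, recover the LMS positions, reorder the LMS-suffixes according to $\SA_1$, scatter them into the tails of their buckets, and run the two induce passes, reusing the non-materialized bucket-pointer scheme. The main obstacle is precisely this orchestration: one must prove a family of elementary but fragile invariants showing that at every moment the cells needed for bucket bookkeeping are those not yet holding finalized $\SA$ entries, that the compaction of $T_1$ and the recursive call overwrite nothing still needed, and that every structure deliberately not stored (types, LMS marks, bucket boundaries) can be regenerated by an $O(n)$ scan of which only $O(1)$ occur per level, so that the geometric size decay absorbs the rescans and the total time stays $O(n)$. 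Correctness of the underlying induced sorting is inherited from the standard framework, so only these in-place bookkeeping claims require new argument.
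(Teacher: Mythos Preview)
Your outline follows the right skeleton, but there is a genuine gap at the heart of the argument: you do not explain how to find the LF/RF-entry of an \emph{arbitrary} bucket in $O(1)$ time during the induced-sorting sweeps. When the left-to-right sweep reads $\SA[i]$ and sets $j=\SA[i]-1$, the target bucket is bucket $T[j]$, and $T[j]$ can be \emph{any} character in $\Sigma$; there is no locality to exploit, so ``only the $O(1)$ pointers that are currently active are kept'' is simply false---every bucket pointer is potentially active at every step. Your alternative, ``a bucket boundary is recomputed on demand from a scan of $\SA$ (or of $T$)'', costs $\Theta(n)$ per lookup and hence $\Theta(n^2)$ for the whole sweep. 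Even replacing the scan by a binary search in the pre-sorted $\SA$ still gives $\Theta(n\log n)$, not $O(n)$. This is exactly the difficulty the paper isolates in Section~\ref{sec:diff}, and your proposal does not overcome it.

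The paper's solution is substantially more delicate than anything in your sketch. It first arranges (by choosing LMS vs.\ LML) that at least half of $\SA$ is free, and in that free half it builds a succinct \emph{pointer data structure}: essentially a bit-vector with a constant-time $\select$ index (Lemma~\ref{lem:rointselect}), occupying only $c_p=O(n/\log n)$ words, that returns the tail of bucket $T[j]$ in $O(1)$ time. Combined with the \emph{interior counter trick} (counters and special markers stored in the not-yet-filled cells of each bucket), this gives $O(1)$-time placement for the first $n_S-c_p$ suffixes. The structure is then destroyed to make room for the remaining $c_p=O(n/\log n)$ suffixes, for which binary search at $O(\log n)$ each is now affordable. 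Your proposal contains neither the succinct $\select$ structure nor this two-stage split, and without them the linear time bound does not go through.
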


\subsubsection{Read-only General Alphabets}
Now, we consider the case where the only operations allowed on the characters of string $T$ (read-only) are comparisons.
See Table \ref{tab:gel} for an overview of the results.
In 2002, Manzini and Ferragina~\cite{Manzini2002} posed an open problem, which asked whether there exists an $O(n\log n)$ time algorithm using $o(n)$ workspace. In 2007, Franceschini and Muthukrishnan~\cite{franceschini2007place} obtained
the first in-place algorithm that runs in optimal $O(n\log n)$ time.
Their conference paper is somewhat complicated and densely-argued.

We also give an optimal in-place algorithm which achieves
the same result, as in the following theorem.
In addition, our algorithm does not make any bit operations while theirs uses bit operations heavily.
Our algorithm is also arguably simpler.

\begin{theorem}
	\label{thm:mainthm3}
	There is an in-place $O(n\log n)$ time algorithm for suffix sorting over general alphabets, even if the input string $T$ is read-only and only comparisons between characters are allowed.
\end{theorem}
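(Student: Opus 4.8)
The plan is to reduce the general-alphabet problem to the read-only integer-alphabet problem of Theorem~\ref{thm:mainthm2}, paying the extra $O(n\log n)$ factor only at the top level, where a comparison-based sort is unavoidable (it already subsumes sorting $n$ characters). The steps, in order: (i) set $\SA[i]=i$ for all $i$ and sort $\SA$ in place using $T[\SA[i]]$ as the key, by any in-place comparison sort; this costs $O(n\log n)$ character comparisons and $O(1)$ workspace, and afterwards the entries of $\SA$ are grouped into the $\le|\Sigma|$ buckets determined by the first character, with bucket boundaries detectable by a single scan comparing adjacent keys. (ii) Run the level-$0$ induced-sorting step of Ko and Aluru~\cite{ko2003space} to sort all LMS-substrings, except that every access to a ``renamed'' character is replaced by a direct comparison of the corresponding characters of $T$; using only the bucket structure of (i) this is $O(n)$ time. (iii) Name the LMS-substrings in sorted order to obtain a reduced string $T_1$ of length $m=|T_1|\le n/2$ over the integer alphabet $[1,m]$, stored inside $\SA$. (iv) Call the algorithm of Theorem~\ref{thm:mainthm2} on $T_1$; since $|\Sigma_1|\le m=O(n)$ this produces the suffix array of $T_1$ in $O(m)=O(n)$ time and $O(1)$ workspace. (v) Induce the suffix array of $T$ from the sorted LMS-suffixes by the usual two passes over $\SA$, again using the bucket structure and direct comparisons, in $O(n)$ time.

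For the time bound, observe that every step other than (i) is $O(n)$: the L/S/LMS classification of the $n$ suffixes is a single right-to-left scan whose only primitive is the comparison of $T[i]$ with $T[i+1]$; the induced-sorting passes of (ii) and (v) only ever need, for a position $k$, the bucket of $T[k]$ and comparisons among characters, both available from (i); and naming the LMS-substrings in (iii) costs time proportional to the total length of all LMS-substrings, which is $O(n)$ since consecutive LMS-substrings overlap in exactly one position. The recursion is \emph{not} on the general-alphabet algorithm --- $T_1$ is an integer string --- so a single invocation of the linear-time in-place algorithm of Theorem~\ref{thm:mainthm2} disposes of the recursive part. Hence the total running time is $O(n\log n)+O(n)=O(n\log n)$, and if one needs the first-character bucket boundaries more than once (e.g.\ before (ii) and again before (v)), simply re-sorting $\SA$ by first character regenerates them; this is done $O(1)$ times and does not change the bound.

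The crux is keeping the workspace at $O(1)$. Three points require care. First, the sort in (i) must be in place: any classical in-place comparison sort (heapsort, say) suffices since it only reads the keys $T[\SA[i]]$ and permutes the writable array $\SA$. Second, the L/S/LMS type information cannot be stored in an $n$-bit array; it must be recomputed locally from the read-only $T$ using $O(1)$ extra space, exactly as in the integer-alphabet algorithm of the preceding sections (the type of a position equals the type of the nearest position to its right where the character changes). Third, and most delicate, the reduced instance $T_1$ and its suffix array together use $\le n$ words and so must be packed and recycled inside $\SA$, interleaved with any re-sorting needed to recover bucket boundaries for (v); this bookkeeping is the main obstacle and is handled as in the construction for integer alphabets. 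Finally, locating the bucket of an opaque character $T[k]$ without a rank array can be done by binary search in the first-character-sorted $\SA$ at $O(\log n)$ per query --- still within the $O(n\log n)$ budget --- or by maintaining bucket pointers during a pass; and since \emph{all} character accesses above are comparisons, the algorithm uses no bit operations, which is the sense in which it improves on Franceschini and Muthukrishnan~\cite{franceschini2007place}.
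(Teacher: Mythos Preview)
Your reduction idea---sort once by first character, build the reduced integer instance $T_1$, then dispatch $T_1$ to the linear-time in-place integer-alphabet algorithm rather than recursing---is sound at the top level and is a genuinely different organization from the paper. The paper instead sorts all S-substrings \emph{directly} by an in-place mergesort (one level of character-wise comparisons costs $O(n)$ since the total length of S-substrings is $<2n$), builds $T_1$ from their ranks, and recurses with the \emph{same} general-alphabet algorithm, yielding $T(n)=T(n/2)+O(n\log n)=O(n\log n)$. Your route buys a non-recursive top level at the price of importing Theorem~\ref{thm:mainthm2} as a black box; the paper's route is self-contained and, incidentally, avoids bit operations throughout---note that your call to Theorem~\ref{thm:mainthm2} pulls in the $\select$-based pointer data structure, so your final ``no bit operations'' claim does not hold as stated (calling Theorem~\ref{thm:mainthm1} instead would fix this, since $T_1$ sits in writable memory).

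The real gap is in steps~(ii) and~(v). You correctly observe that a bucket \emph{head} can be located by binary search in $O(\log n)$, but induced sorting also needs, for each bucket, the \emph{current} LF/RF-entry, and that pointer moves after every insertion. Your two offered options are ``binary search in the first-character-sorted $\SA$'' and ``maintaining bucket pointers during a pass''. The first does not by itself yield the free slot inside the bucket; the second is exactly the $\Theta(|\Sigma|)$-word bucket array you are trying to eliminate. This is precisely the obstacle the paper's proof works hardest on: it plants special markers $\mathsf{B_H},\mathsf{B_T},\mathsf{R_1},\mathsf{R_2}$ and an in-bucket counter inside $\SA$ itself (the interior counter trick), arranges that at most two consecutive entries are non-index symbols so that binary search still works on the evolving $\SA$, and handles the last few insertions per bucket by a one-time linear scan. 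Without something equivalent, your steps~(ii) and~(v) are not in-place. Relatedly, your claim that step~(ii) runs in $O(n)$ time cannot be right under the comparison model with $O(1)$ workspace: each of the $\Theta(n)$ placements needs at least a bucket lookup, and with no rank array that lookup is $\Omega(\log n)$. Since you later concede the $O(\log n)$ factor, the overall bound survives, but the step-by-step accounting should be corrected.

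A smaller point: the paper dispenses with your step~(ii) altogether. Because an $O(n\log n)$ budget is available, it is simpler to sort the S-substrings (not LMS-substrings) outright by mergesort with na\"ive character comparison, using the free half of $\SA$ as scratch; no induced sorting is needed to get the ranks for $T_1$. You might consider adopting that simplification, and then concentrating the in-place machinery on the single final induced-sort pass.
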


\begin{table}[t]
  \caption{Time and workspace of suffix sorting algorithms for read-only general alphabets}
  \vspace{1mm}
 \label{tab:gel}
 \centering
 \begin{tabular}{ccc}
  \toprule
Time & Workspace(words) & Algorithms\\
  \midrule
  $O(n\log n)$   & $O(n)$ & ~\cite{manber1990suffix,Larsson2007} \\
  $O(vn+n\log n)$ & $O(v+n/\sqrt{v})~~v\in[2,n]$ & ~\cite{Burkhardt2003} \\
  $O(vn+n\log n)$ & $O(n/\sqrt{v})~~v\in[1,\sqrt{n}]$ & ~\cite{Kaerkkaeinen2006} \\
  $O(n\log n)$ & $O(1)$ & ~\cite{franceschini2007place}
\\
  $O(n\log n)$ & $O(1)$ & This paper \\
  \bottomrule
 \end{tabular}
\end{table}

\vspace{-2mm}
\subsection{Difficulties and Our Approach}
\label{sec:diff}
\subsubsection{Difficulties}
Typically, the suffix sorting algorithms are recursive algorithms. The size of the recursive (reduced) sub-problem is usually less than half of the current problem. See  e.g.,~\cite{ko2003space,Kaerkkaeinen2006,Puglisi2007,franceschini2007place,Nong2009,Nong2011,Nong2013}.
However, all previous algorithms require extra arrays, e.g., \emph{bucket array} (which needs $|\Sigma|$ words at the top recursive level and $n/2$ words at the deep recursive levels), \emph{type array} (which needs $n/\log n$ words) and/or other auxiliary arrays (which need up to $O(n)$ words), to construct the reduced problems and use the results of the reduced problems to sort the original suffixes.
\footnote{
The definitions of bucket array and type array can be found in Section~\ref{sec:pre}.
}

In particular, Nong et al.~\cite{Nong2009} made a breakthrough by providing the SA-IS algorithm which only required one bucket array (which needs $\max\{|\Sigma|,n/2\}$ words) and one type array ($n/\log n$ words).
Note that the bucket array and type array are reused for each recursive level.

Currently, the best result is provided by Nong~\cite{Nong2013}. However, Nong's algorithm  requires the bucket array at the top recursive level, but not required at the deep levels.
Hence, it needs $|\Sigma|$ words instead of $\max\{|\Sigma|,n/2\}$ words.
Note that $|\Sigma|$ can be $O(n)$ in the worst case for integer alphabets.
For the type array, Nong used this bucket array to implicitly indicate the type information.

Thus, \emph{the main technical difficulty is to remove the workspace for the bucket array at the top recursive level since there is no extra space to use}.
Note that it is non-trivial since $T$ is read-only and $\SA$ needs to store the final order of all suffixes.
Besides, the previous sorting steps or tricks may not work if one removes the bucket array.
For example, Nong~\cite{Nong2013} used the bucket array to indicate the type information.
If the bucket array is removed, then one would need the type array.

\subsubsection{Our Approach}
\vspace{-0.5mm}
We briefly describe our optimal in-place linear time suffix sorting algorithms that overcome these difficulties.
We provide an \emph{interior counter trick} which can implicitly represent the dynamic \emph{LF/RF-entry} information
(see Section~\ref{sec:pre} for the definition)
in $\SA$. Besides, we provide a \emph{pointer data structure} which can represent the bucket heads/tails in $\SA$.
Combining these two techniques, we can remove the workspace needed by the bucket array entirely.
Note that it is non-trivial for the top recursive level which is the most difficult part, since the pointer data structure needs \emph{nonconstant} workspace and we only have $O(1)$ extra workspace.
In our algorithm, we divide the sorting step into two stages to address this issue.
In order to remove the type array, we provide some useful properties and observations
which allows us to retrieve the type information efficiently.
For the general alphabets case, we provide simple sorting steps and extend the interior counter trick to obtain
an optimal in-place $O(n\log n)$ time suffix sorting algorithm.


\topic{Organization}
The remaining of the paper is organized as follows.
Section~\ref{sec:pre} covers the preliminary knowledge. In Section~\ref{sec:int}, \ref{sec:roint} and \ref{sec:gel}, we describe the framework and the details of our optimal in-place suffix sorting algorithms for the integer alphabets, read-only integer alphabets and read-only general alphabets, respectively.
Finally, we conclude in Section~\ref{sec:con}.

\vspace{-1mm}
\section{Preliminaries}
\label{sec:pre}
\vspace{-1mm}
Given a string $T=T[0\ldots n-1]$ with $n$ characters, the suffixes of $T$ are $T[i\ldots n-1]$ for all $i\in[0,n-1]$,  where $T[i\ldots j]$ denotes the substring $T[i]T[i+1]\ldots T[j]$ in $T$.
To simplify the argument, we assume that the final character $T[n-1]$ is a sentinel
which is lexicographically smaller than any other characters in $\Sigma$.
Without loss of generality, we assume that $T[n-1]=0$.
\footnote{
   Some previous papers use $\$$ to denote the sentinel. We use $0$ here since we consider the integer alphabets.
}
Any two suffixes in $T$ must be different since their lengths are different,
and their lexicographical order can be determined by comparing their characters one by one until we see a difference due to the existence of the sentinel.

\begin{definition}
\label{def:sa}
The suffix array $\SA$ contains the indices of all suffixes of $T$ which are sorted in lexicographical order, i.e., $\Suf(\SA[i])< \Suf(\SA[j])$ for all $i<j$, where \suf{i} denotes the suffix $T[i\ldots n-1]$.
\end{definition}
For example, if $T=``1220"$, then all suffixes are $\{1220,220,20,0\}$ and $\SA=[3,0,2,1]$. Note that $\SA$ always uses $n$ words no matter what the alphabets $\Sigma$ are, since it contains the permutation of $\{0,\ldots,n-1\}$, where $n$ is the length of $T$.

A suffix \suf{i} is said to be \emph{S-suffix} (S-type suffix) if \suf{i} $<$ \suf{i+1}. Otherwise, it is \emph{L-suffix} (L-type suffix)~\cite{ko2003space}.
The last suffix \suf{n-1} containing only the single character $0$ (the sentinel) is defined to be an S-suffix.
Equivalently,
the \suf{i} is S-suffix if and only if (1) $i=n-1$; or (2) $T[i]<T[i+1]$; or (3) $T[i]=T[i+1]$ and \suf{i+1} is S-suffix.
Obviously, the types can be computed by a linear scan of $T$
(from $T[n-1]$ to $T[0]$).
We further define the type of a character $T[i]$ to \emph{S-type} (or \emph{L-type} resp.) if \suf{i} is S-suffix (or L-suffix resp.).
A substring $T[i\ldots j]$ is called an {\em S-substring} if both $T[i]$ and $T[j]$ are S-type, and there is no other S-type characters between them, or $i=j=n-1$ (the single sentinel).
We can define \emph{L-substring} similarly.

\begin{example}
We use the following running example for the integer alphabets case throughout the paper. Consider a string $T[0\ldots 12]=``2113311331210"$.
\[
\begin{array}{cccccccccccccc}
\mathsf{Index} & 0 & 1 & 2 & 3 & 4 & 5 & 6 & 7 & 8 & 9 & 10& 11& 12 \\
  T    & 2 & 1 & 1 & 3 & 3 & 1 & 1 & 3 & 3 & 1 & 2 & 1 & 0 \\
\mathsf{Type}   & L & S & S & L & L & S & S & L & L & S & L & L & S
\end{array}
\]
E.g., $T[2]$ is S-type since $T[2]=1<T[3]=3$. The S-substrings are $\{11,1331,11,1331,1210,0\}$.
\QEDB
\end{example}

Obviously, the indices of all suffixes, which begin with the same character, must appear consecutively  in $\SA$. We denote a subarray in $\SA$ for these suffixes with the same beginning character as a {\em bucket}, where the \emph{head} and the \emph{tail} of a bucket refer to the first and the last index of the bucket in $\SA$ respectively.
Moreover, we define the first common character as its {\em bucket character}.
We often use the bucket character to index the bucket. For example, if the bucket character
is $T[i]$, we refer to this bucket as bucket $T[i]$.
Sometimes we say that we place suffix \suf{i} of $T$ into $\SA$, it always means that we place its corresponding index $i$ into $\SA$ since \suf{i} is a substring.

The {\em induced sorting} technique, developed by Ko and Aluru \cite{ko2003space}, is responsible for many
recent advances of suffix sorting algorithms~\cite{Puglisi2007,franceschini2007place, Nong2009,Nong2009a,Nong2011,Nong2013}, and is also crucial to us.
It can be used to
induce the lexicographical order of L-suffixes from
the sorted S-suffixes. Before introducing the induce sorting technique, we need the following useful property with respect to L-suffixes and S-suffixes (the proof simply follows from the definition of L- and S-suffix).
\begin{property}\cite{ko2003space}\label{prop:ls}
In any bucket, S-suffixes always appear after the L-suffixes in $\SA$, i.e.,  if an S-suffix and an L-suffix begin with the same character, the  L-suffix is always smaller than the S-suffix.
\end{property}

Now, we briefly introduce the standard induced sorting technique which needs the \emph{bucket array} and \emph{type array}
explicitly.
The bucket array contains $|\Sigma|$ integers and each denotes the position of a bucket head/tail in $\SA$.
The type array contains $n$ bits and each entry denotes an L/S-type information for $T$ (i.e., $0/1$ for L/S-type).

\topic{Inducing the order of L-suffixes
	from the sorted S-suffixes}
Assume that all indices of the sorted S-suffixes are already in their correct positions in
$\SA$ (i.e., in the tail of their corresponding buckets in $\SA$).
Now, we define some new notations (e.g., \emph{LF/RF-entry}) to simplify the representation.
We scan $\SA$ from left to right (i.e., from $\SA[0]$ to $\SA[n-1]$).
We maintain an \emph{LF-pointer} (leftmost free pointer) for each bucket which points to the leftmost free entry
(called the \emph{LF-entry}) of the bucket.
The LF-pointers initially point to the head of their corresponding buckets.
When we scan $\SA[i]$, let $j=\SA[i]-1$.
If $\Suf(j)$ is an L-suffix (indicated by the type array),
we place the index of $\Suf(j)$ (i.e., $j$) into the LF-entry of bucket $T[j]$,
and then let the LF-pointer of this bucket $T[j]$ point to the next free entry.
The LF-pointers are maintained in the bucket array.
If $\Suf(j)$ is an S-suffix, we do nothing (since all S-suffixes are already sorted in the correct positions). We give a running example in Appendix~\ref{app:inducedsort}.

Sorting all S-suffixes from the sorted L-suffixes is completely symmetrical:
we scan $\SA$ from right to left, maintaining an \emph{RF-pointer} (rightmost free pointer) for each bucket which points to the \emph{RF-entry} (rightmost free entry) of the bucket.

\begin{lemma}~\cite{ko2003space}\label{lem:ls}
Suppose all S-suffixes (or L-suffixes resp.) of $T$ are already sorted.
Then using induced sorting, all L-suffixes (or S-suffixes resp.) can be sorted correctly.
\end{lemma}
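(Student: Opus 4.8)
The plan is to prove Lemma~\ref{lem:ls} by showing that the induced-sorting scan places each L-suffix into its correct position in $\SA$, assuming all S-suffixes are already in their correct (bucket-tail) positions; the symmetric claim then follows by reversing the roles of L and S and scanning $\SA$ from right to left. I would proceed by induction on the scan index. First I would fix some notation: say $\Suf(j)$ is an L-suffix, so $T[j] \geq T[j+1]$, and write $c = T[j]$. By Property~\ref{prop:ls}, within bucket $c$ all L-suffixes precede all S-suffixes, so the L-suffixes of bucket $c$ occupy a prefix of that bucket; the LF-pointer, which starts at the bucket head, is exactly the boundary tracking how many L-suffixes of bucket $c$ have been placed so far. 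The key structural fact to establish is: \emph{if $\Suf(j)$ is an L-suffix and $\Suf(j)$ is the $k$-th smallest among all L-suffixes beginning with character $c$, then $j$ is written into $\SA$ at the $k$-th position of bucket $c$'s L-region, and moreover this write happens while scanning some $\SA[i]$ with $i$ strictly less than the final position of $j$.}

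The heart of the argument is an ordering claim: for two L-suffixes $\Suf(j)$ and $\Suf(j')$ with the same first character $c$, we have $\Suf(j) < \Suf(j')$ if and only if $\Suf(j+1) < \Suf(j'+1)$. This is because $\Suf(j) = c\cdot\Suf(j+1)$ and $\Suf(j') = c\cdot\Suf(j'+1)$ share the identical first character, so the comparison is decided by the tails. Combined with the observation that $\Suf(j+1)$ is placed in $\SA$ before we reach the scan position that triggers placing $j$ — indeed $j$ is placed exactly when we scan the cell holding $j+1$ — this shows that within each bucket the L-suffixes get written in increasing order of their tails, hence in increasing lexicographic order, hence into consecutive cells starting from the bucket head. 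So the LF-pointer mechanism places the $k$-th smallest L-suffix of bucket $c$ into the $k$-th L-cell, which is its correct position. One still has to check the base case and the well-foundedness of the induction: the scan begins at $\SA[0]$, which after the S-suffixes are placed contains the index $n-1$ of the sentinel suffix (or, more carefully, the smallest suffix overall), and every L-suffix $\Suf(j)$ has $\Suf(j+1)$ appearing strictly later in sorted order but located at a cell that is scanned before $j$'s own cell — this requires noting that the cell containing $j+1$ is scanned at the moment we induce $j$, and that $j+1$ itself was correctly placed (either as an already-sorted S-suffix, or inductively as an earlier-placed L-suffix).

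Concretely, the steps in order are: (1) recall from Property~\ref{prop:ls} that L-suffixes form a prefix of each bucket, so "correct position of an L-suffix" means: the relative rank of $\Suf(j)$ among all L-suffixes of bucket $T[j]$, counted from the bucket head; (2) prove the tail-comparison lemma above for same-bucket L-suffixes; (3) prove by induction along the left-to-right scan that whenever we process a cell $\SA[i]$ containing value $m$, the value $m$ was already correctly placed, and that the induced value $j = m-1$ (when $\Suf(j)$ is L-type) is written to the current LF-entry of its bucket; (4) conclude that the sequence of writes into each bucket's L-region occurs in increasing lexicographic order of the written suffixes, so each lands in its rank-correct slot; (5) observe every L-suffix $\Suf(j)$ is induced exactly once (when $j+1$'s cell is scanned) and every L-cell is filled exactly once (counting argument), so the map is a bijection and the result is the full sorted order.

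The main obstacle I expect is the inductive argument in step~(3) that the "source" value $m$ in cell $\SA[i]$ is always already correctly placed by the time we scan $\SA[i]$ — i.e., that we never induce an L-suffix from a cell whose contents have not yet been finalized. For the S-suffixes this is immediate by hypothesis, but for an L-suffix $\Suf(j)$ induced from cell holding $j+1$, one must argue the cell holding $j+1$ lies to the \emph{left} of the cell where $j$ is ultimately written, which amounts to showing $\Suf(j+1)$ is either an S-suffix or an L-suffix ranked so that its final cell precedes bucket $T[j]$'s current LF-entry. This follows from $\Suf(j) = T[j]\cdot\Suf(j+1) > \Suf(j+1)$ (since $\Suf(j)$ is a proper extension... more precisely, because any suffix is larger than the suffix obtained by deleting its first character when that first character is $\geq$ the next, which is exactly the L-type condition), so $\Suf(j+1)$ has strictly smaller rank than $\Suf(j)$ and thus its cell is scanned earlier — making the induction well-founded. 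Pinning down this monotonicity cleanly, and handling the interleaving of already-placed S-cells with freshly-induced L-cells during a single scan, is the delicate part; everything else is bookkeeping.
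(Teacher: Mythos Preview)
Your proposal is correct and follows essentially the same approach as the paper. The paper gives only a brief informal sketch after the lemma statement: it notes that same-bucket L-suffixes $\Suf(i),\Suf(j)$ are ordered by $\Suf(i+1),\Suf(j+1)$, that $\Suf(i+1)<\Suf(i)$ by the L-type definition so the inducing cell is always scanned first, and that correctness then follows by induction --- exactly the structure you lay out in steps (1)--(5), with your ``main obstacle'' being precisely the well-foundedness point the paper dispatches in one line.
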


The idea of induced sorting is that the lexicographical order between $\Suf(i)$ and $\Suf(j)$ is decided by the order of $\Suf(i+1)$ and $\Suf(j+1)$ if $\Suf(i)$ and $\Suf(j)$ are in the same bucket (i.e., $T[i]=T[j]$). We only need to specify the correct order of these L-suffixes in the same buckets since we always place the L-suffixes in their corresponding buckets. Consider two L-suffixes $\Suf(i)$ and $\Suf(j)$ in the same bucket. We have $\Suf(i+1)<\Suf(i)$ and $\Suf(j+1)<\Suf(j)$ by the definition of L-suffix. Since we scan $\SA$ from left to right, $\Suf(i+1)$ and $\Suf(j+1)$ must appear earlier than $\Suf(i)$ and $\Suf(j)$. Hence the correctness of induced sorting is not hard to prove by induction.

\topic{Inducing the order of L-suffixes from the sorted LMS-suffixes}
A suffix \suf{i} is called an {\em LMS-suffix} (leftmost S-type) if $T[i]$ is S-type and $T[i-1]$ is L-type, for $i\geq 1$.
Nong et al.~\cite{Nong2009} observed that we can sort all L-suffixes from the sorted LMS-suffixes (instead of S-suffixes) if they are stored in the tail of their corresponding buckets in $\SA$. Roughly speaking, the idea is that in the induced sorting, only LMS-suffixes are useful for sorting L-suffixes.
One difference from the standard induced sorting is that we may scan some empty entries in $\SA$. However, the empty entries can be ignored and all L-suffixes can still be sorted correctly.
We provide a running example in Appendix~\ref{app:lfromlms}.

\begin{lemma}~\cite{Nong2009}\label{lem:lmsl}
Suppose all LMS-suffixes of $T$ are already sorted and stored in the tail of their buckets. Then using induced sorting, all L-suffixes can be sorted correctly.
\end{lemma}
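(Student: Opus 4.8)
The plan is to reduce this to the standard induced-sorting argument behind Lemma~\ref{lem:ls}, isolating the single new ingredient: among all S-suffixes, only the LMS-suffixes can ever trigger the placement of an L-suffix during the left-to-right scan. First I would prove the elementary claim that if $\Suf(j)$ is an L-suffix, then $\Suf(j+1)$ is either an L-suffix or an LMS-suffix. Indeed, $\Suf(j)$ being L-type forces $j\le n-2$ and $T[j]\ge T[j+1]$; if $T[j]=T[j+1]$ then $\Suf(j+1)$ is L-type by the definition of the types, and if $T[j]>T[j+1]$ while $\Suf(j+1)$ is S-type, then $T[j]$ is L-type and $T[j+1]$ is S-type, so $\Suf(j+1)$ is by definition an LMS-suffix. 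Since the induced-sorting rule places $\Suf(j)$ exactly when the scan reaches the $\SA$-entry holding $\Suf(j+1)$, this claim says every L-suffix is triggered either by a previously induced L-suffix or by a pre-placed LMS-suffix; the S-suffixes that were \emph{not} pre-placed are never needed as triggers, so leaving their $\SA$-entries empty and harmlessly skipping them during the scan changes nothing.

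Next I would verify that the pre-placed LMS-suffixes neither collide with the induced placements nor are met in the wrong order. By Property~\ref{prop:ls}, in every bucket all L-suffixes precede all S-suffixes in the final $\SA$; since the LMS-suffixes form a subset of the S-suffixes and are stored, sorted, in the tail portion of their buckets, the prefix of the bucket that its L-suffixes must occupy is disjoint from the suffix of the bucket pre-occupied by LMS-suffixes, so the leftmost-free pointer, which fills a bucket from its head, never overwrites an LMS-entry. Moreover, reading $\SA$ left to right, the occupied LMS-entries are encountered in globally sorted order of the LMS-suffixes: across buckets this is the bucket order, and within a bucket it is the assumed sorted order at the tail.

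With these two facts, correctness follows by the same induction that proves Lemma~\ref{lem:ls}, which I would present in two halves. Completeness: given an L-suffix $\Suf(j)$, walk along the chain of successive suffixes $\Suf(j),\Suf(j+1),\Suf(j+2),\dots,\Suf(n-1)$; since $\Suf(n-1)$ is S-type this chain meets an S-type suffix, and by the claim above the \emph{first} such suffix, say $\Suf(j+m)$ with $m\ge 1$, is an LMS-suffix and hence pre-placed. Scanning its $\SA$-entry induces $\Suf(j+m-1)$, scanning that entry induces $\Suf(j+m-2)$, and so on down to $\Suf(j)$, so $\Suf(j)$ does get written, provided each of these induced placements lands at a strictly larger $\SA$-position than the entry that triggered it --- which holds because the leading character either strictly drops along $j+m-1,j+m-2,\dots$ (pushing us into a bucket further to the right) or stays equal, in which case that bucket's leftmost-free pointer has by then already advanced past the triggering entry. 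Order within a bucket: for two L-suffixes $\Suf(i)<\Suf(j)$ with $T[i]=T[j]$ we have $\Suf(i+1)<\Suf(j+1)$; by the induction hypothesis each of $\Suf(i+1),\Suf(j+1)$ already occupies its correct sorted $\SA$-position (LMS ones by assumption, L ones by induction), so $\Suf(i+1)$'s entry is scanned before $\Suf(j+1)$'s, hence $\Suf(i)$ is written into bucket $T[i]$ before $\Suf(j)$; since we always write to the leftmost free entry, $\Suf(i)$ ends up to the left of $\Suf(j)$, which is exactly their lexicographic order.

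The step I expect to be the main obstacle is making the completeness half fully rigorous, namely checking that the induced chain $\Suf(j+m)\to\Suf(j+m-1)\to\cdots\to\Suf(j)$ always progresses rightward in $\SA$. This is the one place where the L/S-type definitions and Property~\ref{prop:ls} must be combined with care --- in particular one must rule out $T[j+m-1]=T[j+m]$ at the LMS endpoint (otherwise an L-suffix would be triggered before its LMS successor, by Property~\ref{prop:ls}), and one must track the monotone advance of each leftmost-free pointer to handle equal consecutive characters along the chain. Everything else is bookkeeping essentially identical to the proof of Lemma~\ref{lem:ls}.
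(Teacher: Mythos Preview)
Your proposal is correct and follows essentially the same approach as the paper. The paper itself does not give a full proof of this lemma: it only offers the one-line intuition ``only LMS-suffixes are useful for sorting L-suffixes'' and cites \cite{Nong2009} for the result. Your argument is precisely a rigorous unpacking of that intuition --- the key claim that if $\Suf(j)$ is L-type then $\Suf(j+1)$ is either L-type or LMS is exactly what the paper's sentence encodes --- so there is no substantive difference in route, only in level of detail, and your treatment of the rightward-progression subtlety (including the observation that $T[j+m-1]\neq T[j+m]$ at the LMS endpoint) is handled correctly.
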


After we sort all L-suffixes from the sorted LMS-suffixes,
we can induce the order of all S-suffixes from the sorted L-suffixes by Lemma \ref{lem:ls}, and sort all suffixes.
Now, we introduce how to sort the LMS-suffixes.
First, we define some notations. A character $T[i]$ of $T$ is called \emph{LMS-character} if \suf{i} is LMS-suffix. A substring $T[i\ldots j]$ is called an {\em LMS-substring} if both $T[i]$ and $T[j]$ are LMS-characters, and there is no other LMS-characters between them, or $i=j=n-1$ (the single sentinel). Similarly, we can define \emph{LML-suffix} (leftmost L-type) and \emph{LML-substring}.

\topic{Sort the LMS-suffixes}
If we know the lexicographical order of all LMS-substrings, then we can use their ranks to construct the reduced problem $T_1$. Sorting the suffixes of $T_1$
is equivalent to sorting the LMS-suffixes of $T$ (see the following example).
Nong et al.~\cite{Nong2009} showed that we can use the same induced sorting step to sort all LMS-substrings from the sorted LMS-characters of $T$. We briefly sketch their idea. We refer the readers to ~\cite{Nong2009} for the details.
We define the \emph{LMS-prefix} of a suffix $\Suf(i)$ to be $T[i\ldots j]$, where $j>i$ is the smallest position in $\Suf(i)$ such that $T[j]$ is an LMS character (e.g., the LMS-prefix of $\Suf(4)$ is $``31"$).
Suppose all LMS-characters are stored in the tail of their corresponding buckets in $\SA$.
First, we sort all LMS-prefix of L-suffixes from the sorted LMS-characters,
using one scan of induced sorting from left to right (the same as induce the order of L-suffixes from LMS-suffixes).
Then we sort all LMS-prefix of S-suffixes from the sorted LMS-prefix of L-suffixes (in the same way as inducing
the order of S-suffixes from L-suffixes). After this, we have sorted all LMS-substrings since all LMS-substrings are LMS-prefix of S-suffixes by the definition of LMS-prefix. The correctness proof follows the same argument as in the standard setting.

\begin{example}
Continue the running example:
\[
\begin{array}{cccccccccccccc}
\mathsf{Index}  & 0 & 1 & 2 & 3 & 4 & 5 & 6 & 7 & 8 & 9 & 10& 11& 12 \\
  T    & 2 & 1 & 1 & 3 & 3 & 1 & 1 & 3 & 3 & 1 & 2 & 1 & 0 \\
\mathsf{Type}   & L & S & S & L & L & S & S & L & L & S & L & L & S \\
  \LMS  &   & * &   &   &   & * &   &   &   & * &   &   & *
\end{array}
\]
Note that the LMS-substrings are $\{11331,11331,1210,0\}$. Their ranks in lexicographical order are $\{1,1,2,0\}$. Thus, the reduced problem is $T_1=1120$.
The order of the suffixes of $T_1$
is the same as the order of corresponding LMS-suffixes of $T$.
The suffix array of the reduced problem $T_1$ is solved recursively.
\QEDB
\end{example}

Note that in this preliminary section, the induced sorting steps are not \emph{in-place} since they require explicit storage for the bucket and type arrays.

\section{Suffix Sorting for Integer Alphabets}
\label{sec:int}

\subsection{Framework}
To avoid confusion, we recall that an LMS-character is a single character, an LMS-substring is a substring which begins with an LMS-character and ends with an LMS-character,
and an LMS-suffix is a suffix of $T$ which begins with an LMS-character.
Our optimal in-place suffix sorting algorithm for integer alphabets
consists of the following steps.

\begin{enumerate}
  \item (Section~\ref{secint:renamet}) Rename $T$.
  \item (Section~\ref{secint:putlms}) Sort all LMS-characters of $T$.
  \item (Section~\ref{secint:sortlms}) Induced sort all LMS-substrings from the sorted LMS-characters.
  \item (Section~\ref{secint:gett1}) Construct the reduced problem $T_1$ (in which we need to sort all LMS-suffixes) from the sorted LMS-substrings.
  \item (Section~\ref{secint:sorttlms}) Sort the LMS-suffixes by solving $T_1$ recursively.
  \item (Section~\ref{secint:sortt}) Induced sort all suffixes of $T$ from the sorted LMS-suffixes.
\end{enumerate}

In a high level, the framework is similar to
several other previous algorithms based on induced sorting~\cite{ko2003space,franceschini2007place,Puglisi2007,Nong2009,Nong2009a,Nong2011,Nong2013}, and in particular to \cite{Nong2009}. Our algorithm differs in the detailed implementation of the above steps to obtain the first in-place algorithm (i.e., remove the bucket array and type array). We describe the details of the above steps
in the following sections.
We also describe how to restore $T$ (if needed) at the end of the algorithm in Appendix \ref{app:rec}.

\eat{
\begin{algorithm}[h]
\caption{Optimal in-place suffix sorting algorithm for integer alphabets}
\label{alg:1}
\begin{algorithmic}[1]
\REQUIRE $T$;
\STATE Rename $T$ to indicate the bucket heads and tails in $\SA$, using $\SA[0\ldots n-1]$ as the temporary space;
\STATE Sort all LMS-characters of $T$ and store them in the tail of their corresponding buckets in $\SA$, using the renamed $T$ and our new interior counter trick;
\STATE Induced sort all LMS-substrings from the sorted LMS-characters,  using the renamed $T$ and our interior counter trick;
\STATE Construct the reduced recursive problem $T_1$ using the ranks of the sorted LMS-substrings and store $T_1$ in $\SA[0\ldots n_1-1]$;
\STATE Let $\SA_1$ denote $\SA[n-n_1\ldots n-1]$ which is reused as the output space for problem $T_1$;
\IF{all characters in $T_1$ are unique}
\STATE Directly construct $\SA_1$ for $T_1$;
\ELSE \STATE Solve $T_1$ recursively and obtain the suffix array $\SA_1$;
\ENDIF
\STATE Sort all LMS-suffixes of $T$ from $\SA_1$ and store them in $\SA[n-n_1\ldots n-1]$, simply replacing $\SA_1$ with the indices of LMS-suffixes;
\STATE Induced sort all suffixes of $T$ from the sorted LMS-suffixes, using the renamed $T$ and our interior counter trick (this step is the same as Line 3);
\STATE Restore $T$ (if needed);
\RETURN $\SA$
\end{algorithmic}
\end{algorithm}
}

\subsection{Rename $T$}
\label{secint:renamet}
In this section, we rename each L-type character of $T$ to be the index of its bucket head and each S-type character of $T$ to be the index of its bucket tail (Nong et al.\cite{Nong2011} has a similar renaming step).
The correctness of the step is shown in the following Lemma~\ref{lem:rename}.

\begin{lemma}\label{lem:rename}
	The renaming step does not change the lexicographical order of all suffixes of $T$.
\end{lemma}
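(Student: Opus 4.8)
The plan is to show that the renaming is a monotone relabeling of characters that, crucially, does not merge any two characters that were previously distinct and does not reorder them. Concretely, the renaming map sends an original character value $c$ to either the head index or the tail index of bucket $c$ in $\SA$. The first observation I would establish is that this map, restricted to the values actually appearing in $T$, is strictly increasing in the following sense: if $c < c'$ are two character values occurring in $T$, then every position assigned to bucket $c$ (head or tail) is strictly smaller than every position assigned to bucket $c'$, because buckets are laid out in $\SA$ in increasing order of their bucket character (this is exactly how buckets were defined in Section~\ref{sec:pre}). So cross-bucket comparisons are preserved with strict inequality.

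The second, more delicate point is what happens within a single bucket: two characters $T[i]$ and $T[j]$ with $T[i]=T[j]=c$ get renamed to possibly different values — the bucket head if L-type, the bucket tail if S-type — and we need the renamed string to induce the same lexicographic order on suffixes. Here I would argue by a standard "first point of difference" argument. Take two suffixes $\Suf(i)$ and $\Suf(j)$ and let $k\ge 0$ be the first offset at which the original characters differ, i.e. $T[i+\ell]=T[j+\ell]$ for $\ell<k$ and $T[i+k]\ne T[j+k]$ (such $k$ exists because of the sentinel). By the cross-bucket monotonicity above, the renamed characters at offset $k$ still compare the same way, so it suffices to show the renamed characters at offsets $\ell<k$ are still equal. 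This is where Property~\ref{prop:ls} (within a bucket, all L-suffixes precede all S-suffixes) and the definition of the type are needed: I want to show that if $T[i+\ell]=T[j+\ell]$ for all $\ell \le k-1$ and $T[i+k]\ne T[j+k]$, then for each $\ell < k$ the characters $T[i+\ell]$ and $T[j+\ell]$ have the \emph{same type}, hence are renamed to the same value. Indeed, the type of a character is determined by scanning rightward until the first strictly different character and comparing; since $\Suf(i+\ell)$ and $\Suf(j+\ell)$ agree on characters up to the first difference point (at relative offset $k-\ell$), and that first difference is a strict inequality in the \emph{same direction} for both, the two characters $T[i+\ell]$ and $T[j+\ell]$ receive the same L/S label. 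Therefore they are renamed identically, and the first point of difference of the renamed suffixes is still at offset $k$, with the same direction of inequality.

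Putting these together: for any $i \ne j$, the comparison of $\Suf(i)$ and $\Suf(j)$ in the renamed string is decided at the same offset $k$ and in the same direction as in the original string, so the sorted order — i.e. $\SA$ — is unchanged. I would also note the edge case of the sentinel $T[n-1]=0$, which is S-type and sits in its own singleton bucket at position $0$, so it is renamed to $0$ and remains strictly smallest; and the case where one suffix is a prefix of the other cannot occur for distinct suffixes because of the sentinel, so the first-difference offset $k$ always exists.

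The main obstacle I anticipate is the within-bucket type-agreement claim — making precise that agreement of the original characters on the common prefix forces agreement of the L/S types on that prefix. The subtlety is that the type of $T[i+\ell]$ depends on characters possibly \emph{beyond} offset $k$ in general, but when $T[i+\ell]=T[i+\ell+1]=\cdots$ the type propagates from the right, and the relevant "first strictly different neighbor" for both $\Suf(i+\ell)$ and $\Suf(j+\ell)$ lies at or before the global difference point $k$, where the two agree; one has to check the boundary case $\ell = k-1$ (where the very next character is already the differing one) separately but it works out because $T[i+k]\ne T[j+k]$ are compared against the \emph{same} $T[i+k-1]=T[j+k-1]$. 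Once that is nailed down, the rest is the routine first-difference bookkeeping sketched above.
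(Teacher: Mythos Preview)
Your proposal has a genuine gap in the within-bucket type-agreement claim. You assert that if $T[i+\ell]=T[j+\ell]$ for all $\ell<k$ and $T[i+k]\ne T[j+k]$, then $T[i+\ell]$ and $T[j+\ell]$ have the same type for every $\ell<k$. This is false. Take $T=3\,2\,2\,2\,3\,2\,2\,2\,1\,0$ and compare $\Suf(1)=2\,2\,2\,3\ldots$ with $\Suf(5)=2\,2\,2\,1\ldots$: here $k=3$, but $T[1],T[2],T[3]$ are all S-type (the run of $2$'s is followed by $3>2$) while $T[5],T[6],T[7]$ are all L-type (the run is followed by $1<2$). The type of a character inside a maximal run of equal characters is determined by the first character \emph{after} the run, and that character can sit exactly at offset $k$ --- where the two strings disagree --- so the two runs can receive opposite types. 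Consequently the renamed suffixes in this example differ already at offset $0$, not at offset $k$; your claim that the first-difference offset is preserved is therefore also wrong.

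The fix is to run the first-difference argument on the \emph{renamed} string. Let $k'$ be the smallest offset with $T'[i+k']\ne T'[j+k']$. For all $\ell<k'$ we have $T'[i+\ell]=T'[j+\ell]$, which forces $T[i+\ell]=T[j+\ell]$ (distinct original characters land in disjoint buckets). At offset $k'$ there are two cases. If $T[i+k']\ne T[j+k']$, your cross-bucket monotonicity gives the same direction of inequality and you are done. If $T[i+k']=T[j+k']$ but the types differ --- say $T[i+k']$ is L-type and $T[j+k']$ is S-type --- then $T'[i+k']<T'[j+k']$ (head $<$ tail, and the bucket has at least two elements since $i+k'\ne j+k'$), while in the original string Property~\ref{prop:ls} gives $\Suf(i+k')<\Suf(j+k')$ and hence $\Suf(i)<\Suf(j)$ because the preceding characters agree. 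Either way the orders match. This is exactly the content of the paper's one-line proof, which simply invokes Property~\ref{prop:ls}; your more explicit first-difference framing is fine once this correction is made.
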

\begin{proof}
For any two suffixes, beginning with the same character,
the L-suffix is smaller than the S-suffix (Property \ref{prop:ls}).
Hence, the renaming step
does not change the relative orders of all suffixes.
\end{proof}

\begin{example}
We illustrate the renaming process in our running example.
\[
\begin{array}{cccccccccccccc}
\sfindex  & 0 & 1 & 2 & 3 & 4 & 5 & 6 & 7 & 8 & 9 & 10& 11& 12 \\
  T    & 2 & 1 & 1 & 3 & 3 & 1 & 1 & 3 & 3 & 1 & 2 & 1 & 0 \\
\type   & L & S & S & L & L & S & S & L & L & S & L & L & S \\
  \SA   &(12)&(11& 1 & 5 & 9 & 2& 6)&(10& 0)&(4 & 8 & 3 & 7) \\
\mathsf{Bucket} &(0)&(1 & 1 & 1 & 1 & 1 & 1)&(2 & 2)&(3 & 3 & 3 & 3)
\end{array}
\]
After renaming, we get $T'$ as following:
\[
\begin{array}{cccccccccccccc}
\sfindex  & 0 & 1 & 2 & 3 & 4 & 5 & 6 & 7 & 8 & 9 & 10& 11& 12 \\
  T'   & 7 & 6 & 6 & 9 & 9 & 6 & 6 & 9 & 9 & 6 & 7 & 1 & 0
\end{array}
\]
E.g., $T'[0]=7$ since $T[0]$ is L-type and the head of bucket $2$ (i.e., bucket $T[0]$) is $7$,
and $T'[1]=6$ since $T[1]$ is S-type and the tail of bucket $1$ (i.e., bucket $T[1]$) is $6$.
Note that the heads of bucket $0,1,2,3$ are $0,1,7,9,$ respectively.
\QEDB
\end{example}

Now, we describe how to implement this step using linear time and $O(1)$ workspace.
This step is very simple and similar to the counting sort (see e.g., \cite[Ch. 8]{thomas2001introduction}).
We first rename all L-type characters to be the index of its bucket head, and then
rename all S-type characters to be the index of its bucket tail.
\begin{enumerate}
  \item First we scan $T$ once to compute the number of times each character occurs in $T$ and store them in $\SA$.
      Then we perform a \emph{prefix sum computation} to determine the starting position of each character (i.e., bucket head) in $\SA$.
      Finally we scan $T$ once again to rename each character as the index of its bucket head.
  \item Now we need to let the S-type characters of $T$ to be the index of its bucket tail.
        Same as before, we scan $T$ to compute the number of times each character occurs in $T$ and store them in $\SA$.  Then, we scan $T$ once again from right to left. For each S-type $T[i]$, we let it be the index of its bucket tail (i.e., just add the number of characters belonging to this bucket to $T[i]$). Note that if we scan $T$ from right to left, for each $T[i]$, we can know its type is L-type or S-type in $O(1)$ time. There are two cases: 1) if $T[i]\neq T[i+1]$, we can know its type immediately by definition; 2) if $T[i]=T[i+1]$ then its type is the same as the type of $T[i+1]$. We only need to maintain \emph{one} boolean variable which represents the type of previous scanned character $T[i+1]$.
\end{enumerate}

\subsection{Sort all LMS-characters}
\label{secint:putlms}

Now, we sort all LMS-characters of $T$, i.e., place the indices of the LMS-characters in
the tail of their corresponding buckets in $\SA$.
Note that we do not have extra space to store
the LF/RF-pointers/counters for each bucket
to indicate the position of the free entries in the process.
For this purpose, we develop a simple trick,
called {\em interior counter trick},
which allows us to carefully use the space in $\SA$
to store the information of both the indices and the pointers.
The implementation details are described below.
In the steps, we use three special symbols which are $\Unique$, $\Empty$ and $\NUnique$.
\footnote{\label{footnote:special}
    We use at most five special symbols in this paper.
    The special symbol is only used to simplify the argument and we do not have to
    impose any additional assumption to accommodate these symbols (including the read-only cases in Section~\ref{sec:roint} and \ref{sec:gel}).
    These special symbols can be handled using an extra $O(1)$ workspace.
    We defer the details to Appendix \ref{app:special}.}
    \eat{
     Now, we explain the reasons. We distinguish two cases.

     1) If $n< 2^{\lceil\log n\rceil} -5$, we can use the largest five integers as the special symbols,
     and it is easy to see that there is no conflict with the indices of the suffixes.

     2) Otherwise, we use any five integers (say the largest five) in $(n/2,n-1]$ as the special symbols.
     Then we use ten extra variables (as the bucket array) to indicate the head/tail of these five buckets (which the five integers belong to) and their LF/RF-entries. We can obtain these five bucket heads/tails by scanning $T$ once to count how many characters are smaller/larger than these five characters, respectively.
     The reason why we need these five integers larger than $n/2$ is that there is at most one bucket in $\SA$ whose size can be larger than $n/2$.
     Since in our interior counter trick, we use a counter for each bucket to denote how many suffixes have been placed into this bucket. Thus, there is at most one counter in $\SA$ which may conflict with these five integers. In this case, we use two extra variables to indicate this bucket information as we did for the five integers (note that we can identify whether this case exists or not by scanning the string $T$ once).
	}

\topic{Step 1. Initializing $\SA$}
First we clear $\SA$ (i.e., $\SA[i]= \Empty$, for all $i\in [0,n-1]$). Then we scan $T$ from right to left. For every $T[i]$ which is an LMS-character (this can be easily decided in constant time), do the following:
        \begin{enumerate}[(1)]
           \item If $\SA[T[i]]= \Empty$, let $\SA[T[i]]= \Unique$ (meaning it is the unique LMS-character in this bucket).
           Note that after the renaming step, $T[i]$ is the
           index of its bucket tail.
           \item If $\SA[T[i]] = \Unique$, let $\SA[T[i]] = \NUnique$ (meaning the number of LMS-characters in this bucket is at least $2$).
            \item Otherwise, do nothing.
        \end{enumerate}

\topic{Step 2. Placing all indices of LMS-characters into $\SA$}
We scan $T$ from right to left. For every $T[i]$ which is an LMS-character, we distinguish the following cases:
\begin{enumerate}[(1)]
        \item $\SA[T[i]]=\Unique$:
        In this case, we let $\SA[T[i]] = i$ (i.e., $T[i]$ is the unique LMS-character in its bucket, and we just put its index into its bucket).
        \item $\SA[T[i]] = \NUnique$ and $\SA[T[i]-1]= \Empty$:
        In this case, $T[i]$ is the first (i.e. largest index, since we scan $T$ from right to left) LMS-character in its bucket.
        So if $\SA[T[i]-2]=\Empty$, we let $\SA[T[i]-2]=i$ and $\SA[T[i]-1] =1$ (i.e.,
        we use $\SA[T[i]-1]$ as the counter for the number of LMS-characters which has been added to this bucket so far).
        Otherwise, $\SA[T[i]-2]\neq \Empty$ (i.e., $\SA[T[i]-2]$ is in a different bucket, which implies that this bucket has only two LMS-characters).
        Then we let $\SA[T[i]]=i$ and $\SA[T[i]-1]$ be $\Empty$ (We do not need a counter in this case and the last LMS-character belonging to this bucket will be dealt with in the later process).
        \item $\SA[T[i]] = \NUnique$ and $\SA[T[i]-1]\neq \Empty$: In this case, $\SA[T[i]-1]$ is maintained as the counter.
        Let $c= \SA[T[i]-1]$. We check whether the position ($\SA[T[i]-c-2]$), i.e. $c+2$ positions before its tail, is $\Empty$ or not. If $\SA[T[i]-c-2]=\Empty$, let $\SA[T[i]-c-2]=i$ and increase $\SA[T[i]-1]$ by one (i.e., update the counter number).
        Otherwise $\SA[T[i]-c-2]\neq \Empty$ (i.e., reaching another bucket), we need to shift these $c$ indices to the right by two positions (i.e., move $\SA[T[i]-c-1\ldots T[i]-2]$ to $\SA[T[i]-c+1\ldots T[i]]$), and let $\SA[T[i]-c]= i$ and $\SA[T[i]-c-1] = \Empty$. After this, only one LMS-character needs to be added into this bucket in the later process.
        \item $\SA[T[i]]$ is an index:
        From case (2) and (3), we know the current $T[i]$ must be the last LMS-character in its bucket.
        So we scan $\SA$ from right to left, starting with  $\SA[T[i]]$, to find the first position $j$ such that $\SA[j] = \Empty$.
        Then we let $\SA[j] = i$. Now, we have filled the entire bucket. However, we note that not every
        bucket is fully filled as we have only processed LMS-characters so far.
\end{enumerate}
After the above Step 1 and 2, there may be still some special symbols $\NUnique$ and the counters
(because the bucket is not fully filled, so we have not shifted these indices to the right in the bucket).
We need to free these position. We scan $\SA$ once more from right to left.
If $\SA[i] = \NUnique$, we shift the indices of LMS-characters in this bucket to the right by two positions (i.e., $\SA[i-c-1\ldots i-2]$ to $\SA[i-c+1\ldots i]$)
and let $\SA[i-c-1] = \SA[i-c]= \Empty$, where $c=\SA[i-1]$ denotes the counter.

\begin{lemma}
The indices of the LMS-characters can be placed in the tail of their corresponding buckets in $\SA$ using linear time and $O(1)$ workspace.
\end{lemma}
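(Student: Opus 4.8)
The plan is to establish two things: \emph{correctness}, namely that after Step~1, Step~2 and the final cleanup scan the indices of exactly the LMS-characters occupy the last $k_b$ cells of each bucket $b$ (where $k_b$ is the number of LMS-characters whose bucket-character is $b$) and every other cell of $\SA$ equals $\Empty$; and the \emph{resource bounds}, namely that the whole procedure runs in $O(n)$ time using $O(1)$ words of workspace. The key structural fact I would use throughout is that after the renaming step (Lemma~\ref{lem:rename}) every S-type character equals the index of its bucket tail, so for an LMS-character $T[i]$ the cell $\SA[T[i]]$ is precisely the tail of the bucket into which $i$ must be placed. Step~1 is straightforward: the right-to-left scan visits $\SA[T[i]]$ once per LMS-character and applies the three-state update $\Empty\to\Unique\to\NUnique$ (thereafter idempotent), so afterwards the tail cell of bucket $b$ holds $\Unique$ if $k_b=1$, $\NUnique$ if $k_b\ge2$, and $\Empty$ if $k_b=0$, while every non-tail cell is still $\Empty$; this costs $O(n)$ time and $O(1)$ space.

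The heart of the argument is the invariant maintained by the right-to-left scan of Step~2. I would phrase it as follows: just before processing an LMS-character, every bucket $b$ that has already received at least one index is in one of two recognizable shapes --- (i) its tail still holds $\NUnique$, the cell just below the tail holds a counter $c$ equal to the number of indices received so far, and those $c$ indices occupy the $c$ cells immediately below the counter (a block that may temporarily extend one or two cells to the \emph{left} of $b$'s head, into a neighboring bucket), or (ii) a spill has already been resolved so that the received indices form a suffix of $b$'s own cells ending at its tail and no counter remains --- and in both shapes every other cell of $b$ equals $\Empty$. I would then verify case by case that (1)--(4) preserve this: case~(1) (the $\Unique$ bucket) is immediate; case~(2) either creates shape~(i) with $c=1$ or, when the cell two below the tail is already non-$\Empty$, directly creates the size-two version of shape~(ii); case~(3) either extends the block one cell further left and increments $c$, or --- upon finding the next cell non-$\Empty$, i.e.\ the block cannot grow within what is available --- shifts the $c$ indices two cells to the right, overwrites the counter cell, and deposits the new index, producing shape~(ii). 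The one quantitative point to pin down here is that such a shift is triggered exactly when the block together with the new index already covers all but at most one cell of $b$, so \emph{at most one further LMS-character of $b$ can still appear}; that remaining character is handled by case~(4), where the tail is already an index and we walk left over the $k_b-1$ indices already present to the unique $\Empty$ cell and drop $i$ there, completing $b$ inside its own cells.

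After Step~2 the only unfinished buckets are those that never spilled; by shape~(i) each still carries $\NUnique$ at its tail and a counter $c=k_b$ with its $k_b$ indices packed below, all inside $b$. The cleanup scan detects these $\NUnique$ markers right to left and, for each, shifts its $k_b$ indices two cells right (absorbing the counter and $\NUnique$ cells) and marks the two freed cells $\Empty$, which places the indices in the tail $k_b$ cells of $b$ and completes correctness. For the time bound, every cell of $\SA$ is written $O(1)$ times except via the shifts in case~(3), the leftward walk in case~(4), and the cleanup shifts; by the claim above, case~(3) shifts each bucket at most once, case~(4)'s walk spans at most $k_b$ cells, and a bucket is shifted in the cleanup at most once and only if case~(3)/(4) did not touch it, so the total extra work is $\sum_b O(k_b)=O(n)$; adding the $O(n)$ clear of $\SA$ and the $O(n)$ scans of $T$ gives $O(n)$ overall. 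For the space bound, the algorithm keeps only a constant number of word-sized scalars (a scan index, the current character, the counter value $c$ read from $\SA$, a type bit), and the special symbols $\Unique,\Empty,\NUnique$ are realized inside the value range of $\SA$ with the $O(1)$-word device of Footnote~\ref{footnote:special}; hence the workspace is $O(1)$.

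I expect the main obstacle to be exactly the interaction between the temporary left-spill of a bucket's block and the ``at most one further LMS-character remains after a spill'' claim. This claim is what lets both the case-(4) leftward scans and every shift be charged to $O(k_b)$ \emph{per bucket} rather than $O(k_b)$ per character (so that the total is $O(n)$ and not $O(n\cdot\max_b k_b)$), and it is also the place where one must argue that a spill never corrupts a neighboring bucket permanently: the next event relevant to $b$ --- processing its last LMS-character --- shifts the whole block back inside $b$'s own cells, and the eventual cleanup restores any cells that were borrowed. Getting these case-(2)/(3) spill sub-cases and their timing exactly right is the delicate part of the proof.
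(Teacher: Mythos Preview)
Your proposal is correct and follows the same line as the paper's own argument; in fact you supply considerably more detail than the paper, which gives only a three-line time bound (each LMS-character costs $O(1)$ except the last two per bucket, whose shift/scan is charged once to the bucket) and treats correctness as evident from the algorithm description. Your identification of the key quantitative fact---that a case-(3) shift can only fire at a counter value $c\ge s_b-2$, hence at most one LMS-character per bucket remains afterwards---is exactly what underlies the paper's informal ``penultimate/last'' remark, and your handling of the temporary spill and its restoration by the cleanup pass is sound.
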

\begin{proof}
We only need to show that the Step 2, i.e., placing all indices of LMS-characters into $\SA$, takes $O(n)$ time. For each scanned $T[i]$, it takes $O(1)$ time except when the $T[i]$ is the last two LMS-characters of its bucket. In this case, we need to shift the indices in this bucket when dealing with the penultimate LMS-character in its bucket, and scan the bucket when dealing with the last one. It takes $O(n)$ time since every bucket only needs to be shifted and scanned once. The space usage of this step is obvious.
\end{proof}

\begin{example}
Continue our example ($U,E$ and $M$ denote $\Unique$, $\Empty$ and $\NUnique$, respectively):

Step 1. Initializing $\SA$:
\[
\begin{array}{cccccccccccccc}
\sfindex  & 0 & 1 & 2 & 3 & 4 & 5 & 6 & 7 & 8 & 9 & 10& 11& 12 \\
  T    & 7 & 6 & 6 & 9 & 9 & 6 & 6 & 9 & 9 & 6 & 7 & 1 & 0 \\
  \LMS  &   & * &   &   &   & * &   &   &   & * &   &   & * \\
  \SA   & E & E & E & E & E & E & E & E & E & E & E & E & E
\end{array}
\]
After initialization:
\[
\begin{array}{cccccccccccccc}
\sfindex  & 0 & 1 & 2 & 3 & 4 & 5 & 6 & 7 & 8 & 9 & 10& 11& 12 \\
  \SA   &(\urb{U}) &(E)&(E& E & E & E& \urb{M}) &(E& E)&(E & E & E & E) \\
\end{array}
\]
Step 2. Placing all indices of LMS-characters into $\SA$:
\[
\begin{array}{cccccccccccccc}
\sfindex  & 0 & 1 & 2 & 3 & 4 & 5 & 6 & 7 & 8 & 9 & 10& 11& 12 \\
  \SA   &(\urb{12}) &(E)&(E& E & E & E& M) &(E& E)&(E & E & E & E)\\
  \SA   &(12) &(E)&(E& E & \urb{9} & \redb{1}& \redb{M}) &(E& E)&(E & E & E & E)\\
  \SA   &(12) &(E)&(E& \urb{5} & 9 & \redb{2}& \redb{M}) &(E& E)&(E & E & E & E)\\
  \SA   &(12) &(E)&(\urb{1}& 5 & 9 & \redb{3}& \redb{M}) &(E& E)&(E & E & E & E)\\
  \SA   &(12) &(E)&(E& E & \redb{1} & \redb{5}& \redb{9}) &(E& E)&(E & E & E & E)
\end{array}
\]
In the last row, we remove all $\NUnique$ symbols and counters.
\QEDB
\end{example}

\subsection{Induced sort all LMS-substrings from the sorted LMS-characters}
\label{secint:sortlms}

In this section, we sort all LMS-substrings from the sorted LMS-characters using induced sorting. Since all LMS-substrings are LMS-prefix of S-suffixes (Recall that the LMS-prefix of $\Suf(i)$ is $T[i\ldots j]$, where $j>i$ is the smallest position in $\Suf(i)$ such that $T[j]$ is an LMS character) and sorting the LMS-prefix of all suffixes from the sorted LMS-characters is the same as sorting all suffixes from the sorted LMS-suffixes (see the preliminary Section \ref{sec:pre}).

Now, we describe the details. We divide this step into two parts.
\begin{enumerate}[(1)]
  \item First, we sort the LMS-prefix of all suffixes from the sorted LMS-characters. Since this part is the same as sorting all suffixes from the sorted LMS-suffixes, we will describe the details in Section \ref{secint:sortt}.
  \item Then, we place the indices of all sorted LMS-substrings in $\SA[n-n_1\ldots n-1]$, where $n_1$ denotes the number of LMS-characters. Note that the number of LMS-characters, LMS-suffixes, and LMS-substrings are the same. Moreover, $n_1 \leq \frac{n}{2}$ since any two LMS-characters are not adjacent.
\end{enumerate}

The first part is described and proved in Section \ref{secint:sortt}.
Here, we only need to explain the second part how to place the indices of all sorted LMS-substrings in $\SA[n-n_1\ldots n-1]$. First, we need the following Observation \ref{ob:bucket} and Lemma \ref{lem:numbers}. Then we give a lemma to show that this step can be done in linear time using $O(1)$ workspace.

\begin{observation}\label{ob:bucket}
For any bucket in $\SA$, let $t$ be its bucket tail. Then $T[\SA[t]]$ is S-type if and only if $T[\SA[t]]< T[\SA[t]+1]$. Similarly, $T[\SA[h]]$ is L-type if and only if $T[\SA[h]]> T[\SA[h]+1]$, where $h$ is the bucket head.
\end{observation}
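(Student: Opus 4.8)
The plan is to argue directly from the definitions of S-type and L-type characters, using the fact that $\SA[t]$ (resp.\ $\SA[h]$) is a specific suffix index and that, after renaming, all suffixes in a fixed bucket begin with the same renamed character. First I would fix a bucket and let $t$ be its tail, writing $i = \SA[t]$ for the suffix index stored there. By the definition of S-type, $T[i]$ is S-type if and only if either $T[i] < T[i+1]$, or $T[i] = T[i+1]$ and $T[i+1]$ is itself S-type. So the forward direction of the first claim (the ``if'' part) is immediate: $T[i] < T[i+1]$ already forces $T[i]$ to be S-type. The real content is the converse: if $T[i]$ is S-type, then in fact $T[i] < T[i+1]$, i.e.\ the degenerate case $T[i] = T[i+1]$ cannot occur at a bucket tail.

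To rule out $T[i] = T[i+1]$ at the bucket tail, I would invoke Property \ref{prop:ls} together with the structure of buckets. Suppose for contradiction that $T[i] = T[i+1]$ and $T[i]$ is S-type. Then $\Suf(i+1)$ lies in the \emph{same} bucket as $\Suf(i)$ (they share the first character), and since $T[i] = T[i+1]$ and $T[i]$ is S-type, by definition $\Suf(i) < \Suf(i+1)$; moreover $\Suf(i+1)$ is also S-type (an S-run propagates upward). But then $\Suf(i+1)$ is an S-suffix in this bucket that is strictly larger than $\Suf(i)$, so $\Suf(i+1)$ must be stored strictly to the right of position $t$ within the bucket — contradicting that $t$ is the bucket tail (the last position of the bucket). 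Hence $T[i] \neq T[i+1]$, and combined with $T[i]$ being S-type this gives $T[i] < T[i+1]$, as desired. The symmetric statement at the bucket head $h$ follows by the mirror argument: the ``if'' part ($T[\SA[h]] > T[\SA[h]+1] \Rightarrow$ L-type) is immediate from the definition of L-type, and for the converse one uses that an L-suffix $\Suf(i)$ with $T[i] = T[i+1]$ would have $\Suf(i+1)$ an L-suffix in the same bucket strictly smaller than $\Suf(i)$, forcing it to the left of the head — impossible.

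I should double-check one boundary case: the sentinel. If $i = n-1$ then $\Suf(i)$ is the single-character suffix $T[n-1] = 0$, which sits alone in bucket $0$ as both head and tail; here ``$T[\SA[t]+1]$'' would read past the end of $T$, but this bucket is a trivial singleton and the statement is vacuous or handled by convention (the sentinel is defined to be S-type and is never an LMS-tail issue in the way the observation is used). I expect this to be a non-issue in context but would note it. The main obstacle, such as it is, is making the ``same bucket'' step rigorous — i.e.\ confirming that $T[i] = T[i+1]$ really does put $\Suf(i)$ and $\Suf(i+1)$ into one bucket after renaming — which follows because the renaming assigns every character the index of its bucket head or tail and preserves order (Lemma \ref{lem:rename}), so equal original characters of the same type map to equal renamed characters; and equal characters of opposite type still fall in the same bucket by Property \ref{prop:ls}. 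Once that is in hand, the rest is just unwinding the type definition.
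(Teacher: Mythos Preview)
Your argument is correct and is exactly the natural one: the ``if'' direction is immediate from the definition of S-type, and for the ``only if'' direction you correctly rule out $T[i]=T[i+1]$ by observing that then $\Suf(i+1)$ would lie in the same bucket and be strictly larger than $\Suf(i)$, contradicting that $i=\SA[t]$ sits at the bucket tail. The paper states this observation without proof, so there is nothing to compare against; your write-up simply fills in what the paper leaves implicit, and your handling of the sentinel boundary case and the renaming subtlety is appropriate (indeed the renaming discussion is more than is strictly needed, since the argument goes through verbatim for the original $T$ as well).
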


\begin{lemma}\label{lem:numbers}
If a bucket contains S-type characters, then one can scan this bucket once to compute the number of S-type characters in this bucket using $O(1)$ workspace.
\end{lemma}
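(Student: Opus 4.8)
The plan is to exploit Property \ref{prop:ls} and Observation \ref{ob:bucket} to locate the boundary between L-type and S-type characters inside a given bucket, and then simply count. By Property \ref{prop:ls}, within any bucket all L-suffixes come first (on the left) and all S-suffixes come after (on the right); hence a bucket consists of a (possibly empty) L-block followed by a (possibly empty) S-block, and the number of S-type characters in the bucket is exactly the length of the S-block. So it suffices to find the position of the first S-type entry in the bucket. First I would use Observation \ref{ob:bucket} at the bucket tail $t$: since the bucket is assumed to contain at least one S-type character, the entry at the tail is an S-suffix, and we can confirm $T[\SA[t]] < T[\SA[t]+1]$ in $O(1)$ time (the boundary case $\SA[t] = n-1$, the sentinel, is S-type by definition and can be checked directly).

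Next I would scan the bucket from the tail towards the head, one entry at a time, testing each entry $\SA[i]$ for its type using the same $O(1)$ criterion as in Observation \ref{ob:bucket}: $\Suf(\SA[i])$ is S-type iff $T[\SA[i]] < T[\SA[i]+1]$, and L-type iff $T[\SA[i]] > T[\SA[i]+1]$ (equality is impossible here because $\SA[i]$ and $\SA[i]+1$ would then lie in the same bucket while $\SA[i]$ is at a bucket position of its own character — more precisely, the renamed string guarantees the type is read off directly, and in any case comparing $T[\SA[i]]$ with $T[\SA[i]+1]$ resolves the type since within one bucket the first differing character decides it). We stop as soon as we encounter an L-type entry (or run off the head of the bucket). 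The number of entries scanned before stopping is precisely the number of S-type characters in the bucket. The whole procedure touches each entry of the bucket at most once, so it runs in time linear in the bucket size, and it uses only a constant number of index/counter variables, i.e. $O(1)$ workspace.

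The only mild subtlety — and the step I would treat most carefully — is justifying the $O(1)$-time type test for an arbitrary entry $\SA[i]$ inside the bucket, rather than only at the head and tail as stated in Observation \ref{ob:bucket}. This is where Property \ref{prop:ls} does the real work: because the L-suffixes of a bucket are contiguous and precede the S-suffixes, once we have identified the block structure it is enough that the test correctly distinguishes "the entry is S-type" from "the entry is L-type", and the comparison $T[\SA[i]]$ vs.\ $T[\SA[i]+1]$ does exactly that (the two characters cannot be equal with $\Suf(\SA[i]+1)$ landing outside the current bucket region in a way that breaks the test, by the definition of L/S-type chased one step). Equivalently, and more robustly, one may simply scan the bucket and for each entry determine its type by the rule: S-type iff $T[\SA[i]] < T[\SA[i]+1]$, L-type otherwise — since every entry of the bucket has the bucket character as its first character, $T[\SA[i]] = T[\SA[i]+1]$ forces $\Suf(\SA[i]+1)$ to begin with the same character, but that suffix is not in this bucket's L/S-count, and in that degenerate case the entry is whatever type its successor one-step-further is; a direct argument shows such an entry is still correctly classified as S-type by Property \ref{prop:ls} because it lies in the S-block. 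I would write this justification out explicitly, and then the counting and the complexity bounds are immediate.
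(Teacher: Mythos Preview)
Your proposal has a genuine gap in the handling of the equality case $T[\SA[i]]=T[\SA[i]+1]$. Observation~\ref{ob:bucket} holds at the bucket \emph{tail} precisely because if $\Suf(\SA[t])$ is S-type then $\Suf(\SA[t]+1)$ must land in a strictly later bucket (nothing in this bucket lies after the tail), forcing a strict inequality. At an interior entry this argument breaks down: when $T[\SA[i]]=T[\SA[i]+1]$ the type of $\SA[i]$ equals the type of $\SA[i]+1$, which may be either S or L, and a single comparison cannot decide which. Consequently your rule ``S-type iff $T[\SA[i]]<T[\SA[i]+1]$, L-type otherwise'' stops too early on S-entries with equality, while the opposite convention ``L-type iff $T[\SA[i]]>T[\SA[i]+1]$'' keeps going through L-entries with equality. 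Take $T=22122330$: bucket~$2$ in $\SA$ is $[1,0,3,4]$ with types $L,L,S,S$; here $\SA[4]=3$ is S-type with $T[3]=T[4]$ and $\SA[3]=0$ is L-type with $T[0]=T[1]$, so neither rule locates the L/S boundary correctly (one gives count $1$, the other $3$, instead of $2$). Your attempted justifications (``equality is impossible'', ``such an entry \ldots lies in the S-block'') are respectively false and circular.

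The paper avoids any per-entry type test. It scans the bucket and, whenever it meets an entry with $T[\SA[i]]<T[\SA[i]+1]$ (so $\SA[i]$ is certainly S-type), it walks \emph{backward in $T$} to the smallest $j$ with $T[j]=T[j+1]=\cdots=T[\SA[i]]$ and adds the run length $\SA[i]-j+1$ to the count; entries with $T[\SA[i]]\ge T[\SA[i]+1]$ are simply skipped. Every S-type position $p$ in the bucket lies in exactly one such maximal equal-run (the one ending at the unique $q\ge p$ with $T[q]<T[q+1]$), and these runs are pairwise disjoint, so the count is exact and the total extra work for the backward walks is bounded by the number of S-type entries in the bucket.
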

\begin{proof}
We scan this bucket from its tail to its head. For the current scanning entry $\SA[i]$,
there are two possibilities:
1). If $T[\SA[i]]\geq T[\SA[i]+1]$, do nothing;
2). Otherwise, let $j$ be the smallest index such that $T[k]=T[\SA[i]]$ for any $k \in [j,\SA[i]]$.
Then we increase $num$ by $\SA[i]-j+1$. Here variable $num$ counts the number of S-type characters in this bucket and initially is $0$.
\end{proof}

\begin{lemma}
The indices of all sorted LMS-substrings can be placed in $\SA[n-n_1\ldots n-1]$ using linear time and $O(1)$ workspace.
\end{lemma}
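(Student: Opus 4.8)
The plan is to show that after sorting the LMS-prefix of all suffixes in Part (1), the sorted LMS-substrings already sit in $\SA$ (inside the S-type positions of each bucket), so we only need to compact them into the rightmost $n_1$ slots $\SA[n-n_1\ldots n-1]$ in linear time with $O(1)$ workspace. First I would recall from the preliminaries that every LMS-substring is the LMS-prefix of an S-suffix, so after Part (1) the index of each LMS-substring occupies some S-type entry of its bucket; by Property~\ref{prop:ls}, within each bucket the S-type entries form a contiguous block at the tail. Thus the LMS-substring indices are exactly the S-type entries lying in the tail portions of the buckets, scattered across $\SA$ but in correct sorted (global) order when read left to right.

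Next I would compute, for each bucket, how many of its entries are S-type. Using Lemma~\ref{lem:numbers}, a single left-to-right (or tail-to-head) pass over a bucket determines its count of S-type characters; Observation~\ref{ob:bucket} lets us decide whether a bucket contains any S-type entry at all by testing $T[\SA[t]] < T[\SA[t]+1]$ at the bucket tail $t$. Summed over all buckets this is $O(n)$ time and $O(1)$ workspace, since the buckets partition $\SA$ and each is scanned a constant number of times. The total S-type count over all buckets is exactly $n_1$ (the number of LMS-characters/suffixes/substrings, which are equinumerous, and $n_1 \le n/2$ since no two LMS-characters are adjacent).

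Then I would do the compaction. Scan $\SA$ from right to left with a write-pointer $w$ initialized to $n-1$. For each bucket, I know its head/tail boundaries (these are recoverable from the renamed string $T'$, since L-type characters were renamed to bucket heads and S-type characters to bucket tails — or can be recomputed by the same counting-sort pass used in Section~\ref{secint:renamet}) and its S-type count $s$; the S-type entries are the last $s$ entries of that bucket. I copy those $s$ indices (in order) to $\SA[w-s+1\ldots w]$ and decrement $w$ by $s$. Care is needed so that the source and destination ranges do not clobber each other: processing buckets from right to left and writing to the right end ensures the destination window $[w-s+1,w]$ is always at or to the right of the current bucket's tail, so no not-yet-copied entry is overwritten; a short case check confirms this because the cumulative number of S-type entries to the right of (and including) the current bucket equals $n-1-w+1$, which never undershoots. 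After the pass, $\SA[0\ldots n-n_1-1]$ can be cleared (or will be overwritten when constructing $T_1$), and $\SA[n-n_1\ldots n-1]$ holds the sorted LMS-substring indices. The running time is $O(n)$ for the counting/scanning passes plus $O(n_1)$ for the copies, and the extra space is the handful of variables $w$, $s$, $num$, and the boundary markers, i.e., $O(1)$ words.

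The main obstacle I anticipate is the in-place compaction without a spare array: one must argue carefully that the right-to-left sweep never overwrites an S-type entry before it has been relocated. The key quantitative fact making this safe is that for any prefix of buckets processed from the right, the destination pointer $w$ has moved left by exactly the number of S-type entries encountered so far, and every S-type entry lies in the tail of its bucket; hence the write window is always weakly to the right of the tail of the bucket currently being read, giving the needed non-overlap. A secondary subtlety is locating bucket boundaries in $O(1)$ workspace, but this is handled exactly as in Section~\ref{secint:renamet} (a prefix-sum over character counts stored transiently, or read off from the renamed $T'$), so it costs only $O(n)$ additional time and no extra space.
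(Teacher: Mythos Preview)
Your argument contains a real gap: you conflate ``S-type entries'' with ``LMS entries''. After Part~(1), $\SA$ holds the sorted LMS-prefixes of \emph{all} suffixes, so every index $0,\dots,n-1$ appears once. In each bucket the tail block consists of all S-suffix indices (count $n_S$), not just the LMS-suffix indices (count $n_1$). In the running example the S-type indices are $\{1,2,5,6,9,12\}$ while the LMS indices are only $\{1,5,9,12\}$; your compaction would move six entries, not four. Consequently your claim that ``the total S-type count over all buckets is exactly $n_1$'' is false (it equals $n_S$), and the sweep you describe would fill more than $n_1$ slots and produce the wrong output.

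The fix is exactly what the paper does: during the right-to-left scan you must test, for each $\SA[i]$, whether $T[\SA[i]]$ is an LMS-character, i.e., S-type \emph{and} $T[\SA[i]-1]>T[\SA[i]]$. The second condition is a constant-time check once you know the type; for the first, the paper uses Observation~\ref{ob:bucket} and Lemma~\ref{lem:numbers} to determine, upon entering a new bucket from the right, how many of its trailing entries are S-type, and then only those trailing entries are candidates. Inside that S-type block you keep exactly those indices whose predecessor character is strictly larger. Your bucket-boundary and non-overlap discussion is otherwise fine, but you must apply it to the LMS-filtered stream, not to all S-type entries.
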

\begin{proof}
In Step (2), we scan $\SA$ from right to left to place the indices of all LMS-substrings at the end of $\SA$.
We only need to explain how to distinguish whether $T[\SA[i]]$ is LMS-character or not when we scanning $\SA[i]$.
Note that if we can tell if $T[\SA[i]]$ is S-type or not, we can also tell if $T[\SA[i]]$ is an LMS-character or not
since $T[\SA[i]]$ is an LMS-character if and only if $T[\SA[i]]$ is S-type and $T[\SA[i]-1]>T[\SA[i]]$.
In the scanning process, when we reach a new bucket, we can see whether this bucket contains S-type characters or not
from Observation \ref{ob:bucket}.
Furthermore, if we know the number of S-type characters in this bucket (Lemma \ref{lem:numbers}),
we are done with this step since all S-type suffixes appear after the L-suffixes in any bucket (Property \ref{prop:ls}).
This step costs $O(n)$ time overall since each character is scanned at most twice.
\end{proof}

\begin{example}
Continue our example:
\[
\begin{array}{cccccccccccccc}
\sfindex  & 0 & 1 & 2 & 3 & 4 & 5 & 6 & 7 & 8 & 9 & 10& 11& 12 \\
  \LMS  &   & * &   &   &   & * &   &   &   & * &   &   & * \\
  \SA   &(12) &(E)&(E& E & 1 & 5& 9) &(E& E)&(E & E & E & E)
\end{array}
\]
(1) Sorting the LMS-prefixes of all suffixes (see Section \ref{secint:sortt}):
\[
\begin{array}{cccccccccccccc}
 \sfindex  & 0 & 1 & 2 & 3 & 4 & 5 & 6 & 7 & 8 & 9 & 10& 11& 12 \\
   \SA  &(12)&(11)&(1&5&9&2&6)&(10& 0)&(4& 8& 3 &7)
\end{array}
\]
(2) Placing the indices of all sorted LMS-substrings in $\SA[n-n_1\ldots n-1]$:
\[
\begin{array}{cccccccccccccc}
\sfindex  & 0 & 1 & 2 & 3 & 4 & 5 & 6 & 7 & 8 & 9 & 10& 11& 12 \\
  \SA   & E & E & E & E & E & E & E & E & E & \urb{12}& \urb{1} & \urb{5} & \urb{9}
\end{array}
\]
\QEDB
\end{example}

\subsection{Construct the reduced problem $T_1$}
\label{secint:gett1}

In this section, we construct the smaller recursive problem $T_1$.
We rename the sorted LMS-substrings (obtained from the previous step) using their ranks to obtain $T_1$.
Note that this step is not difficult and similar to the previous algorithms (e.g.,~\cite{Nong2009,Nong2013}).

Now, we spell out the details for this step.
Initially, all LMS-substrings are sorted in $\SA[n-n_1\ldots n-1]$.
First, we let the rank of the smallest LMS-substring corresponding to $\SA[n-n_1]$ (i.e., the LMS-substring which begins from index $\SA[n-n_1]$) be 0 (it must be the sentinel).
Then, we scan $\SA[n-n_1+1\ldots n-1]$ from left to right to compute the rank for each LMS-substring.  When scanning $\SA[i]$, we compare the LMS-substring corresponding to $\SA[i]$ and that corresponding to $\SA[i-1]$.
If they are the same,  $\SA[i]$ gets the same rank as $\SA[i-1]$.
Otherwise, the rank of $\SA[i]$ is the rank of $\SA[i-1]$ plus 1.
Since we have no extra space, we need to store the ranks in $\SA$ as well.
In particular, the rank of $\SA[i]$ is stored in $\SA[\lfloor\frac{\SA[i]}{2}\rfloor]$.
There is no conflict since any two LMS-characters are not adjacent.
Finally, we shift nonempty entries in $\SA[0\ldots n-n_1-1]$ to the head of $\SA$, so that the ranks occupy a
consecutive segment of the space.
Now, we have obtained the reduced problem $T_1$ which is stored in $\SA[0\ldots n_1-1]$.
In other words, $\SA[i]$ ($i\in [0,n_1-1]$)
stores the new name of the $i$-th LMS-substring with respect to its appearance in the input string $T$.

\begin{example}
Continue our example:
\[
\begin{array}{cccccccccccccc}
\sfindex  & 0 & 1 & 2 & 3 & 4 & 5 & 6 & 7 & 8 & 9 & 10& 11& 12 \\
  T    & 7 & 6 & 6 & 9 & 9 & 6 & 6 & 9 & 9 & 6 & 7 & 1 & 0 \\
  \LMS  &   & * &   &   &   & * &   &   &   & * &   &   & * \\
  \SA   & E & E & E & E & E & E & E & E & E & 12& 1 & 5 & 9 \\
\end{array}
\]
After scanning $\SA[n-n_1\ldots n-1]$ (which stores the sorted LMS-substrings):
\[
\begin{array}{cccccccccccccc}
\sfindex  & 0 & 1 & 2 & 3 & 4 & 5 & 6 & 7 & 8 & 9 & 10& 11& 12 \\
  \SA   & \urb{1} & E & \urb{1} & E & \urb{2} & E & \urb{0} & E & E & 12& 1 & 5 & 9 \\
\end{array}
\]
Finally, we get $T_1$ stored in $\SA[0\ldots n_1-1]$ by shifting nonempty items in $\SA[0\ldots n-n_1-1]$ to the head of $\SA$.
\[
\begin{array}{cccccccccccccc}
\sfindex  & 0 & 1 & 2 & 3 & 4 & 5 & 6 & 7 & 8 & 9 & 10& 11& 12 \\
  \SA   & \urb{1} & \urb{1} & \urb{2} & \urb{0} & E & E & E & E & E & 12& 1 & 5 & 9 \\
\end{array}
\]
Note that $T_1 = ``1120"$ which corresponds to the LMS-substrings $\{``66996", ``66996", ``6710", ``0"\}$.
\QEDB
\end{example}

First, we give an observation which helps us to identify the S-type and L-type characters of $T$. Then we obtain the following lemma
which implies that $T_1$ can be obtained in linear time.

\begin{observation}\label{ob:nexts}
For any index $i$ of $T$, let $j\in[i+1,n-1]$ be the smallest index such that $T[j]<T[j+1]$
(So $T[j]$ is S-type).
Furthermore let $k\in[i+1,j]$ be the smallest index such that $T[l]=T[j]$ for any $k\leq l\leq j$.
Then $T[k]$ is the first S-type character after index $i$.  Moreover, all characters between $T[i]$ and $T[k]$ are L-type, and characters between $T[k]$ and $T[j]$ are S-type.
\end{observation}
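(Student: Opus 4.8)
The plan is to verify the three assertions—first that $T[k]$ is S-type, then that everything strictly between index $i$ and $k$ is L-type, and finally that everything between $k$ and $j$ is S-type—by unwinding the recursive characterization of S/L-type given right after Definition \ref{def:sa}, namely that $\Suf(m)$ is S-type iff $m=n-1$, or $T[m]<T[m+1]$, or ($T[m]=T[m+1]$ and $\Suf(m+1)$ is S-type).

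First I would handle the stretch from $k$ to $j$. By the choice of $k$ we have $T[k]=T[k+1]=\cdots=T[j]$, and by the choice of $j$ we have $T[j]<T[j+1]$, so $T[j]$ is S-type directly from clause (2) of the definition. Then a backward induction on $l$ from $j$ down to $k$ shows each $T[l]$ is S-type: if $T[l]$ with $k\le l<j$, then $T[l]=T[l+1]$ and $T[l+1]$ is S-type by the inductive hypothesis, so clause (3) applies. This proves $T[k],T[k+1],\dots,T[j]$ are all S-type, and in particular $T[k]$ is S-type.

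Next I would show every character with index strictly between $i$ and $k$ is L-type, which simultaneously shows $T[k]$ is the \emph{first} S-type character after index $i$. Consider any index $m$ with $i<m<k$. I claim $T[m]\ge T[m+1]$; combined with the fact (to be established by downward induction from $m=k-1$) that $T[m+1]$ is L-type whenever $m+1<k$, and noting $T[k]$ being S-type does not help since $T[k-1]>T[k]$ would be needed, this forces $T[m]$ L-type. The key point is the minimality of $j$: for every index $m$ with $i<m<j$ we have $T[m]\ge T[m+1]$ (otherwise $m$ would be a smaller index than $j$ with $T[m]<T[m+1]$, contradicting the choice of $j$). In particular this holds for $i<m<k\le j$. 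Now I do a downward induction on $m$ from $k-1$ to $i+1$: for $m=k-1$, either $T[k-1]>T[k]$, in which case $T[k-1]$ is L-type outright, or $T[k-1]=T[k]$, which is impossible by the minimality of $k$ (it would put $k-1$ into the constant run ending at $j$). Hence $T[k-1]>T[k]$ and $T[k-1]$ is L-type. For general $m<k-1$: we have $T[m]\ge T[m+1]$ from minimality of $j$; if $T[m]>T[m+1]$ then $T[m]$ is L-type; if $T[m]=T[m+1]$ then $T[m]$ has the same type as $T[m+1]$, which is L-type by the inductive hypothesis. Either way $T[m]$ is L-type. This establishes that all characters with index in $(i,k)$ are L-type, so indeed $T[k]$ is the first S-type character after index $i$.

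The main obstacle—though a mild one—is being careful that the constant run of value $T[j]$ is exactly $[k,j]$ and does not silently extend past $k$ to the left: this is precisely what the minimality of $k$ buys us, and it is what lets the boundary case $m=k-1$ of the L-type induction go through (we need a \emph{strict} drop $T[k-1]>T[k]$ there, not just $\ge$). Once that boundary is pinned down, the two inductions are routine applications of the recursive type definition, and the observation follows.
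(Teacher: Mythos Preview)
Your proof is correct. The paper states this as an Observation without proof, treating it as immediate from the recursive definition of L/S-type; your argument spells out exactly the two backward inductions (from $j$ down to $k$ for S-type, from $k-1$ down to $i+1$ for L-type) that make this rigorous, and correctly pins down that the strict drop $T[k-1]>T[k]$ at the boundary follows from combining the minimality of $j$ (giving $T[k-1]\ge T[k]$) with the minimality of $k$ (ruling out equality).
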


\begin{lemma}\label{lem:ct1}
$T_1$ can be obtained using $O(n)$ time and $O(1)$ workspace.
\end{lemma}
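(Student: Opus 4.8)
The plan is to show that the construction of $T_1$ described just above the statement runs in linear time with only $O(1)$ extra words, by arguing that each of its three sub-steps does so. The three sub-steps are: (i) scanning $\SA[n-n_1\ldots n-1]$ left to right and, for consecutive entries, comparing the two LMS-substrings they point to in order to decide whether the rank increments; (ii) writing the computed rank of $\SA[i]$ into $\SA[\lfloor \SA[i]/2\rfloor]$, which is conflict-free because no two LMS-characters are adjacent, so the map $i\mapsto\lfloor i/2\rfloor$ is injective on LMS-positions and stays inside $\SA[0\ldots n-n_1-1]$; and (iii) a final left-to-right compaction that pushes the at-most-$n_1$ nonempty entries of $\SA[0\ldots n-n_1-1]$ to the front, producing $T_1$ in $\SA[0\ldots n_1-1]$. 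Steps (ii) and (iii) are clearly $O(n)$ time and $O(1)$ workspace, so the whole burden is step (i), and in particular the cost of all the LMS-substring comparisons.

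First I would bound the comparison cost. Two LMS-substrings corresponding to $\SA[i-1]$ and $\SA[i]$ are compared character by character; naively this could be expensive, but the total length of all LMS-substrings is $O(n)$ (each character of $T$ lies in at most two LMS-substrings, since consecutive LMS-substrings overlap in exactly one character), and when two adjacent sorted LMS-substrings differ we can stop at the first mismatch, while when they are equal we pay their common length but that length is "charged" to exactly the LMS-substring that will not be re-examined against any later entry in a way that double-counts — more precisely, the standard argument (as in \cite{Nong2009}) is that the sum over all $i$ of the number of characters examined in the $i$-th comparison is $O(\sum_{\text{LMS-substrings}} \text{length}) = O(n)$, because once two consecutive entries are found equal the scan of that block of identical substrings contributes their length only once. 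I would spell this out as: the comparison between $\SA[i-1]$ and $\SA[i]$ reads at most $\min(|s_{i-1}|,|s_i|)+1$ characters where $s_i$ is the LMS-substring at $\SA[i]$, and summing the $+1$ gives $O(n_1)=O(n)$ while summing $\min(|s_{i-1}|,|s_i|)$ over a maximal run of equal substrings telescopes to essentially one copy of that substring's length, hence $O(\sum |s_i|)=O(n)$ overall.

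Second, I would handle the detail that to compare two LMS-substrings we must be able to walk along $T$ and recognize where the current LMS-substring ends, i.e., detect the next LMS-character after a given position. Here I would invoke Observation~\ref{ob:nexts}, which (run from the current index) identifies the first S-type character after any index $i$ together with the L/S-type of every character in between, all by a forward scan of $T$ whose cost is proportional to the distance traversed; this is exactly the distance we are already paying for in the character comparison, so it does not change the asymptotics, and it needs only one boolean variable to remember the previously scanned character's type. Combining this with the charging argument above gives $O(n)$ total time, and since throughout we only ever keep a constant number of indices, counters, and the current rank value in registers (all computations happen in place within $\SA$ and $T$), the workspace is $O(1)$.

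The main obstacle, and the part deserving the most care in the writeup, is the amortized-time bound for the LMS-substring comparisons in step (i): one must argue convincingly that the repeated character-by-character comparisons sum to $O(n)$ rather than, say, $O(n_1 \cdot \max|s_i|)$. Everything else — injectivity of $i\mapsto\lfloor i/2\rfloor$ on LMS-positions, the compaction, constant workspace — is routine. Once the amortization is in place, the lemma follows, and I would close by noting that this mirrors the reduced-problem construction of \cite{Nong2009,Nong2013} with the only new point being that all scratch storage is confined to $\SA$ and $T$ themselves.
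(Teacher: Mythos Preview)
Your approach matches the paper's: both reduce the time bound to the comparison sub-step and charge that cost to the total length of all LMS-substrings (which is below $2n$), invoking Observation~\ref{ob:nexts} to locate LMS-substring endpoints on the fly with only $O(1)$ state. The workspace argument and the handling of sub-steps (ii) and (iii) are also the same.

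One remark: your ``telescoping'' claim is incorrect as stated. Over a maximal run of $k$ equal LMS-substrings of length $\ell$, the sum $\sum \min(|s_{i-1}|,|s_i|)$ is $(k-1)\ell$, not essentially $\ell$. Fortunately you do not need that claim at all: the trivial inequality $\sum_i \min(|s_{i-1}|,|s_i|)\le \sum_i |s_i|<2n$ already gives the $O(n)$ bound, which is exactly the paper's argument (each LMS-substring participates in at most two comparisons, and each character of $T$ lies in at most two LMS-substrings). Drop the telescoping sentence and your proof is clean.
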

\begin{proof}
For the workspace part, it is obvious since we do not use any extra space beyond $\SA$ in the above step.
For the time part, we only need to explain the running time of the comparison process.
When we compare $\SA[i]$ and $\SA[i-1]$, we can know the length of these two LMS-substrings (indicated by $\SA[i]$ and $\SA[i-1]$) from the Observation \ref{ob:nexts}. Note that each character of $T$ is scanned at most twice
since it is only scanned when identifying its length and its adjacent predecessor LMS-substring.
Thus the comparison process takes $O(n)$ time because the total length of all LMS-substrings is less than $2n$.
\end{proof}

\subsection{Sort the LMS-suffixes by solving $T_1$ recursively}
\label{secint:sorttlms}

In this section, we sort all LMS-suffixes and place their indices in the tail of their corresponding buckets in $\SA$.
This can be done as follows:

\begin{enumerate}
\item We first solve $T_1$ recursively.
From Section \ref{secint:gett1}, $T_1$ is stored in $\SA[0\ldots n_1-1]$.
We define $\SA_1$ to be $\SA[n-n_1\ldots n-1]$ and use $\SA_1$ to store the output of the subproblem $T_1$.

\item Now, we put all indices of LMS-suffixes in $\SA$. First we move $\SA_1$ to $\SA[0\ldots n_1-1]$ (i.e., move $\SA[n-n_1\ldots n-1]$ to $\SA[0\ldots n_1-1]$). Then we scan $T$ from right to left. For every LMS-character $T[i]$, place $i$ (i.e., index of $\Suf(i)$) in the tail of $\SA$.

\item For notational convenience, we define
$\LMS[0\ldots n_1]\triangleq\SA[n-n_1\ldots n-1]$.
Now, we obtain the sorted order of all LMS-suffixes
of the original string $T$ by
letting $\SA[i] = \LMS[\SA[i]]$ for all $i\in[0,n_1-1]$.

\item Finally, we scan $\SA[0\ldots n_1-1]$ once more from right to left, and move the indices of LMS-suffixes in the same bucket to the tail of its bucket and clear other entries. This is easy to do since each S-type $T[i]$ (after the renaming step in Section \ref{secint:renamet}) has pointed to the tail of its bucket.
\end{enumerate}

\begin{example}
Continue our example:
\[
\begin{array}{cccccccccccccc}
\sfindex  & 0 & 1 & 2 & 3 & 4 & 5 & 6 & 7 & 8 & 9 & 10& 11& 12 \\
  \SA   & 1 & 1 & 2 & 0 & E & E & E & E & E & E & E & E & E
\end{array}
\]

Step 1.
Solve $T_1$ recursively:
\[
\begin{array}{cccccccccccccc}
\sfindex  & 0 & 1 & 2 & 3 & 4 & 5 & 6 & 7 & 8 & 9 & 10& 11& 12 \\
  \SA   & 1 & 1 & 2 & 0 & E & E & E & E & E &\urb{3}&\urb{0}&\urb{1}&\urb{2}
\end{array}
\]

Step 2.
After move $\SA_1$ to $\SA[0\ldots n_1-1]$ and put all LMS-suffixes in $\SA$:
\[
\begin{array}{cccccccccccccc}
\sfindex  & 0 & 1 & 2 & 3 & 4 & 5 & 6 & 7 & 8 & 9 & 10& 11& 12 \\
  \SA   &\urb{3}&\urb{0}&\urb{1}&\urb{2}& E& E& E&E& E&\urb{1} & \urb{5} & \urb{9} & \urb{12}
\end{array}
\]

Step 3.
Get all sorted LMS-suffixes:
\[
\begin{array}{cccccccccccccc}
\sfindex  & 0 & 1 & 2 & 3 & 4 & 5 & 6 & 7 & 8 & 9 & 10& 11& 12 \\
  \SA   &\urb{12}&\urb{1}&\urb{5}& \urb{9}& E& E& E&E& E&1 & 5 & 9 & 12
\end{array}
\]

Step 4.
Move the indices of LMS-suffixes in the
same bucket to the tail of its bucket and clear other entries:
\[
\begin{array}{cccccccccccccc}
\sfindex  & 0 & 1 & 2 & 3 & 4 & 5 & 6 & 7 & 8 & 9 & 10& 11& 12 \\
  \SA   &(\urb{12})&(E)&(E  & E & \urb{1}& \urb{5}& \urb{9})&(E& E)&(E & E & E & E)
\end{array}
\]
\QEDB
\end{example}

\begin{lemma}\label{lem:sortt1}
All LMS-suffixes can be sorted by solving the reduced problem $T_1$ recursively and placed in the tail of their corresponding buckets in $\SA$ using $O(n)$ time and $O(1)$ workspace.
\end{lemma}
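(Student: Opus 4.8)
The plan is to check that the four-step procedure described just above does what the lemma claims; since the procedure is already spelled out, the work is to verify correctness and the $O(n)$ time / $O(1)$ workspace bounds at each step and to pinpoint where the delicate points are.

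First I would argue that solving $T_1$ recursively suffices. By the construction in Section~\ref{secint:gett1}, the $k$-th character (in text order) of $T_1$ is the rank of the $k$-th LMS-substring of $T$, and by the equivalence recalled in the preliminaries, two suffixes of $T_1$ compare in the same order as the two LMS-suffixes of $T$ that start at the corresponding LMS-characters; hence $\SA_1$ lists the LMS-suffixes of $T$ in sorted order, indexed by their left-to-right appearance. I would then note that the recursive call is a legitimate integer-alphabet instance: $T_1$ has length $n_1\le n/2$ (no two LMS-characters are adjacent), its characters are ranks in $\{0,\ldots,n_1-1\}$ so $|\Sigma_1|\le n_1$, and its last character is $0$; moreover the output region $\SA_1=\SA[n-n_1\ldots n-1]$ is disjoint from the region $\SA[0\ldots n_1-1]$ that holds $T_1$ (again because $n_1\le n/2$), so $\SA$ is reused in place and $T_1$ need not be restored afterwards. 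The base case, when the characters of $T_1$ are all distinct, is immediate, since then the first characters already determine the order and $\SA_1$ is filled by a single scan.

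Next I would check Steps~2--4. After Step~2 the block $\SA[n-n_1\ldots n-1]$, which Step~3 treats as $\LMS$, stores the text positions of the LMS-characters in left-to-right order, i.e. exactly the positions that $T_1$ indexes, while $\SA[0\ldots n_1-1]$ stores the permutation $\SA_1$; so $\LMS[\SA_1[i]]$ is the text position of the $i$-th smallest LMS-suffix, and Step~3's assignment $\SA[i]\leftarrow\LMS[\SA[i]]$ is conflict-free because it reads only from $[n-n_1,n-1]$ and writes only to $[0,n_1-1]$, and $n_1-1<n-n_1$. After Step~3, $\SA[0\ldots n_1-1]$ is the sorted list of LMS-suffix positions, with those sharing a first character contiguous (Property~\ref{prop:ls}). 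For Step~4 I would clear $\SA[n_1\ldots n-1]$ and then scan $\SA[0\ldots n_1-1]$ from right to left, shifting each such run into the tail of its bucket; the bucket tail is read directly off the string, since Section~\ref{secint:renamet} renamed every S-type character to its bucket-tail index, so no bucket array is needed. The point that makes the shift safe in place is that a run ending at temporary position $b$ in a bucket with tail $t$ satisfies $b\le t$, because the LMS-suffixes in that bucket together with all smaller buckets number at most $t+1$; hence every element moves weakly rightward, so the run is shifted right-to-left in time proportional to its length and the vacated cells are cleared. Summing over buckets, Steps~2--4 use a constant number of linear scans and $O(1)$ variables, and the recursion has geometrically decreasing sizes ($n_1\le n/2$), so its total cost telescopes to $O(n)$ and (modulo the stack point below) its workspace stays $O(1)$.

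The hard part will be Step~4: performing the in-place relocation of the sorted LMS-suffix runs into the bucket tails with only $O(1)$ scratch space and no bucket array. This hinges on the monotonicity ``destination $\ge$ current position'', which I obtain from the renaming step together with Property~\ref{prop:ls}; I also need to be careful that the recursive call's reuse of $\SA$ (disjointness of the two halves) and its $O(\log n)$-deep call stack can be confined to $O(1)$ auxiliary words, e.g.\ by recomputing the few per-level parameters ($n$, $n_1$) on return rather than storing them.
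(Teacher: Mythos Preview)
Your proposal is correct and its core correctness argument---that the order of suffixes of $T_1$ coincides with the order of LMS-suffixes of $T$ because each character of $T_1$ is the rank of the corresponding LMS-substring---is exactly the paper's (one-sentence) proof. You go well beyond the paper by verifying the in-place relocation in Step~4 via the monotonicity $b\le t$, the disjointness of the two halves of $\SA$, and the recursion-stack issue; the paper leaves these implicit here and only addresses the stack-depth point later in the proof of Theorem~\ref{thm:int}, so your treatment is more careful but not different in spirit.
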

\begin{proof}
The time and space used in this step are easy to verify.
We only show the correctness of this step.
Each character of $T_1$ corresponds to an LMS-substring of $T$ and this character is the rank of the corresponding sorted LMS-substring. Hence, the lexicographical order of LMS-suffixes of $T$ is the same as the order of suffixes in $T_1$.
\end{proof}

\subsection{Induced sort all suffixes of $T$}
\label{secint:sortt}

Now, we sort all suffixes of $T$ from the sorted
LMS-suffixes using induced sorting with our interior counter trick (Note that this step is the same as what we did
in Section \ref{secint:sortlms}. Now, we describe the details here). First, we induced-sort the order of
all L-suffixes from LMS-suffixes. Then we induce the order of S-suffixes from the L-suffixes.
Now, we show how to carry out these steps with the desired optimal time and space.

\topic{Step 1. Induced sort all L-suffixes from the sorted LMS-suffixes}
Some details of this step is similar to our previous step in Section \ref{secint:putlms} where we introduced our interior counter trick. We divide this step into two parts as follows:
\begin{enumerate}[(1)]
      \item First initialize $\SA$:
      We scan $T$ from right to left. For every $T[i]$ which is L-type, do the following:
             \begin{enumerate}[(i)]
                  \item If $\SA[T[i]]= \Empty$, let $\SA[T[i]]= \Unique$ (unique L-type character in this bucket).
                  \item If $\SA[T[i]] = \Unique$, let $\SA[T[i]] = \NUnique$ (the number of L-type characters in this bucket is at least $2$).
                  \item Otherwise do nothing.
             \end{enumerate}
      \item Then we scan $\SA$ from left to right to sort all the L-suffixes.
          \begin{enumerate}[(i)]
            \item If $\SA[i]=\Empty$, do nothing.
            \item If $\SA[i]$ is an index, we let $j = \SA[i]-1$. Then, if \suf{j} is L-suffix (this can be identified in constant time from the following Lemma \ref{lem:knowtype}),
          we place \suf{j} into the LF-entry (recall that LF-entry denotes the leftmost free entry in its bucket) of its bucket and increase the counter by one.
         \item If $\SA[i]=\NUnique$, which means $\SA[i]$ is the head of its bucket, and this bucket has at least two L-suffixes which are not sorted, we use $\SA[i]$ and $\SA[i+1]$ as the bucket head (the symbol $\NUnique$) and
         the counter of this bucket, respectively. Then we skip these two entries and continue to scan $\SA[i+2]$.
          \end{enumerate}
    \end{enumerate}
Now, all L-suffixes have been sorted. Note that we still need to scan $\SA$ once more to free these positions occupied by $\NUnique$ and counters. After this, the indices of all L-suffixes are in their final positions in $\SA$.

\topic{Step 2. Remove LMS-Suffixes from $\SA$}
We can use a trick similar to the previous Step 2 in Section \ref{secint:putlms},
i.e., placing the indices of LMS-characters into $\SA$.
The difference is that instead of placing the actual LMS-characters, we place the $\Empty$ symbol instead.
Also note that we do not delete the sentinel since it must be in the final position.
Now, $\SA$ contains only all L-suffixes and the sentinel, and all of them are in their final positions in $\SA$.

\topic{Step 3. Induced sort all S-suffixes from the sorted L-suffixes}
Now, this step is completely symmetrical to the above Step 1 (Sort all L-suffixes using induced sorting). We use S-type and RF-entry instead of L-type and LF-entry, and we do not repeat the details here.

\begin{example}
Continue our example:
\[
\begin{array}{cccccccccccccc}
\sfindex  & 0 & 1 & 2 & 3 & 4 & 5 & 6 & 7 & 8 & 9 & 10& 11& 12 \\
T    & 7 & 6 & 6 & 9 & 9 & 6 & 6 & 9 & 9 & 6 & 7 & 1 & 0 \\
\type   & L & S & S & L & L & S & S & L & L & S & L & L & S \\
\SA   &(12)&(E)&(E & E & 1 & 5 &9)&(E& E)&(E & E & E & E)
\end{array}
\]
Step 1. Induced-sort all L-suffixes from the sorted LMS-suffixes:\\
(1) After initialization:
\[
\begin{array}{cccccccccccccc}
\sfindex  & 0 & 1 & 2 & 3 & 4 & 5 & 6 & 7 & 8 & 9 & 10& 11& 12 \\
\SA   &(12)&(\urb{U})&(E & E & 1 & 5 &9)&(\urb{M}& E)&(\urb{M} & E & E & E)
\end{array}
 \]
(2) Scan $\SA$ from left to right to sort all L-suffixes:
\[
\begin{array}{cccccccccccccc}
\sfindex  & 0 & 1 & 2 & 3 & 4 & 5 & 6 & 7 & 8 & 9 & 10& 11& 12 \\
 \SA   &(\vrb{12})&(\urb{11})&(E & E & 1 & 5 &9)&(M& E)&(M & E & E & E)\\
  \SA   &(12)&(\vrb{11})&(E & E & 1 & 5 &9)&(\urb{10}& E)&(M & E & E & E) \\
  \SA   &(12)&(11)&(E & E & \vrb{1} & 5 &9)&(10& \urb{0})&(M & E & E & E)\\
  \SA   &(12)&(11)&(E & E & 1& \vrb{5} &9)&(10& 0)&(\urb{M}&\urb{1}&\urb{4} & E)\\
  \SA   &(12)&(11)&(E & E & 1& 5 &\vrb{9})&(10& 0)&(\urb{M} & \urb{2} & 4 & \urb{8})\\
  \SA   &(12)&(11)&(E & E & 1& 5 &9)&(10& 0)&(M & 2 & \vrb{4} & 8)\\
  \SA   &(12)&(11)&(E & E & 1& 5 &9)&(10& 0)&(\vrb{4}& \urb{8} & \urb{3}& \urb{E})\\
  \SA   &(12)&(11)&(E & E & 1& 5 &9)&(10& 0)&(4&\vrb{8}&3&\urb{7})\\
\end{array}
\]
\qquad Note that the third to last line is the case (iii). So we skip these two entries (i.e., `$M$' and `$2$').\\
Step 2. Remove LMS-Suffixes from $\SA$:
\[
\begin{array}{cccccccccccccc}
\sfindex  & 0 & 1 & 2 & 3 & 4 & 5 & 6 & 7 & 8 & 9 & 10& 11& 12 \\
\SA &(12)&(11)&(E&E&\urb{E}&\urb{E}&\urb{E})&(10& 0)&(4& 8& 3 &7)
\end{array}
\]
Step 3. Induced-sort all S-suffixes from the sorted L-suffixes:\\
(1) After initialization:
\[
\begin{array}{cccccccccccccc}
\sfindex  & 0 & 1 & 2 & 3 & 4 & 5 & 6 & 7 & 8 & 9 & 10& 11& 12 \\
\SA &(12)&(11)&(E&E&E&E&\urb{M})&(10& 0)&(4& 8& 3 &7)
\end{array}
\]
(2) Scan $\SA$ from right to left to sort all S-suffixes:
\[
\begin{array}{cccccccccccccc}
\sfindex  & 0 & 1 & 2 & 3 & 4 & 5 & 6 & 7 & 8 & 9 & 10& 11& 12 \\
   \SA  &(12)&(11)&(E& E& \urb{6}&\urb{1}&\urb{M})&(10& 0)&(4& 8& 3 &\vlrb{7}) \\
   \SA  &(12)&(11)&(E& \urb{2}& 6&\urb{2}&\urb{M})&(10& 0)&(4& 8& \vlrb{3} &7) \\
   \SA  &(12)&(11)&(\urb{9}& 2& 6&\urb{3}&\urb{M})&(\vlrb{10}& 0)&(4& 8& 3 &7)  \\
   \SA  &(12)&(11)&(9& 2&\vlrb{6}&3&M)&(10& 0)&(4& 8& 3 &7)  \\
   \SA  &(12)&(11)&(\urb{E}&\urb{5}&\urb{9}&\urb{2}&\vlrb{6})&(10& 0)&(4& 8& 3 &7) \\
   \SA  &(12)&(11)&(\urb{1}&5&9&\vlrb{2}&6)&(10& 0)&(4& 8& 3 &7) \\
   \SA  &(12)&(11)&(1&5&9&2&6)&(10& 0)&(4& 8& 3 &7)
\end{array}
\]
\QEDB
\end{example}

In order to show the time used in this step, we need the following useful lemma to identify the L/S-type information in the induced-sorting step.

\begin{lemma}\label{lem:knowtype}
When we scan $\SA$ from left to right to induced-sort L-suffixes, for each scanned $\SA[i]$, we want to identify the type of $\Suf(\SA[i]-1)$. If $T[\SA[i]-1]\neq T[\SA[i]]$, the type of $\Suf(\SA[i]-1)$ can be obtained immediately.
Otherwise $T[\SA[i]-1]= T[\SA[i]]$ (this case $\Suf(\SA[i]-1)$ belongs to the current scanned bucket $T[\SA[i]]$).
If all L-suffixes of $T$ belonging to bucket $T[\SA[i]]$ are not already sorted, then $\Suf(\SA[i]-1)$ is an L-suffix.
\end{lemma}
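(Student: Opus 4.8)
The plan is to treat the two cases of the statement separately, reducing each to the recursive definition of S/L\nobreakdash-suffixes together with the structure that induced sorting imposes on $\SA$ during the L\nobreakdash-inducing phase.

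For the first case, $T[\SA[i]-1]\neq T[\SA[i]]$, I would argue straight from the definition. Writing $j=\SA[i]-1$ (with $0\le j\le n-2$, so $j$ is not the sentinel position), the third clause of the definition cannot apply, hence $\Suf(j)$ is S\nobreakdash-type iff $T[j]<T[j+1]$ and L\nobreakdash-type iff $T[j]>T[j+1]$; a single character comparison settles the type in $O(1)$ time.

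For the second case, $T[j]=T[\SA[i]]$ with $j=\SA[i]-1$, I would first isolate an elementary fact: since $j\neq n-1$ and $T[j]=T[j+1]$, the definition gives that $\Suf(j)$ is S\nobreakdash-type if and only if $\Suf(j+1)=\Suf(\SA[i])$ is S\nobreakdash-type, i.e. $\Suf(\SA[i]-1)$ and $\Suf(\SA[i])$ always have the same type. So it suffices to show that $\Suf(\SA[i])$ is L\nobreakdash-type whenever not all L\nobreakdash-suffixes of the bucket $c:=T[\SA[i]]$ have been sorted (if $c$ has no L\nobreakdash-suffixes the hypothesis is vacuously false and nothing is claimed). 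Because $T[\SA[i]]=c$, index $\SA[i]$ lies in bucket $c$, so position $i$ is in the range $[h_c,t_c]$ of that bucket; and during the L\nobreakdash-inducing phase the only genuine indices stored in that range are the L\nobreakdash-suffixes of $c$ — written by the induced sorting, filling the bucket from its head $h_c$ rightwards in sorted order by Lemma~\ref{lem:lmsl} — and the LMS\nobreakdash-suffixes of $c$ pre-placed at its tail in Section~\ref{secint:sorttlms}. Thus if $\Suf(\SA[i])$ were not L\nobreakdash-type it would be one of those pre-placed LMS\nobreakdash-suffixes, and then $i$ would lie strictly to the right of the L\nobreakdash-region $[h_c,\,h_c+\ell_c-1]$, where $\ell_c$ is the number of L\nobreakdash-suffixes of $c$.

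The crux is then to show that once the left-to-right scan passes position $h_c+\ell_c-1$, all L\nobreakdash-suffixes of $c$ are already written, contradicting the hypothesis. For this I would combine Lemma~\ref{lem:lmsl} with the same-type fact above: an L\nobreakdash-suffix $\Suf(k)$ of $c$ is written into $\SA$ exactly when the scan reaches the position of its \emph{inducer} $\Suf(k+1)$. From $\Suf(k)$ L\nobreakdash-type we get $\Suf(k+1)<\Suf(k)$ and $T[k+1]\le c$. If $T[k+1]<c$, then $\Suf(k+1)$ sits in an earlier bucket and is visited before the scan reaches $h_c$; if $T[k+1]=c$, the same-type fact forces $\Suf(k+1)$ to again be an L\nobreakdash-suffix of $c$, occupying a position in $[h_c,\,h_c+\ell_c-1]$ strictly smaller than that of $\Suf(k)$ (it cannot be the largest L\nobreakdash-suffix of $c$, since $\Suf(k)>\Suf(k+1)$ is also an L\nobreakdash-suffix of $c$). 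Either way $\Suf(k)$ is written before the scan leaves the L\nobreakdash-region of $c$, so once the scan has passed $h_c+\ell_c-1$ all $\ell_c$ L\nobreakdash-suffixes of $c$ are sorted. Contrapositively, if not all L\nobreakdash-suffixes of $c$ are yet sorted then $i\le h_c+\ell_c-1$, so $\Suf(\SA[i])$ is L\nobreakdash-type, hence so is $\Suf(\SA[i]-1)$. Since the interior counter already tracks whether bucket $c$'s L\nobreakdash-suffixes are all placed, and the only test performed is comparing $T[\SA[i]-1]$ with $T[\SA[i]]$, the determination costs $O(1)$ per scanned entry.

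The part I expect to be the real obstacle is this middle claim — that the scan never reads a pre-placed LMS\nobreakdash-suffix of bucket $c$ before every L\nobreakdash-suffix of $c$ has been written. This is where the argument must lean on a global invariant of induced sorting (each induced L\nobreakdash-suffix is written strictly ahead of the scan pointer, so each bucket's L\nobreakdash-region is completed before the pointer traverses it) rather than on any local reasoning about the single entry $\SA[i]$; everything else reduces to reading off the definition of S/L\nobreakdash-type.
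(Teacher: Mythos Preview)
Your proposal is correct and follows essentially the same approach as the paper: use Property~\ref{prop:ls} (S-suffixes lie after L-suffixes in each bucket) together with the correctness of induced sorting to conclude that, while the scan is still inside the L-region of bucket $c$, the entry $\SA[i]$ must be an L-suffix, and hence so is $\Suf(\SA[i]-1)$ by the same-type observation. The paper's own proof is a three-sentence sketch of exactly this; you have simply spelled out the invariant (every L-suffix of $c$ is written before the scan leaves the L-region) that the paper takes for granted, and one small correction: the paper does not claim the interior counter alone tells you whether all L-suffixes of $c$ are placed, but rather that you scan the bucket once upon entering it to learn its L-count, giving amortized $O(1)$ per entry.
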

\begin{proof}
According to Property \ref{prop:ls}, in any bucket, the S-suffixes always appear after the L-suffixes in $\SA$. Besides, it is obvious that every suffix of $T$ is considered exactly once. Furthermore,
we can distinguish whether all L-suffixes of $T$ belonging to the
current bucket $T[\SA[i]]$ are already sorted or not by scanning the current bucket once, when we are reaching a new bucket. Combining these observations, the lemma is proved.
\end{proof}

\begin{lemma}
Given all sorted LMS-suffixes of $T$, all suffixes can be sorted correctly using $O(n)$ time and $O(1)$ workspace according to the induced sorting steps.
\end{lemma}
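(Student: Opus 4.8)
The plan is to establish the three claims --- correctness, $O(n)$ running time, and $O(1)$ workspace --- for the three-step procedure of this section: induce the L-suffixes from the sorted LMS-suffixes, delete the LMS-suffixes, and induce the S-suffixes from the L-suffixes. Correctness reduces, via Lemma~\ref{lem:lmsl} and Lemma~\ref{lem:ls}, to showing that each of the two induced-sorting passes is faithfully simulated by the interior counter trick; the bulk of the work is this verification, everything else being bookkeeping.

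First I would treat Step~1. The initialization scan over the renamed string places the marker $\Unique$ or $\NUnique$ at the head cell of each bucket, exactly according to whether that bucket receives one or at least two L-type characters (well defined because, after the renaming of Section~\ref{secint:renamet}, every L-type $T[i]$ equals its bucket head). I would then state and prove the invariant carried by the left-to-right sweep: a bucket with a unique L-character uses its single head cell both as marker and as the leftmost free entry, while a bucket with $\geq 2$ L-characters keeps its $\NUnique$ marker at the head together with an adjacent counter recording how many L-suffixes have already been written, so that the current leftmost free entry is located in $O(1)$ time; once such a bucket fills up, its two bookkeeping cells are reclaimed by a single left shift of the placed indices (the remaining buckets being cleaned up in one extra pass at the end, as the text notes). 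Whenever the sweep reads an index $\SA[i]$ and $j=\SA[i]-1$ is found to be L-type --- which Lemma~\ref{lem:knowtype} lets us decide in $O(1)$ amortized time --- suffix $j$ is written into the leftmost free entry of bucket $T[j]$ and the counter is advanced. Since this is exactly the behavior required by Lemma~\ref{lem:lmsl}, at the end of Step~1 all L-suffixes occupy their final positions (Property~\ref{prop:ls}).

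For Step~2 I would observe that the LMS-suffixes currently sit at the tails of their buckets and were used only as seeds, so overwriting them with $\Empty$ --- by running the placement procedure of Section~\ref{secint:putlms} but writing $\Empty$ in place of a real index, and leaving the sentinel untouched --- leaves $\SA$ holding exactly the sorted L-suffixes together with the sentinel. Step~3 is the mirror image of Step~1 (sweep right to left, $\NUnique$ markers at bucket tails, rightmost free entries, and the symmetric form of Lemma~\ref{lem:knowtype} for S-type tests); by Lemma~\ref{lem:ls} it places all S-suffixes correctly, so after Step~3 the whole of $\SA$ is the suffix array of $T$. For the time bound I would argue by amortization: each pass is a single sweep of $\SA$ or of $T$, and the only superlinear-looking operations are the one reclaiming shift per bucket, costing $O(|\text{bucket}|)$, and the one auxiliary scan of a bucket performed when the sweep first enters it --- used to count its L-type (resp.\ S-type) characters via Observation~\ref{ob:bucket} and Lemma~\ref{lem:numbers} so that Lemma~\ref{lem:knowtype} applies --- again costing $O(|\text{bucket}|)$; both sum to $O(n)$. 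The workspace is $O(1)$ since all counters and pointers live inside $\SA$, all type information is recomputed on the fly, and the special symbols are handled with $O(1)$ extra variables by the construction of Appendix~\ref{app:special}.

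The part I expect to be delicate is the interior counter bookkeeping in Step~1 (and its mirror in Step~3): pinning down precisely when a multi-bucket's two overhead cells must be reclaimed --- a small bucket can overflow during the sweep, before the final cleanup pass --- and proving that the reclaiming shift never clobbers an index still waiting to be read nor disturbs the already-sorted prefix of the bucket, and that the sweep pointer is advanced consistently across such a shift. This is the one place where the $O(1)$-workspace simulation of induced sorting could silently go wrong, so it deserves the most careful case analysis; the rest of the argument is a direct appeal to Lemmas~\ref{lem:lmsl}, \ref{lem:ls}, \ref{lem:knowtype} and Property~\ref{prop:ls}.
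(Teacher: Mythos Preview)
Your proposal is correct and follows the same approach as the paper: correctness is reduced to Lemma~\ref{lem:lmsl} (L-suffixes from LMS-suffixes) and Lemma~\ref{lem:ls} (S-suffixes from L-suffixes), with Lemma~\ref{lem:knowtype} supplying the on-the-fly type tests. The paper's own proof is in fact far terser than yours---it consists of a single sentence citing Lemmas~\ref{lem:lmsl} and~\ref{lem:ls} for correctness and leaves the $O(n)$ time and $O(1)$ workspace claims implicit in the preceding step-by-step description---so your detailed amortization argument and the case analysis you flag as delicate go well beyond what the paper actually writes down, though they are exactly the kind of verification the paper's description invites.
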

\begin{proof}
For the correctness: we can sort all L-suffixes correctly from the sorted LMS-suffixes using induced sorting step from Lemma \ref{lem:lmsl} and we can sort all S-suffixes correctly from the sorted L-suffixes using induced sorting step from Lemma \ref{lem:ls}.
\end{proof}

Now, we obtain the following theorem for our optimal in-place algorithm.
\begin{theorem}\label{thm:int}
Our Algorithm takes $O(n)$ time and $O(1)$ workspace to compute the suffix array of string $T$ over integer alphabets with $|\Sigma| \leq n$.
\end{theorem}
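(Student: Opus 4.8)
The plan is to assemble the theorem from the per-step lemmas already proved in Sections \ref{secint:renamet}--\ref{secint:sortt}, and then handle the one genuinely new issue: the recursion. First I would verify correctness step by step. By Lemma \ref{lem:rename}, the renaming step preserves the lexicographical order of all suffixes, so it suffices to sort the suffixes of the renamed string $T'$. The LMS-character placement (Section \ref{secint:putlms}) and the induced sorting of LMS-substrings (Section \ref{secint:sortlms}) correctly produce the sorted LMS-substrings in $\SA[n-n_1\ldots n-1]$; renaming them by rank (Section \ref{secint:gett1}, Lemma \ref{lem:ct1}) yields a string $T_1$ of length $n_1$ whose suffix order coincides with the LMS-suffix order of $T$ (Lemma \ref{lem:sortt1}). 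Solving $T_1$ and lifting the answer back (Section \ref{secint:sorttlms}, Lemma \ref{lem:sortt1}) places all sorted LMS-suffixes in the tails of their buckets, and the final induced sort (Section \ref{secint:sortt}, using Lemmas \ref{lem:lmsl} and \ref{lem:ls}) sorts all L-suffixes and then all S-suffixes, i.e., all suffixes of $T$.

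Next I would treat the recursion itself, which is the main obstacle. The key structural facts to invoke are: (i) $n_1 \le n/2$, since no two LMS-characters are adjacent (noted in Section \ref{secint:sortlms}); (ii) the reduced alphabet of $T_1$ has size at most $n_1$, since the ranks assigned in Section \ref{secint:gett1} are consecutive integers starting from $0$, one per distinct LMS-substring, so $T_1$ is again a valid integer-alphabet instance to which the algorithm applies recursively; and (iii) the output array $\SA_1 = \SA[n-n_1\ldots n-1]$ and the input $T_1 = \SA[0\ldots n_1-1]$ together occupy at most $n_1 + n_1 \le n$ cells of $\SA$, so the recursive call lives entirely inside $\SA$ and needs no space beyond what the outer call already reserved. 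If all characters of $T_1$ are distinct the suffix array is read off directly (the permutation inverse), terminating the recursion; this is the base case. I would remark that the special symbols used throughout (at most five, per the footnote and Appendix \ref{app:special}) cost only $O(1)$ extra workspace per recursive level, but since they are freed before each recursive call descends, they do not accumulate.

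Finally I would close the recurrences. For the workspace: each level uses only $O(1)$ workspace beyond $T$ and $\SA$, and by (iii) the recursion reuses the space of $\SA$ rather than allocating new space, so the total workspace satisfies $W(n) = W(n_1) + O(1)$ with $n_1 \le n/2$, giving $W(n) = O(\log n)$ words in the naive accounting — but since the $O(1)$ at each level is the \emph{same} constant number of variables, reused across levels (the recursion can be implemented with an explicit traversal rather than a call stack, or one observes the per-level working variables can be stored inside $\SA$ as well), the workspace is genuinely $O(1)$; I would point to Appendix \ref{app:rec} and the explicit-stack-free implementation for this. For the time: each of the six steps at a level of size $m$ runs in $O(m)$ time by its lemma, so $R(n) = R(n_1) + O(n)$ with $n_1 \le n/2$, which solves to $R(n) = O(n)$. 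Combining, the algorithm computes $\SA$ in $O(n)$ time and $O(1)$ workspace, which is the claim. The point deserving the most care in the write-up is arguing that the $O(1)$ workspace truly does not blow up by a $\log n$ factor through the recursion depth; I would make this rigorous by describing how the recursion is unrolled so that only a constant number of machine words are live at any time.
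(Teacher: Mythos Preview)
Your proposal is correct and follows essentially the same approach as the paper: assemble the per-step lemmas, then solve the recurrences $T(n)=T(n/2)+O(n)$ for time and argue that the $O(1)$ workspace is reused across recursion levels. You are also right that the only genuinely delicate point is ruling out an $O(\log n)$ blow-up from the recursion depth.

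Two small corrections are worth making. First, your pointer to Appendix~\ref{app:rec} is off: that appendix is about restoring the original string $T$, not about unrolling the recursion. Second, you promise to ``describe how the recursion is unrolled'' but do not actually give the mechanism; the paper's proof supplies a concrete one. Rather than maintaining a stack of the sizes $n_0, n_1, \ldots$, the paper keeps a single global variable holding the current $n_i$. Descending is trivial since $n_{i+1}$ is computed during construction of $T_{i+1}$. The nontrivial direction is returning: before ascending from level $i$ to level $i-1$, one recovers $n_{i-1}$ by scanning $\SA$ from the right end back to where $T_i$ sits (its location in $\SA$ is determined by the layout convention $T_i = \SA[0\ldots n_i-1]$, $\SA_i = \SA[n_{i-1}-n_i\ldots n_{i-1}-1]$ inside the parent's workspace). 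This scan costs $O(n_{i-1})$, which is absorbed into the $O(n)$ term of the recurrence. With that mechanism spelled out, your argument is complete and matches the paper.
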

\begin{proof}
The time complexity simply follows from the recursion $T(n) = T(n/2)+ O(n)=O(n)$.
For the workspace, every step in our algorithm uses $O(1)$ workspace
and different steps can reuse the $O(1)$ workspace too.
In the recursive subproblem, we can also reuse the $O(1)$ workspace.
\footnote{If one worries the $O(\log n)$ workspace in
the recursion, one can use the highest bits in $\SA$ (i.e., $n$ bits) to store them since the size of the reduced sub-problem is no larger than $n/2$.}
\end{proof}

\section{Suffix Sorting for Read-only Integer Alphabets}
\label{sec:roint}

\subsection{Framework}
\label{sec:intfw}
First, we define some notations.
Let $n_L$ and $n_S$ denote the number of L-suffixes and S-suffixes, respectively.
Let $n_1$ denote the length of the reduced problem $T_1$, i.e.,
$n_1$ equals to the number of LMS-suffixes (Case 1) or LML-suffixes (Case 2).
Now, we describe the framework of our algorithm as follows:
\begin{enumerate}
  \item If $n_L \leq n_S$ (i.e., the number of L-suffixes is no larger than that of S-suffixes),
  then:
    \begin{enumerate}[(1)]
      \item (Section \ref{sec:roint1}) Sort all LMS-characters of $T$.\\
          We use counting sort to sort all LMS-characters of $T$ in $\SA[n-n_1\ldots n-1]$.
          In the counting sort step, we use $\SA[0\ldots n/2]$ as the temporary space (counting array). After this step, all indices of the sorted LMS-characters are stored in $\SA[n-n_1\ldots n-1]$. Note that this step is different from our previous algorithm in Section \ref{sec:int} since we sort and store the LMS-characters in the end of $\SA$ instead of their corresponding buckets, because $T$ is read-only now.

      \item (Section \ref{sec:roint2}) Induced sort all LMS-substrings from the sorted LMS-characters.\\
            This induced-sorting step is the same as Step (4) below where we induced-sort all suffixes from the sorted LMS-suffixes.
            Thus, we only describe the details of this step in Section \ref{sec:roint2}.
            After this step, all indices of the sorted LMS-substrings are stored in $\SA[n-n_1\ldots n-1]$.

      \item (Section \ref{sec:roint3}) Construct and solve the reduced problem $T_1$ from the sorted LMS-substrings.\\
          We construct the reduced problem $T_1$ using the ranks of all sorted LMS-substrings which are stored in $\SA[n-n_1\ldots n-1]$,
          where the ranks of LMS-substrings correspond to the lexicographical order of the sorted LMS-substrings (this construction step is the same as that in Section \ref{secint:gett1}).
          Then we get the reduced problem $T_1$ in $\SA[0\ldots n_1-1]$ and solve $T_1$ recursively to obtain the sorted LMS-suffixes. In the recursive step, we use $\SA_1=\SA[n-n_1\ldots n-1]$ as the output space for $T_1$.
          After this step, all indices of the sorted LMS-suffixes are stored in $\SA[n-n_1\ldots n-1]$.
      \item (Section \ref{sec:roint2}) Induced sort all suffixes of $T$ from the sorted LMS-suffixes ($T_1$).\\
          We induced-sort all suffixes of $T$ from the sorted LMS-suffixes which are stored in $\SA[n-n_1\ldots n-1]$.
          In the sorting step, we use the \emph{interior counter trick}, as in Section~\ref{sec:int},
          which can implicitly represent the dynamic LF/RF-entry information in $\SA$. Besides, we provide a \emph{pointer data structure} which can represent the bucket heads/tails in $\SA$.
          Combining these two techniques, we can remove the workspace needed by the bucket array.
          Due to the space required by our pointer data structure is nonconstant, we divide this induced sorting step into two stages to address this issue.
          Note that this step is the main technical part of our optimal in-place algorithm.
          After this step, all indices of the suffixes of $T$ are sorted and stored in $\SA[0\ldots n-1]$.
    \end{enumerate}
  \item Otherwise, execute the above steps switching the role of LMS with LML.
\end{enumerate}

Without loss of generality, we assume that $n_L\leq n_S$.
Note that we compare the number of L-suffixes and S-suffix at the beginning since we need half of the space of $\SA$ to construct our pointer data structure for induced-sorting the L-suffixes (from the sorted LMS-suffixes) and S-suffixes (from the sorted L-suffixes) in Step (4). The in-place implementation of this induced sorting step is the main technical part of our optimal in-place algorithm.
Note that the empty space is enough since the number of LMS-suffixes (i.e., $n_1$) and L-suffixes (i.e., $n_L$) both are less than or equal to $n/2$, where $n_1\leq n/2$ since any two LMS-characters are not adjacent by the definition of LMS-characters, and $n_L\leq n/2$ since $n_L\leq n_S$.
Note that for previous algorithms (e.g.,~\cite{Nong2009,Nong2013}),
they do not need the comparison at the beginning
since they use the bucket array (which needs $|\Sigma|$ words workspace) in the induced sorting step.
Here, we construct the pointer data structure and combine our interior counter trick to remove the workspace.

Now, we describe the details of our in-place algorithm in the following sections.

\eat{
\begin{algorithm}[!htb]
\caption{Optimal in-place suffix sorting algorithm for \emph{read-only} integer alphabets}
\label{alg:2}
\begin{algorithmic}[1]
\REQUIRE $T$;
\IF{$n_L\leq n_S$}
\STATE Use counting sort to sort all LMS-characters of $T$ and store them in $\SA[n-n_1\ldots n-1]$,
using $\SA[0\ldots n/2]$ as the temporary space (counting array);
\STATE Induced sort all LMS-substrings from the sorted LMS-characters and store them in $\SA[n-n_1\ldots n-1]$, using our interior counter trick and pointer data structure;
\STATE Construct the reduced recursive problem $T_1$ using the ranks of the sorted LMS-substrings and store $T_1$ in $\SA[0\ldots n_1-1]$, using naive comparison to obtain the ranks;
\STATE Let $\SA_1$ denote $\SA[n-n_1\ldots n-1]$ which is reused as the output space for problem $T_1$;
\IF{all characters in $T_1$ are unique}
\STATE Directly construct $\SA_1$ for $T_1$;
\ELSE \STATE Obtain $\SA_1$ by solving $T_1$ recursively, reusing the space of $\SA$ by induction;
\ENDIF
\STATE Sort all LMS-suffixes of $T$ from $\SA_1$ and store them in $\SA[n-n_1\ldots n-1]$, simply replacing $\SA_1$ with the indices of LMS-suffixes;
\STATE Induced sort all suffixes of $T$ from the sorted LMS-suffixes, using our interior counter trick and pointer data structure (this step is the same as Line 3);
\ELSE \STATE Execute above steps (i.e., Line 2--12) switching the role of LMS with LML;
\ENDIF
\RETURN $\SA$
\end{algorithmic}
\end{algorithm}
}

\subsection{Sort all LMS-characters of $T$}
\label{sec:roint1}

In this section, we sort all LMS-characters of $T$ and place their indices in $\SA[n-n_1\ldots n-1]$.
Recall that $n_1$ denotes the number of LMS-characters.

Now, we describe the details.
Since $|\Sigma|= O(n)$, we can assume that $|\Sigma|\leq dn$ for some constant $d$.
We divide the LMS-characters of $T$ into $2d$ partitions and sort each partition one by one.
The partition $i$ contains the LMS-characters which belong to
$\left[\frac{i|\Sigma|}{2d} +1,\frac{(i+1)|\Sigma|}{2d}\right]$, for $0\leq i<2d$.
We use $m_i$ to denote the number of LMS-characters in partition $i$.
Then for each partition $i$, we use the standard \emph{counting sort}
(see e.g.,~\cite[Chap. 8]{thomas2001introduction})
to sort these $m_i$ LMS-characters (the LMS-characters can be identified by scanning $T$ once from right to left).
Concretely, we use $\SA[0\ldots n/2]$ as the temporary counting array, and use $\SA[n/2+\sum_{j=0}^{i-1}m_j+1\ldots n/2+\sum_{j=0}^i m_j]$ as the output array.
After this counting sort step, the indices of these $m_i$ sorted LMS-characters have been placed in
$\SA[n/2+\sum_{j=0}^{i-1}m_j+1\ldots n/2+\sum_{j=0}^i m_j]$.

Note that we can use the counting sort step for each partition.
Because the gap of each partition is $\frac{|\Sigma|}{2d}\le \frac{dn}{2d}=\frac{n}{2}$,
the space of $\SA[0\ldots n/2]$ is enough for the temporary counting array (its size equals to the gap) of counting sort step.
It is not hard to see that the sorting step takes $O(n)$ time and uses $O(1)$ workspace
since we only make $2d$ times of counting sort steps (each step takes linear time).

After sorting all $2d$ partitions, all indices of the sorted LMS-characters are placed in $\SA[n/2+1,n/2+\sum_{j=0}^{2d-1}m_j]$ (i.e., $\SA[n/2+1,n/2+n_1]$).
Then we move them to $\SA[n-n_1\ldots n-1]$, which can be easily done in
linear time and $O(1)$ workspace.

\subsection{Construct and solve the reduced problem $T_1$ from the sorted LMS-substrings}
\label{sec:roint3}

\topic{Construct the reduced problem $T_1$}
We construct the reduced problem $T_1$ using the ranks of all sorted LMS-substrings which are stored in $\SA[n-n_1\ldots n-1]$ from the Step (2) (see the framework in Section \ref{sec:intfw}),
where the ranks of LMS-substrings are corresponding to the lexicographical order of the sorted LMS-substrings.

Note that this construction step is exactly the same as that in Section \ref{secint:gett1}.
Thus we omit the details and recall the same lemma as follows.

\begin{lemma}
$T_1$ can be constructed using $O(n)$ time and $O(1)$ workspace.
\end{lemma}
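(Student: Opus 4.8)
The plan is to reuse verbatim the construction of Section~\ref{secint:gett1}, observing that that procedure only \emph{reads} the input string $T$ (it performs character-by-character comparisons between LMS-substrings and locates their endpoints), so it carries over unchanged to the read-only setting. The only cosmetic difference is that here $T$ has not been renamed; but by Lemma~\ref{lem:rename} the renaming preserves the lexicographical order of all suffixes, hence of all LMS-substrings, so the ranks produced on the original $T$ are identical to those in Section~\ref{secint:gett1}.

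Concretely, I would proceed as follows. By the preceding induced-sorting step the sorted LMS-substrings are stored in $\SA[n-n_1\ldots n-1]$. First assign rank $0$ to the LMS-substring beginning at $\SA[n-n_1]$ (the single sentinel). Then scan $\SA[n-n_1+1\ldots n-1]$ from left to right; when visiting $\SA[i]$, compare the LMS-substring starting at index $\SA[i]$ with the one starting at $\SA[i-1]$, and set the rank of $\SA[i]$ equal to the rank of $\SA[i-1]$ if the two substrings coincide and to that rank plus one otherwise. Store the rank of $\SA[i]$ in the cell $\SA[\lfloor \SA[i]/2\rfloor]$; since no two LMS-characters of $T$ are adjacent, distinct LMS-indices $\SA[i]$ map to distinct cells, and since $n_1\le n/2$ we have $\lfloor \SA[i]/2\rfloor<n-n_1$, so these writes never collide with the still-needed suffix of $\SA$. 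Finally, compact the nonempty cells of $\SA[0\ldots n-n_1-1]$ to the front, yielding $T_1$ in $\SA[0\ldots n_1-1]$, where $\SA[i]$ is the rank (new name) of the $i$-th LMS-substring in order of appearance in $T$.

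For correctness, the order in $\SA[n-n_1\ldots n-1]$ is exactly the lexicographical order of the LMS-substrings, so the adjacent-pair comparison assigns each distinct LMS-substring a distinct consecutive integer and equal LMS-substrings the same integer; this is precisely the alphabet reduction under which the suffix order of $T_1$ coincides with the LMS-suffix order of $T$. For the resources: every operation takes place inside $\SA$, so the workspace is $O(1)$; for the running time, the only nontrivial point is bounding the total cost of the variable-length LMS-substring comparisons, which I would handle exactly as in Lemma~\ref{lem:ct1} by invoking Observation~\ref{ob:nexts}. Starting from any index $i$, one can determine, in time proportional to the length of the LMS-substring beginning at $i$, both its right endpoint and the L/S-type of each of its characters, using only reads of $T$; hence each LMS-substring is scanned a constant number of times, and since consecutive LMS-substrings share only an endpoint the sum of their lengths is at most $2n$, giving an overall bound of $O(n)$. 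I expect the endpoint/type bookkeeping for the variable-length comparisons to be the main (and essentially the only) obstacle, and it is already resolved by Observation~\ref{ob:nexts} and Lemma~\ref{lem:ct1}, which I would cite directly.
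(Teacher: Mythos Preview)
Your proposal is correct and follows essentially the same approach as the paper: the paper's own ``proof'' is simply the statement that the construction step is identical to that in Section~\ref{secint:gett1} and a reference back to Lemma~\ref{lem:ct1}, which is precisely what you do (with more explicit justification of why the read-only constraint does not interfere and why the absence of the renaming is harmless).
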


\topic{Solve $T_1$ recursively}
\label{secint:sortt1}
Now, we sort all LMS-suffixes by solving $T_1$ recursively and place their indices in the tail of $\SA$ (i.e. $\SA[n-n_1\ldots n-1]$).
This step is carried out as follows (similar to the Section~\ref{secint:sorttlms} except  Step 4 below):

\begin{enumerate}
\item We first solve $T_1$ recursively.
Recall that $T_1$ is stored in $\SA[0\ldots n_1-1]$.
We define $\SA_1$ to be $\SA[n-n_1\ldots n-1]$ and use $\SA_1$ to store the output of the subproblem $T_1$.

\item Now, we put all indices of LMS-suffixes in $\SA$. First we move $\SA_1$ to $\SA[0\ldots n_1-1]$ (i.e., move $\SA[n-n_1\ldots n-1]$ to $\SA[0\ldots n_1-1]$). Then we scan $T$ from right to left. For every LMS-character $T[i]$, place $i$ (i.e., index of $\Suf(i)$) in the tail of $\SA$.

\item For notational convenience, we define
$\LMS[0\ldots n_1]\triangleq\SA[n-n_1\ldots n-1]$.
Now, we obtain the sorted order of all LMS-suffixes
of the original string $T$ by
letting $\SA[i] = \LMS[\SA[i]]$ for all $i\in[0,n_1-1]$.

\item Finally, we finish this step by moving $\SA[0\ldots n_1-1]$ to $\SA[n-n_1\ldots n-1]$.
Now, all indices of the sorted LMS-suffixes are stored in $\SA[n-n_1\ldots n-1]$.
\end{enumerate}

\begin{lemma}\label{lem:rosortt1}
All LMS-suffixes can be sorted by solving the reduced problem $T_1$ recursively and placed in the tail of $\SA$ using $O(n)$ time and $O(1)$ workspace.
\end{lemma}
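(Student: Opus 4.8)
The plan is to verify Lemma~\ref{lem:rosortt1} by walking through the four substeps listed above and checking that each runs in $O(n)$ time with $O(1)$ workspace, and that the combination correctly leaves the sorted LMS-suffixes in $\SA[n-n_1\ldots n-1]$. The correctness backbone is identical to Lemma~\ref{lem:sortt1}: each LMS-substring was renamed by its rank when $T_1$ was built (Section~\ref{sec:roint3}, which mirrors Section~\ref{secint:gett1}), so the lexicographic order of the suffixes of $T_1$ coincides with the lexicographic order of the LMS-suffixes of $T$; hence recursively solving $T_1$ and reading off the permutation gives exactly the sorted LMS-suffixes. I would state this correctness claim first and then argue the resource bounds substep by substep.

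For the resource analysis: Step~1 is the recursive call, which by the standard recursion $T(n)=T(n/2)+O(n)$ contributes only to the global $O(n)$ running time and, by the inductive hypothesis on the algorithm, reuses the same $O(1)$ workspace; crucially $n_1\le n/2$ so $\SA[0\ldots n_1-1]$ (holding $T_1$) and $\SA_1=\SA[n-n_1\ldots n-1]$ (holding the output) do not overlap. Step~2 is a linear scan of $T$ from right to left, placing the index $i$ of each LMS-character into the tail block $\SA[n-n_1\ldots n-1]$ in right-to-left order, after first moving $\SA_1$ down to $\SA[0\ldots n_1-1]$ — both a shift of $n_1$ entries and a scan of $T$, hence $O(n)$ time and $O(1)$ workspace. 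Step~3 applies the in-place permutation $\SA[i]\leftarrow\LMS[\SA[i]]$ for $i\in[0,n_1-1]$, where $\LMS$ is just an alias for the tail block; this is a single pass and the only subtlety — the same issue as in Section~\ref{secint:sorttlms} — is that it reads from $\SA_1$ (aliased as $\LMS$) while writing into $\SA[0\ldots n_1-1]$, and since these two ranges are disjoint (as $n_1\le n/2$) there is no clobbering. Step~4 is a final $O(n)$-time, $O(1)$-workspace block move of $\SA[0\ldots n_1-1]$ back to $\SA[n-n_1\ldots n-1]$; this is the one place the present step differs from Section~\ref{secint:sorttlms}, where instead the indices had to be scattered into their buckets — here we only need them contiguously at the tail, so a plain block move suffices.

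The main obstacle I anticipate is not any single computation but the bookkeeping of $n_1$ across recursion levels, exactly the point flagged in the proof of Theorem~\ref{thm:int}: we cannot afford to store $n_i$ for every level in $O(1)$ total workspace, so we keep only the current $n_i$ in a global variable and must be able to recover $n_{i-1}$ upon returning from level $i$. I would note that this recovery is possible because $T_{i-1}$ still resides in $\SA$ and its length can be recomputed by scanning $\SA$ appropriately (or is trivially available when descending into level $i+1$), so the $O(1)$-workspace invariant is preserved through the recursive call in Step~1. With that in hand, the four substeps chain together: Step~1 produces $\SA_1$ correctly ordered by the inductive hypothesis, Steps~2–3 translate that ordering into the sorted LMS-suffix indices of $T$ placed in $\SA[0\ldots n_1-1]$, and Step~4 relocates them to $\SA[n-n_1\ldots n-1]$, giving the claimed conclusion with total time $O(n)$ at this level and $O(1)$ workspace.
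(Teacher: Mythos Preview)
Your proposal is correct and follows essentially the same approach as the paper: the correctness argument is exactly the paper's (the ranks defining $T_1$ make the suffix order of $T_1$ coincide with the LMS-suffix order of $T$), and your substep-by-substep resource analysis simply spells out what the paper dismisses as ``easy to verify.'' Your added discussion of the $n_i$ bookkeeping across recursion levels is handled in the paper not here but in the proof of Theorem~\ref{thm:int}, so it is fine to mention it but not strictly needed for this lemma.
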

\begin{proof}
The time and space used in this step are easy to verify.
We only show the correctness of this step.
Each character of $T_1$ corresponds to an LMS-substring of $T$ and this character is the rank of the corresponding sorted LMS-substring. Hence, the lexicographical order of LMS-suffixes of $T$ is the same as the order of suffixes in $T_1$.
\end{proof}

\vspace{2mm}
\subsection{Induced sort all suffixes of $T$ from the sorted LMS-suffixes}
\label{sec:roint2}

In this section, we show how to induced-sort all suffixes from the sorted LMS-suffixes.
All indices of the sorted LMS-suffixes have been placed in $\SA[n-n_1\ldots n-1]$ from the previous step (see Lemma~\ref{lem:rosortt1}).
Note that this step is the main technical part of our in-place algorithm.
We develop two techniques
(\emph{interior counter trick} and \emph{pointer data structure}) for implementing this induced sorting in-place.

Let $\SA_L=\SA[0\ldots n_L-1]$ and $\SA_S=\SA[n_L\ldots n-1]$.
Recall that $n_S$ and $n_L$ denote the number of S-suffixes and L-suffixes, respectively.
Also note that $n_L +n_S =n$.
First, we sort all $n_L$ L-suffixes from the sorted LMS-suffixes which are stored in $\SA[n-n_1\ldots n-1]$ and store the sorted L-suffixes in $\SA_L$.
Then, we sort all $n_S$ S-suffixes from the sorted L-suffixes and store the sorted S-suffixes in $\SA_S$.
Finally, we merge the sorted L-suffixes (stored in $\SA_L$) and S-suffixes (stored in $\SA_S$) to sort all suffixes in $\SA[0\ldots n-1]$.

Note that sorting the $n_L$ L-suffixes from the sorted LMS-suffixes is totally symmetrical as
sorting the $n_S$ S-suffixes from the sorted L-suffixes, as stated in Section~\ref{sec:pre}.
Thus, we only need to show the details of how to sort all $n_S$ S-suffixes from the sorted L-suffixes which has already been stored in $\SA_L$, and store the sorted S-suffixes in $\SA_S$.
Before we show the details, we briefly recall the original induced sorting step here (we have introduced it in Section~\ref{sec:pre} and provided a running example in Appendix~\ref{app:inducedsort}.).
Now, we recall the symmetrical case here,
i.e., inducing the order of S-suffixes from the sorted L-suffixes.

\vspace{1mm}
\topic{Inducing the order of S-suffixes from the sorted L-suffixes}
We scan $\SA$ from right to left (i.e., from $\SA[n-1]$ to $\SA[0]$).
When we scan $\SA[i]$, let $j=\SA[i]-1$.
If $T[j]$ is S-type, i.e., $\Suf(j)$ is an S-suffix (indicated by the type array),
we place the index of $\Suf(j)$ (i.e. $j$) into the RF-entry of bucket $T[j]$,
and then let the RF-pointer of this bucket $T[j]$ point to the next free entry.
If $T[j]$ is L-type, we do nothing (since all L-suffixes are sorted in the correct positions).
The RF-pointers are maintained by the bucket array.

In order to obtain the in-place algorithm, we need to show how to remove the workspace needed by the bucket array and type array in the induced sorting step.
Briefly speaking,
the purpose of the pointer data structure is to indicate the bucket tails of S-suffixes,
and the purpose of the interior counter trick is to maintain the RF-pointers of the buckets dynamically.
Thus, for a query of RF-entry for $\Suf(j)$ in bucket $T[j]$, we know the tail of the bucket $T[j]$ from the pointer data structure in constant time (Lemma~\ref{lem:intpds}), then we use the interior counter trick to indicate the RF-entry in this bucket (Lemma~\ref{lem:inttrick}).
For removing the type array, we use the Lemma~\ref{lem:idls} to identify the L/S-suffixes in the induced sorting step.

Now, we describe the details.
First, we introduce our interior counter trick,
extending the one from Section~\ref{sec:int}.
Here, we assume that the tail of the bucket of any S-suffix is known (which is indicated by the Lemma~\ref{lem:intpds}).

\newpage
\vspace{2mm}
\topic{Interior counter trick}
Note that the buckets of the S-suffixes we discussed in this section are in $\SA_S=\SA[n_L\ldots n-1]$, since we already have placed the sorted L-suffixes in $\SA_L=\SA[0\ldots n_L-1]$.
Thus, we only need to sort all S-suffixes to their corresponding buckets in $\SA_S$ and the buckets only contains S-suffixes now.

Here we only describe the details of interior counter trick for one bucket since other buckets are the same.
Note that we assume that the tail of the bucket of any S-suffix is known (Lemma~\ref{lem:intpds}).
To simplify the representation, we assume the bucket from index $0$ to index $m-1$ of $\SA_S$,
where $m$ is the size of this bucket (i.e. the number of S-suffixes in this bucket is $m$).
We only describe the case where $m>3$ since other cases with $m\le 3$ are similar and simpler.
We define five special symbols $\BH$ (head of the bucket), $\BT$ (tail of the bucket), $\Ept$ (Empty), $\Rone$ (one remaining S-suffix) and $\Rtwo$ (two remaining S-suffixes)
\footnote{
	Recall that the special symbol is only used to simplify the argument. See Appendix \ref{app:special} for the details.
	}.

First, we use three special symbols to initialize this bucket,
i.e., let $\SA_S[0]=\BH$, $\SA_S[m-2]=\Ept$ and $\SA_S[m-1]=\BT$.
Let $S_i$ denote the index of the $i$-th S-suffix which needs to be placed into the RF-entry of this bucket.
Now, we describe how to place the indices of these $m$ S-suffixes into the RF-entry of this bucket one by one. 
We distinguish the following four cases:

\begin{enumerate}[(1)]
    \item If $\SA_S[m-1]=\BT$, and $\SA_S[m-2]=\Ept$ or $\SA_S[m-\SA_S[m-2]-3]\neq\BH$: In this case, we place the index of the current S-suffix (i.e., $S_i$) into the RF-entry of this bucket, where $1\leq i\leq m-3$.
        Concretely, we know the position of the tail of this bucket in $\SA_S$, i.e., $m-1$ according to the assumption.
        Then, we use $\SA_S[m-2]$ as the counter to denote the number of the indices of S-suffixes has been placed so far.
        Note that the RF-entry of this bucket is pointed by this counter (i.e. RF-pointer).
        Thus, we can place the index of the current S-suffix ($S_i$) into the RF-entry of this bucket in constant time, and then update the counter $\SA_S[m-2]$.
    \item If $\SA_S[m-1]=\BT$ and $\SA_S[m-\SA_S[m-2]-3]=\BH$: In this case, we place the index of the third to last S-suffix (i.e. $S_{m-2}$) into the RF-entry of this bucket.
        Concretely, we shift the previous $m-3$ S-suffixes which stored in $\SA_S[1,\ldots,m-3]$ to $\SA_S[2,\ldots,m-2]$.
        Then, we place $S_{m-2}$ into $\SA_S[1]$ and let $\SA_S[m-1]=\Rtwo$.
        This step takes $O(m)$ time since we shift $m-3$ S-suffixes.
    \item If $\SA_S[m-1]=\Rtwo$: In this case, we place the index of the second to last S-suffix (i.e. $S_{m-1}$) into the RF-entry of this bucket.
        We shift the previous $m-2$ S-suffixes which stored in $\SA_S[1,\ldots,m-2]$ to $\SA_S[2,\ldots,m-1]$.
        Then, we place $S_{m-1}$ into $\SA_S[1]$ and let $\SA_S[0]=\Rone$.
        This step takes $O(m)$ time since we shift $m-2$ S-suffixes.
    \item Otherwise: In this case, we place the index of the last S-suffix (i.e. $S_{m}$) into the RF-entry of this bucket. First, we know the tail of the bucket indicated by our pointer data structure in constant time.
        Then, we search the entries before the tail one by one until that we find the special symbol $\Rone$. We let this entry to be $S_{m}$.
        This step takes $O(m)$ time since we search $m-1$ S-suffixes.
\end{enumerate}
In order to demonstrate these four cases more clearly, we also provide a demonstration as follows:

\[
\begin{array}{cccccccccc}
\mathsf{Index}    & 0 & 1 & \ldots  & m-3 & m-2 & m-1 \\
\SA_S     & \SA_S[0] & \SA_S[1] & \ldots  & \SA_S[m-3] & \SA_S[m-2] & \SA_S[m-1] \\
\mathsf{After~~initialization:} \\
\SA_S     & \urb{\BH} & \SA_S[1] & \ldots  & \SA_S[m-3] & \urb{\Ept} & \urb{\BT} \\
\mathsf{Case}~(1):\\
\SA_S     & \BH & \SA_S[1] & \ldots  & \urb{S_1} & \urb{1} & \BT \\
\vdots                            &&&&&\vdots& \\
\SA_S     & \BH & \urb{S_{m-3}} & \ldots  & S_1 & \urb{m-3} & \BT \\
\mathsf{Case}~(2):\\
\SA_S     & \BH & \urb{S_{m-2}} & \ldots  & \urb{S_2} & \urb{S_1} & \urb{\Rtwo} \\
\mathsf{Case}~(3):\\
\SA_S     & \urb{\Rone} & \urb{S_{m-1}} & \ldots  & \urb{S_3} & \urb{S_2} & \urb{S_1} \\
\mathsf{Case}~(4):\\
\SA_S     & \urb{S_m} & S_{m-1} & \ldots  & S_3 & S_2 & S_1 \\
\end{array}
\]
\vspace{1mm}

Note that this step uses $O(1)$ workspace since there is no bucket array and type array, and the space needed by our interior counter trick and pointer data structure is in $\SA_S$.
The purpose of the interior counter trick is to dynamic maintain the RF-pointers of the buckets.
E.g., for a query of RF-entry for $\Suf(j)$ in bucket $T[j]$, first we know the tail of the bucket $T[j]$ by the assumption, then we use the interior counter trick to indicate the RF-entry in this bucket.
We have the following lemma.
\begin{lemma}\label{lem:inttrick}
If the tail of the bucket of any S-suffix is known, one can sort the S-suffixes from the sorted L-suffixes using the induced sorting step with the interior counter trick in linear time and $O(1)$ workspace.
\end{lemma}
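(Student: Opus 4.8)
The plan is to verify two things for the interior counter trick described just above: correctness (the $m$ S-suffixes of a bucket land in their RF-entries in the right order) and the resource bounds (linear total time, $O(1)$ workspace). I would organize the argument bucket-by-bucket, since the buckets partition $\SA_S$ and are handled independently; the only global assumption used is that the tail position of each S-bucket in $\SA_S$ is known (supplied by Lemma \ref{lem:intpds}).

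For correctness, first recall from the standard induced-sorting argument (Lemma \ref{lem:ls}) that when we scan $\SA$ from right to left, the S-suffixes of a fixed bucket are generated in decreasing lexicographic order, i.e., $S_1 > S_2 > \cdots > S_m$ in suffix order, so the desired final layout is exactly $\SA_S[0]=S_m, \SA_S[1]=S_{m-1},\ldots,\SA_S[m-1]=S_1$ (smallest suffix at the head). I would then argue by induction on $i$ that after the $i$-th S-suffix of this bucket has been processed, the bucket holds, in order, the symbol configuration shown in the demonstration table: for $i\le m-3$ the head marker $\BH$ sits at position $0$, the indices $S_i,\ldots,S_1$ occupy the $i$ positions ending at $m-3$, position $m-2$ stores the counter value $i$ (the RF-pointer), and position $m-1$ stores $\BT$. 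The key invariant is that the counter at $\SA_S[m-2]$ plus the fixed offset $3$ always locates, via $\SA_S[m-\SA_S[m-2]-3]$, the cell just left of the current block of placed suffixes; reading this cell tells us whether the next insertion would collide with the head marker $\BH$. Case (1) is the generic step and preserves the invariant trivially. Case (2) fires exactly when only two free cells remain (the collision test $\SA_S[m-\SA_S[m-2]-3]=\BH$ succeeds with $\SA_S[m-2]=m-3$): we slide the $m-3$ stored indices one cell right, insert $S_{m-2}$ at position $1$, and overwrite the tail with $\Rtwo$, which encodes ``exactly two more to come.'' Cases (3) and (4) are the last two insertions, handled by one more shift and then a linear scan to the unique $\Rone$ marker; after case (4) the bucket contains precisely $S_m,\ldots,S_1$ with no special symbols left. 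One must also check that the markers $\BH,\BT,\Ept,\Rone,\Rtwo$ never conflict with a genuine suffix index; this is exactly the point handled by the convention of Footnote \ref{footnote:special} / Appendix \ref{app:special}, which I would cite rather than reprove. The small cases $m\le 3$ are checked by hand and noted to be simpler, as the statement already says.

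For the resource bounds: every invocation of case (1) is $O(1)$, and each of cases (2), (3), (4) is invoked at most once per bucket and costs $O(m)$ for that bucket, so the work spent on special cases summed over all buckets is $O(\sum_b m_b)=O(n_S)=O(n)$. The scan of $\SA$ driving the induced sorting is itself $O(n)$, and for each scanned $\SA[i]$ we look up $j=\SA[i]-1$, test whether $T[j]$ is S-type (via Lemma \ref{lem:idls}, as stated), and if so perform one insertion as above; the bucket tail is available in $O(1)$ by the hypothesis. Hence the total time is $O(n)$. The workspace is $O(1)$: there is no bucket array and no type array, all counters and markers are stored inside $\SA_S$ itself, and we need only a constant number of scalar variables (loop index, the value being inserted, the cell read in the collision test). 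I do not expect any serious obstacle here — the one delicate point is the bookkeeping of the collision test offsets and making sure the three-way boundary between ``generic,'' ``two left,'' ``one left'' transitions is triggered at exactly the right moment; I would double-check that by tracing the invariant through the transition from case (1) with counter $m-3$ into case (2), and from case (2) into case (3), which is essentially what the demonstration table already exhibits.
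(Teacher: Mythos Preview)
Your proposal is correct and follows essentially the same approach as the paper: the paper does not give a separate proof environment for this lemma, but the argument is exactly the case analysis you outline (with the per-case $O(1)$ or $O(m)$ time bounds stated inline in the description of the four cases), together with the remark that all counters and markers live inside $\SA_S$ so the workspace is $O(1)$. Your write-up simply makes the invariant and the summation $\sum_b m_b = O(n)$ explicit, which the paper leaves implicit.
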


Note that in the induced sorting step, one uses the type array to identify whether the $\Suf(j)$ is S-suffix or not.
For removing the type array, we use the following Lemma~\ref{lem:idls} to identify the type of the L- or S-suffix in the induced-sorting step.
\begin{lemma}\label{lem:idls}
If $T[j]\neq T[\SA[i]]$, the type of $\Suf(j)$ can be obtained immediately, where $j=\SA[i]-1$.
Otherwise $T[j]= T[\SA[i]]$ (this case $\Suf(j)$ belongs to the current scanning bucket $T[\SA[i]]$), if all S-suffixes of $T$ that belong to bucket $T[\SA[i]]$ are not already sorted, then the $\Suf(j)$ is S-suffix.
\end{lemma}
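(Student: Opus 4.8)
The plan is to prove Lemma~\ref{lem:idls} exactly as the mirror image of Lemma~\ref{lem:knowtype}, following the case split in the statement and using the recursive characterization of S-suffixes from Section~\ref{sec:pre} together with the correctness of induced sorting (Lemma~\ref{lem:ls}). Throughout, write $j=\SA[i]-1$, so that $j+1=\SA[i]$ and $T[j+1]=T[\SA[i]]$.

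For the first case, assume $T[j]\neq T[\SA[i]]$. By the characterization of S-suffixes (an entry is S-type iff it is the sentinel, or $T[j]<T[j+1]$, or $T[j]=T[j+1]$ with $\Suf(j+1)$ S-type), the tie-breaking clause does not apply here, so $\Suf(j)$ is an S-suffix if and only if $T[j]<T[\SA[i]]$. This is a single comparison, so the type of $\Suf(j)$ is obtained in $O(1)$ time, as claimed.

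For the second case, assume $T[j]=T[\SA[i]]=:c$; then both $\Suf(j)$ and $\Suf(\SA[i])$ start with $c$, and by the same characterization $\Suf(j)$ has the same type as $\Suf(j+1)=\Suf(\SA[i])$. I would prove the implication by contraposition. Suppose $\Suf(j)$ is not an S-suffix; then it is an L-suffix, hence so is $\Suf(\SA[i])$. At the point in the algorithm where this scan runs, the sorted L-suffixes already occupy $\SA_L=\SA[0\ldots n_L-1]$ and $\SA_S=\SA[n_L\ldots n-1]$ contains only S-suffixes, so an L-typed $\Suf(\SA[i])$ forces $\SA[i]\in\SA_L$, i.e. $i<n_L$. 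Since the induced-sorting scan proceeds from right to left, by the time we read position $i$ it has already read every entry of $\SA_S$, in particular every entry of the S-bucket $c$, which is a contiguous block inside $\SA_S$; moreover, by Lemma~\ref{lem:ls} each S-suffix of bucket $c$ is placed while the scan reads the already-placed suffix that immediately follows it in $T$, and all such ``parent'' suffixes are read before the scan reaches the L-suffixes of bucket $c$. Hence all S-suffixes of bucket $c$ are already sorted when we read $\SA[i]$, contradicting the hypothesis, so $\Suf(j)$ must be an S-suffix. Finally, the predicate ``all S-suffixes of bucket $c$ are sorted'' that the algorithm queries is cheap: whenever the scan enters a new bucket, we scan it once, using Observation~\ref{ob:bucket} and Lemma~\ref{lem:numbers} to locate and count the S-part of the bucket inside $\SA_S$, and we compare this count with the number of S-suffixes already placed there; this is $O(1)$ amortized per entry and preserves the linear running time.

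The part I expect to require the most care is the bucket-completeness claim used in the second case --- namely that, under the split $\SA_L/\SA_S$ layout and the two-stage, interior-counter realization of the scan, no S-suffix of bucket $c$ is placed only after the scan has already passed the L-suffixes of bucket $c$. In the classical single-array induced sort this is immediate from Property~\ref{prop:ls} plus the standard scan invariant; here one has to trace how the staged scan alternates over $\SA_L$ and $\SA_S$ and verify that every parent of a bucket-$c$ S-suffix --- whether that parent is itself an S-suffix living in a later S-bucket or an L-suffix living in a later L-bucket --- is consumed before any L-suffix of bucket $c$ is reached.
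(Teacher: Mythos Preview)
Your argument is essentially the paper's: the paper gives no standalone proof of Lemma~\ref{lem:idls}, treating it as the mirror of Lemma~\ref{lem:knowtype}, whose three-line proof just invokes Property~\ref{prop:ls} and the ``every suffix is considered exactly once'' invariant. The paragraph following Lemma~\ref{lem:idls} only supplies the bookkeeping detail (keep a per-bucket remaining-S-count, initialized when the scan enters the bucket using the special symbols already present in $\SA_S$), which you also cover.

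On the point you flag as needing care: your contrapositive is routed through the physical layout (``$i<n_L$, hence every entry of $\SA_S$ has already been read''), which tacitly assumes a literal right-to-left pass over the array. That pass would not correctly realize the induced sort under the $\SA_L\mid\SA_S$ split --- an S-suffix whose parent is L-type is written into $\SA_S$ only when that L-parent in $\SA_L$ is reached, i.e., after the corresponding position in $\SA_S$ has already been passed --- so having ``read every entry of $\SA_S$'' does not by itself mean every S-suffix there has been processed. The lemma is really a statement about the \emph{logical} right-to-left scan (suffixes visited in decreasing lexicographic order), and at that level it follows immediately from Property~\ref{prop:ls}: within bucket $c$ the S-suffixes are visited before the L-suffixes, so if some S-suffix of bucket $c$ is still unplaced we cannot yet be reading an L-suffix of bucket $c$. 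It is cleaner to run your contrapositive directly at that level, exactly as in the classical single-array case, and leave the question of how the two-array realization simulates the logical scan to the surrounding algorithmic description (Lemmas~\ref{lem:inttrick} and~\ref{lem:intpds}).
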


It is not hard to decide whether all S-suffixes in the current scanning bucket $T[\SA[i]]$ are already sorted or not.
One can use an extra variable to denote how many S-suffixes remain in the current scanning bucket $T[\SA[i]]$.
When we begin to scan a new bucket, we scan this bucket once to initialize this variable.
Note that we can do this initialization, since there are special symbols in $\SA_S$ for each bucket after we initialize the buckets in our interior counter trick.

Now, there is only one thing left: how to know the tails of the bucket of S-suffixes in the induced sorting step.
The purpose of the pointer data structure is to indicate the tails of the bucket of S-suffixes.
However, the pointer data structure requires $c_p$ words, where the value of $c_p$ will be specified later.
Thus, we need to divide this induced-sorting step into two stages.
The first stage (Section~\ref{sec:stage1}) is to sort the first $n_S-c_p$ S-suffixes (i.e. the largest $n_S-c_p$ S-suffixes), where our pointer data structure still exists.
The second stage (Section~\ref{sec:stage2}) is to sort the last $c_p$ S-suffixes, where there is no space for the pointer data structure.

\subsubsection{The first stage}
\label{sec:stage1}
\topic{Pointer data structure}
Now, we construct our pointer data structure which supports to find the tails of the buckets in constant time.
We store the pointer data structure in the tail of $\SA_S$, recall that $\SA_S=\SA[n_L,\ldots,n-1]$.
Now, we describe the details.
We divide the S-suffixes of $T$ into $4d$ parts according to their first characters,
and construct the pointer data structure for each part respectively.
The $4d$ parts are divided by $T[j]\in\left[\frac{i|\Sigma|}{4d} +1,\frac{(i+1)|\Sigma|}{4d}\right]$,
for $0\leq i<4d$.
Let $D_i$ denote the pointer data structure of the $i$-th part.
We only show the details how we construct the pointer data structure $D_0$ as follows,
since constructing $D_i$ is similar for $0<i<4d$ (one only need to shift $T[j]$ with $\frac{i|\Sigma|}{4d}$).
\begin{enumerate}[(1)]
         \item First, we let $\SA_S[i]=1$ for all $i\in[1,\frac{|\Sigma|}{4d}]$.
         Then we scan $T$ from right to left.
         For every S-type $T[i]\in [1,\frac{|\Sigma|}{4d}]$,
         we increase $\SA_S[T[i]]$ by one.

         \item
         Then we scan $\SA_S[1 \ldots \frac{|\Sigma|}{4d}]$ from left to right.
         We use a variable $sum$ to count the sum, first initialize $sum =-1$.
         For each $\SA_S[i]$ which is being scanned, first let $sum = sum + \SA_S[i]$, then let $\SA_S[i]= sum$.
         Now, for any S-suffix $\Suf(i)$ satisfying $T[i]\in [1,\frac{|\Sigma|}{4d}]$, $\SA_S[T[i]]-T[i]$ must indicate the tail of bucket $T[i]$ in $\SA_S$.
         Since we want every entry in $\SA_S[1\ldots\frac{|\Sigma|}{4d}]$ to be distinct,
         we initialize $\SA_S[i]=1$ for all $i\in[1,\frac{|\Sigma|}{4d}]$ in Step (1).
         Hence the tail of bucket $T[i]$ is $\SA_S[T[i]]-T[i]$.

         \item
         Finally, we construct $D_0$ for $\SA_S[1\ldots\frac{|\Sigma|}{4d}]$ according to Lemma~\ref{lem:rointselect}.
         $D_0$ uses at most $c(n+\frac{|\Sigma|}{4d})/\log n$ words space.
         We store $D_0$ in the tail of $\SA_S$ (i.e., $\SA_S[n_S-c(n+\frac{|\Sigma|}{4d})/\log n\ldots n_S-1]$).
         $D_0$ supports to find the tail of the bucket of any S-suffix $\Suf(i)$ satisfying $T[i]\in [1,\frac{|\Sigma|}{4d}]$ in constant time.
       \end{enumerate}

\begin{lemma}\label{lem:rointselect}
	For any $m$ distinct integers $0\leq a_0<a_1\ldots<a_{m-1}\leq n$, where $m\leq n$ and $n>1024$,  one can construct a data structure using
	linear time (i.e., $O(n)$ time) and
	at most $cn/\log n$ words, where $1<c<2$,
	such that
	each query to the $i$-th smallest integer $a_i$ ($\select(i)$) can be answered in constant time.
\end{lemma}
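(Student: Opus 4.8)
The plan is to reduce the problem to constant‑time $\select_1$ on a single characteristic bitvector. I would build a bitvector $B[0\ldots n]$ of length $n+1$, where $B[k]=1$ iff $k=a_i$ for some $i$; then $a_i$ is exactly the position of the $(i{+}1)$‑st one in $B$, so answering $\select(i)$ amounts to answering $\select_1(i{+}1)$ on $B$. Packed into machine words (each of $\lceil\log n\rceil$ bits), $B$ occupies $\lceil (n+1)/\lceil\log n\rceil\rceil\le n/\log n+O(1)$ words. I would construct $B$ in $O(n)$ time: first clear it word by word in $O(n/\log n)$ time, then scan $a_0<a_1<\cdots<a_{m-1}$ and set bit $a_i$ for each $i$, which is $O(m)=O(n)$ word operations. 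After this the original list is no longer needed — $B$ together with the index described next \emph{is} the data structure.

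On top of $B$ I would put a constant‑time $\select_1$ index. The clean route is to invoke the classical succinct result that a length‑$N$ bitvector admits an $O(1)$‑time $\select_1$ structure occupying $o(N)$ additional bits and built in $O(N)$ time (Clark; Jacobson; Munro; Raman–Raman–Rao). If a self‑contained argument is preferred, I would use the standard three‑level scheme: group the ones of $B$ into superblocks of $\lceil\log^2 N\rceil$ consecutive ones and store each superblock's starting position in $B$, costing $O(N/\log N)$ bits; call a superblock \emph{sparse} if it spans $\ge\log^4 N$ bits of $B$ and in that case store its $\log^2 N$ positions explicitly (there are only $O(N/\log^4 N)$ sparse superblocks, so $O(N/\log N)$ bits in total); and inside a \emph{dense} superblock recurse once more, bottoming out at tiny blocks of width $\le\tfrac12\log N$ whose answers are read from one global table indexed by $\tfrac12\log N$ raw bits of $B$ — that table has $O(\sqrt N\,\mathrm{polylog}\,N)=o(N)$ bits. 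Every level is produced by a constant number of left‑to‑right scans of $B$, so the index is built in $O(N)=O(n)$ time and its overhead is $o(n)$ bits.

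For the space bookkeeping: the bitvector is $n/\log n+O(1)$ words and the $\select_1$ index is $o(n)$ bits, i.e. $o(n/\log n)$ words, so the whole structure is $(1+o(1))\,n/\log n$ words. This is below $2n/\log n$ words once $n$ is past a fixed threshold, and the hypotheses $n>1024$ (so $\log n\ge 10$) and $m\le n$ are exactly what is needed to absorb the ceilings and the lower‑order terms of the succinct index into the available slack, so the total is at most $cn/\log n$ words for a constant $c$ with $1<c<2$ as in the statement. A query $\select(i)$ is then $\select_1(i{+}1)$ on $B$, answered in $O(1)$ time.

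I expect the main obstacle to be the $\select_1$ index itself: achieving constant query time with only $o(N)$ extra bits is precisely the classical succinct‑data‑structure construction, and its innermost level — the global lookup table addressed by raw bits of $B$ — is the delicate part; the secondary nuisance is pinning the constants so that the combined space genuinely lies in $(1,2)\cdot n/\log n$ already at $n$ as small as $1024$.
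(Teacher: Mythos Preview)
Your proposal is correct and follows essentially the same approach as the paper: build the characteristic bitvector $B[0\ldots n]$ and layer on a classical succinct $\select_1$ index (the paper cites Jacobson and Clark as you do, quoting the explicit $3n/\log\log n + n^{1/4}(\tfrac14\log n\log\log n+\log\log n)$-bit bound for the auxiliary structure). Your space accounting is in fact more explicit than the paper's, but the argument is the same.
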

\begin{proof}
We first construct a bitmap $B[0\ldots n]$.
we initialize $B$ by $B[a_i] = 1$ for all $i\in [0, m-1]$.
We need a data structure to
support query $\select(i)$, which asks for
the index of $i$-th $1$ in $B$.
There is an auxiliary data structure using $O(n/\log\log n)$ bits
(more precisely $3n/\log\log n + n^\frac{1}{4}(\frac{1}{4}\log n\log\log n+\log\log n)$)
which can be constructed in $O(n)$ time
to support constant time $\select$ query in $B$~\cite{jacobson1989space,clark1996compact,NPsea12.1}.
Converting bits to words, we can see that
the data structure uses at most $cn/\log n$ words
(for $1<c<2$ if $n>1024$).
\end{proof}

After this step, the pointer data structure (i.e. $D_i$ for all $0\le i<4d$) is stored in $\SA_S[n_S-c_p\ldots n_S-1]$,
where $c_p=\lceil 4d\cdot(c(n+\frac{|\Sigma|}{4d})/\log n) \rceil
\leq \lceil 5dcn/\log n \rceil$.
Recall that the sorted $n_L$ L-suffixes are stored in $\SA_L$ and $n_S \ge n_L$.
Thus the empty space (i.e., $\SA_S=\SA[n_L,\ldots,n-1]$) is at least half of the space of $\SA$.
It is enough to construct and store the pointer data structure, i.e.,
$\frac{|\Sigma|}{4d}\le \frac{dn}{4d}=\frac{n}{4}$ for constructing it and $\frac{n}{4}$ for storing it (which uses $c_p$ words).
Note that we assume that $n>4c_p$, otherwise it is easy to solve since $n$ is constant.
Hence the pointer data structure is constructed in linear time and supports to find the tail of the bucket of any S-suffix in constant time.
\begin{lemma}\label{lem:intpds}
We can construct the pointer data structure in linear time, and this pointer data structure uses at most $c_p$ words and
can support to find the bucket tail of any S-suffix in constant time.
\end{lemma}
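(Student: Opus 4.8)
The plan is to check the three construction steps described just above the lemma one part at a time, verify that each runs in $O(n)$ time, then aggregate the $4d$ parts to get the space bound $c_p$ and an $O(1)$ query, and finally confirm that everything actually fits inside $\SA$.

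First I would treat a single part, say part $0$, consisting of the S-suffixes whose first character lies in $[1,\frac{|\Sigma|}{4d}]$. Step~(1) is a single right-to-left scan of $T$: we first set $\SA_S[j]=1$ for all $j\in[1,\frac{|\Sigma|}{4d}]$ and then, for each S-type $T[i]$ in this range, increment $\SA_S[T[i]]$; the type of $T[i]$ is decided on the fly in $O(1)$ time by maintaining one boolean for the type of the previously scanned character, exactly as in Section~\ref{secint:renamet}. Step~(2) is a left-to-right prefix-sum pass over the $\frac{|\Sigma|}{4d}\le \frac{n}{4}$ entries. The key invariant to record here is that after Step~(2) the values $\SA_S[1],\dots,\SA_S[\frac{|\Sigma|}{4d}]$ are \emph{strictly} increasing — because every bucket received at least one unit from the initialization to $1$, the cumulative sums are distinct — and for every S-suffix $\Suf(i)$ with $T[i]$ in this part, $\SA_S[T[i]]-T[i]$ is the index in $\SA_S$ of the tail of bucket $T[i]$. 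Step~(3) then feeds these $\frac{|\Sigma|}{4d}$ distinct integers to Lemma~\ref{lem:rointselect}, building $D_0$ in $O(n)$ time using at most $c(n+\frac{|\Sigma|}{4d})/\log n$ words; since the $\SA_S[j]$ are sorted, $\select(T[i])$ returns $\SA_S[T[i]]$ in $O(1)$ time, and subtracting $T[i]$ yields the bucket tail.

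Next I would aggregate over the $4d$ parts. Constructing $D_i$ for $0\le i<4d$ is identical after shifting the character range by $\frac{i|\Sigma|}{4d}$, so the total construction time is $4d\cdot O(n)=O(n)$, and the total stored space is $c_p=\lceil 4d\cdot(c(n+\frac{|\Sigma|}{4d})/\log n)\rceil\le \lceil 5dcn/\log n\rceil$ words, using $|\Sigma|\le dn$. A bucket-tail query for $\Suf(i)$ first determines in $O(1)$ time (one division) which part $T[i]$ belongs to and then queries the corresponding $D_i$, so the whole query is $O(1)$.

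The one point needing genuine care — and where I expect the argument to be most delicate — is showing there is enough room in $\SA$ for both the temporary counting array of Steps~(1)--(2) and the final stored structure. Here I would invoke the standing assumption $n_L\le n_S$, so $\SA_S=\SA[n_L\ldots n-1]$ has length at least $n/2$: the temporary counting array needs only $\frac{|\Sigma|}{4d}\le \frac{n}{4}$ entries at the front of $\SA_S$, and the stored pointer data structure needs $c_p\le \frac{n}{4}$ words at the tail of $\SA_S$ (under $n>4c_p$; if $n\le 4c_p$ then $n$ is constant and the whole problem is trivial). Since each part is processed to completion before the next begins, the front block is reused safely across parts and never collides with the $D_i$'s written at the tail, and once all $D_i$ are in place the front of $\SA_S$ is free again for the subsequent induced-sorting stage. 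Assembling the time bound, the space bound, and the $O(1)$ query gives the lemma.
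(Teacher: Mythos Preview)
Your construction-time, space-bound, and fitting-inside-$\SA$ arguments are fine and in fact more detailed than the paper's own proof, which essentially takes those parts as evident from the preceding description and only elaborates on the query.

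There is, however, a real gap in your query argument, and it is precisely the one point the paper's proof addresses. For part~$0$ you correctly explain that $\select(T[i])$ recovers $\SA_S[T[i]]$ and hence the bucket tail. But when you say the other parts are ``identical after shifting the character range'' and that a query ``determines which part $T[i]$ belongs to and then queries the corresponding $D_i$'', you are omitting a step: the prefix sums inside $D_j$ count only the S-suffixes whose first character falls in part~$j$, so $\select$ on $D_j$ returns a position \emph{relative to part~$j$'s S-suffixes}, not a global index in $\SA_S$. To get the actual bucket tail you must add the total number of S-suffixes in parts $0,\dots,j-1$. The paper handles this by recording, during construction, $4d$ values $m_1,\dots,m_{4d}$ giving the number of S-suffixes in each part (read off from the final value of the variable $sum$ in Step~(2)), and then combining the $\select$ result on $D_j$ with $\sum_{k<j} m_k$. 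Since $4d=O(1)$, storing and summing these offsets is $O(1)$ work, so the fix is easy, but without it your query does not return the correct global tail for $j>0$.
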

\begin{proof}
We only need to specify the query time constant.
We use $4d$ values, i.e.,
$m_1,\ldots,m_{4d}$,
which denote the number of S-suffixes in each interval, respectively
(They can be obtained from the variable $sum$ which is computed in the final stage of the Step (2)).  Now, if we want to find the bucket tail of an S-suffix $\Suf(i)$, we first compare $T[i]$ with $\frac{i|\Sigma|}{4d}$ (for $0\le i<4d$) to see
which pointer data structure
$T[i]$ belongs to. Assume that it belongs to $D_j$.
Then we do a $\select(T[i]-(j-1)\frac{|\Sigma|}{4d})$ query on $D_j$, and combine the $\select$ result with the corresponding $m_k$ $(k<j)$ to identify the tail of bucket $T[i]$. All the above operations can be done in constant time.
\end{proof}

Now according to Lemma~\ref{lem:inttrick}, \ref{lem:idls} and \ref{lem:intpds}, we can sort the first largest $n_S-c_p$ S-suffixes from the sorted L-suffixes which stored in $\SA_L$ using the induced sorting step with the interior counter trick and the pointer data structure which stored in $\SA[n-c_p\ldots n-1]$, and we store the indices of the sorted $n_S-c_p$ S-suffixes in $\SA_S[0,\ldots,n_S-c_p-1]$.

\subsubsection{The second stage}
\label{sec:stage2}

Now, we only need to show how to sort the last $c_p$ S-suffixes which is occupied by our pointer data structure.
First, we specify that how to identify whether an S-suffix belongs to the first $n_S-c_p$ S-suffixes or not.
We can scan $T$ once to find the character of $n_S-c_p$ largest S-suffix using the pointer data structure and let $ch$ denote this character.
If the tail of the bucket $ch$ is exactly $c_p-1$ in $\SA_S$, then we compare the beginning character of the S-suffix with $ch$ to identify which part it belongs to.
Otherwise, the $n_S-c_p$ largest S-suffix belongs to bucket $ch$, and it will be stored in $\SA_S[c_p-1]$.
We only need two variables to indicate whether the S-suffix belongs to the first $n_S-c_p$ S-suffixes or not.
One is the number to denote the tail of bucket $ch$, and the other is also an integer number which denotes the gap between the tail of bucket $ch$ and $c_p-1$.

Now, we describe the details to sort the last $c_p$ S-suffixes.
First, we move these largest $n_S-c_p$ S-suffixes to the tail of $\SA_S$, i.e., $\SA_S[c_p,n_S-1]$.
Then we scan the $T$ from right to left to place the smallest $c_p$ S-suffixes into $\SA_S[0,c_p-1]$.
Now, we use merge sort with the in-place linear time merging algorithm~\cite{Salowe1987} to sort these $c_p$ S-suffixes, the sorting key for each S-suffix is its beginning character.
After this sorting step, these $c_p$ S-suffixes have been placed in their corresponding buckets in $\SA_S[0,c_p-1]$.
Note that we can use the same sorting step (which we used for sorting the first $n_S-c_p$ S-suffixes) to sort the last $c_p$ S-suffixes without the pointer data structure.
Now, the key point is that we can use the binary search (instead of the pointer data structure) to find the tails of the bucket for these $c_p$ S-suffixes,
since $c_p\leq \lceil 5dcn/\log n \rceil$ is small enough (i.e. $c_p\log n=O(n)$) to maintain that the time complexity of our algorithm is $O(n)$.

Using the binary search to extend interior counter trick is not very difficult, one can see the details in Section~\ref{sec:rogel4} which we induced sort all L-suffixes from the sorted S-suffixes for the general alphabets (Note that the optimal time is $O(n\log n)$ for the general alphabets case.
Thus, we directly use the binary search to extend interior counter trick and
do not use the pointer data structure in that case.).

After this step, all $n_S$ S-suffixes are sorted in $\SA_S$.
Now we have all sorted L-suffixes in $\SA_L$ (i.e., $\SA[0\ldots n_L-1]$) and all sorted S-suffixes in $\SA_S$ (i.e., $\SA[n_L\ldots n-1]$).
We use the stable, in-place, linear time merging algorithm~\cite{Salowe1987} to merge the ordered $\SA_L$ and $\SA_S$ (the merging key for $\SA[i]$ is $T[\SA[i]]$, i.e., the first character of \suf{\SA[i]}). After this merging step, all suffixes of $T$ have be sorted in $\SA[0\ldots n-1]$.

Finally, we obtain the following theorem for our optimal in-place algorithm.
\begin{theorem}[Main Theorem]
\label{thm:roint}
Our Algorithm takes $O(n)$ time and $O(1)$ workspace to compute the suffix array of string $T$ over integer alphabets $\Sigma$, where $T$ is read-only and $|\Sigma|=O(n)$.
\end{theorem}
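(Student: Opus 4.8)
The plan is to assemble the theorem from the sequence of lemmas already established for the six-step framework of Section~\ref{sec:roint}, tracking time and workspace at each step and across the recursion. First I would verify the overall time recursion. Steps (1)--(4) of the framework each take $O(n)$ time by Lemma on counting sort (Section~\ref{sec:roint1}), the construction lemma for $T_1$ (Section~\ref{sec:roint3}), Lemma~\ref{lem:rosortt1}, and the induced-sorting analysis of Section~\ref{sec:roint2}; the only recursive call is on $T_1$, whose length is $n_1\leq n/2$ since no two LMS-characters (nor two LML-characters) are adjacent. Hence $T(n)=T(n/2)+O(n)=O(n)$. One subtlety to address: the second stage (Section~\ref{sec:stage2}) uses in-place merge sort on $c_p$ items and binary search to locate bucket tails, costing $O(c_p\log n)$; since $c_p\leq\lceil 5dcn/\log n\rceil$, this is $O(n)$, so it does not affect the recursion. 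The $O(1)$-workspace claim for the base case (when $n\le 4c_p$, or when all characters of $T_1$ are unique) is immediate.

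Next I would account for the workspace. The key point is that \emph{no} step allocates more than $O(1)$ words of genuine workspace: every auxiliary structure (the counting array in Section~\ref{sec:roint1}, the ranks used to build $T_1$, the pointer data structure $D_0,\ldots,D_{4d-1}$ of Lemma~\ref{lem:intpds}, the interior counters of Lemma~\ref{lem:inttrick}) lives \emph{inside} $\SA$, reusing the empty or not-yet-final portions of the output array. I would make explicit why the space inside $\SA$ suffices at each step: in Section~\ref{sec:roint1}, each of the $2d$ counting-sort rounds needs a counting array of size at most $\frac{|\Sigma|}{2d}\le\frac{n}{2}$, which fits in $\SA[0\ldots n/2]$ while the sorted output accumulates in the upper half; in Section~\ref{sec:roint2}, after the sorted $n_L\le n/2$ L-suffixes are fixed in $\SA_L$, the region $\SA_S=\SA[n_L\ldots n-1]$ has at least $n/2$ free slots, enough to build the pointer data structure ($\frac{|\Sigma|}{4d}\le\frac{n}{4}$ scratch, $c_p\le\frac{n}{4}$ storage) and to run the interior-counter trick. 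The assumption $n_L\le n_S$ (taken WLOG, with the LML-symmetric case handled by swapping roles) is exactly what guarantees this half-space, and I would note this is the reason we compare $n_L$ and $n_S$ at the outset, unlike prior algorithms that could afford a $|\Sigma|$-word bucket array.

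Then I would handle the recursion's space bookkeeping, following the argument already given in the proof of Theorem~\ref{thm:int}: we do not store the sequence $n_1,n_2,\ldots$ of subproblem sizes; a single global variable holds the current level's size, and $n_{i-1}$ is recovered on return by scanning $\SA$ from its end back to where $T_i$ was stored. The $O(1)$ workspace is reused across levels since at level $i$ the subproblem $T_i$ occupies a prefix of $\SA$ and its output a suffix, leaving everything else available. Finally, the correctness of the output follows by composing the correctness statements: Lemma on LMS-character counting sort, the correctness of induced-sorting LMS-substrings (same as Step~(4)), the rank-renaming correctness (Section~\ref{sec:roint3}), Lemma~\ref{lem:rosortt1} for obtaining sorted LMS-suffixes from the recursive solution, Lemma~\ref{lem:lmsl} (inducing L-suffixes from sorted LMS-suffixes) together with Lemma~\ref{lem:inttrick}, \ref{lem:idls}, \ref{lem:intpds} for the in-place realization, and Lemma~\ref{lem:ls} (inducing S-suffixes from L-suffixes), closed by the stable in-place linear-time merge of $\SA_L$ and $\SA_S$~\cite{Salowe1987}. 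The special symbols are handled by the $O(1)$-overhead encoding of Appendix~\ref{app:special}, so they cost nothing asymptotically.

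The main obstacle I anticipate is the second stage (Section~\ref{sec:stage2}): once the pointer data structure has been overwritten by the last $c_p$ sorted S-suffixes, one must still place those $c_p$ suffixes into their correct buckets without that structure, and must also be able to decide, during the first stage, whether a given S-suffix is among the first $n_S-c_p$ or the last $c_p$ (handled by the two auxiliary integers describing bucket $ch$ and its offset to $c_p-1$). Getting the interaction of the two stages right --- in particular that the interior-counter trick still functions when bucket tails are found by binary search rather than by $\select$ queries, and that the residual $O(c_p\log n)=O(n)$ cost does not blow up the recursion --- is the delicate part; the remaining steps are routine once their lemmas are in hand.
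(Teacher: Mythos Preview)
Your proposal is correct and follows essentially the same approach as the paper. The paper itself does not give an explicit proof environment for Theorem~\ref{thm:roint}; it simply states the theorem as the culmination of the preceding lemmas and discussion in Section~\ref{sec:roint}, and your plan is precisely a careful assembly of those pieces (the $T(n)=T(n/2)+O(n)$ recursion, the $n_1\le n/2$ bound, the $O(c_p\log n)=O(n)$ accounting for the second stage, the reuse of $\SA$ for all auxiliary structures, and the global-variable trick for $n_i$ borrowed from the proof of Theorem~\ref{thm:int}).
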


\section{Suffix Sorting for Read-only General Alphabets}
\label{sec:gel}

\subsection{Framework}

The framework of our algorithm for read-only general alphabets (i.e., the only operations allowed on the characters of $T$ (read-only) are comparisons) is described as follows:
\begin{enumerate}
  \item If $n_S \leq n_L$ (i.e., the number of S-suffixes is no larger than that of L-suffixes), then
    \begin{enumerate}[(1)]
      \item (Section \ref{sec:rogel1}) Sort all S-substrings of $T$ using mergesort directly.\\
          We use mergesort to sort all S-substring of $T$ in $\SA[n-n_S\ldots n-1]$. In the merging step of mergesort, we use $\SA[0\ldots n_S-1]$ as the temporary space. After this step, all S-substrings should be in the lexicographical order stored in $\SA[n-n_S\ldots n-1]$.
      \item (Section \ref{sec:rogel2}) Construct the reduced problem $T_1$ from the sorted S-substrings.\\
          We construct the reduced problem $T_1$ using the ranks of all sorted S-substrings which are stored in $\SA[n-n_S\ldots n-1]$. The ranks of S-substrings are corresponding to the lexicographical order of the sorted S-substrings.  After this step, we get the reduced problem $T_1$ in $\SA[0\ldots n_S-1]$.
      \item (Section \ref{sec:rogel3}) Sort the S-suffixes by solving $T_1$ recursively.\\
          We sort $T_1=\SA[0\ldots n_S-1]$ recursively.
          In the recursive step, we use $\SA_1=\SA[n-n_S\ldots n-1]$ as the output space for $T_1$.
          Then we use the suffix array of $T_1$ (i.e. $\SA_1$) to place all indices of the sorted S-suffixes of $T$ into $\SA[n-n_S\ldots n-1]$.
      \item (Section \ref{sec:rogel4}) Induced sort all suffixes from the sorted S-suffixes.\\
          First, we place all indices of S-suffixes in their final positions in $\SA$ by using mergesort together with a stable, in-place, linear time merging algorithm~\cite{Salowe1987}. Then we extend our \emph{interior counter trick} to sort all L-suffixes from the sorted S-suffixes. Finally, all indices of the sorted suffixes of $T$ are stored in $\SA[0\ldots n-1]$.
    \end{enumerate}
  \item Otherwise, execute the above steps switching the roles of $L$ and $S$.
\end{enumerate}

Note that our algorithm is very simple since the optimal time complexity in this case is $O(n\log n)$ instead of $O(n)$.
Moreover, our algorithm does not make any bit operations rather than previous algorithms, e.g., \cite{franceschini2007place}.

Without loss of generality, we assume that $n_S\leq n_L$.
Now, we describe the details of our in-place algorithm in the following sections.

\eat{
\begin{algorithm}[!htb]
\caption{Optimal in-place suffix sorting algorithm for read-only general alphabets}
\label{alg:3}
\begin{algorithmic}[1]
\REQUIRE $T$;
\IF{$n_S\leq n_L$}
\STATE Use mergesort to sort all S-substrings of $T$ and store them in $\SA[n-n_S\ldots n-1]$, using $\SA[0\ldots n/2]$ as the temporary space;
\STATE Construct the reduced recursive problem $T_1$ using the ranks of the sorted S-substrings and store $T_1$ in $\SA[0\ldots n_S-1]$, using naive comparison to obtain the ranks;
\STATE Let $\SA_1$ denote $\SA[n-n_S\ldots n-1]$ which is reused as the output space for problem $T_1$;
\IF{all characters in $T_1$ are unique}
\STATE Directly construct $\SA_1$ for $T_1$;
\ELSE \STATE Obtain $\SA_1$ by solving $T_1$ recursively, reusing the space of $\SA$ by induction;
\ENDIF
\STATE Sort all S-suffixes of $T$ from $\SA_1$ and store them in $\SA[n-n_S\ldots n-1]$, simply replacing $\SA_1$ with the indices of S-suffixes;
\STATE Induced sort all suffixes of $T$ from the sorted S-suffixes, using our extended interior counter trick;
\ELSE \STATE Execute above steps (i.e., Line 2--11) switching the role of S with L;
\ENDIF
\RETURN $\SA$
\end{algorithmic}
\end{algorithm}
}

\subsection{Sort all S-substrings of $T$}
\label{sec:rogel1}
In this section, we sort all S-substrings of $T$
as follows:

\begin{enumerate}
  \item First, we scan $T$ from right to left
  and place all indices of S-type characters into $\SA[n-n_S\ldots n-1]$. Note that $n_S \leq n/2$ since we assume that $n_S \leq n_L$.
  \item Then, we sort $\SA[n-n_S\ldots n-1]$ using mergesort (the sorting key for $\SA[i]$ is the S-substring of $T$ which begins at $T[\SA[i]]$). We use $\SA[0\ldots n_S-1]$ as the temporary space for mergesort.
  To compare two keys (i.e., two S-substrings) in mergesort, we simply do the straightforward character-wise comparisons.
\end{enumerate}
After the above two steps, all the S-substring have been sorted in $\SA[n-n_S\ldots n-1]$. We have the following lemma.

\begin{lemma}
	\label{lm:length}
	We can sort all S-substrings using $O(n\log n)$ time and $O(1)$ workspace.
\end{lemma}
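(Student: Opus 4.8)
The plan is to verify the two claimed resource bounds separately, since the algorithm of this section has just two components: placing the $n_S$ indices of S-type characters into the tail region $\SA[n-n_S\ldots n-1]$, and then sorting that region by mergesort with character-wise key comparisons.

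First I would handle the workspace. Scanning $T$ from right to left to detect S-type characters requires only a single boolean variable recording the type of the previously scanned character (exactly as in the renaming step of Section~\ref{secint:renamet}: if $T[i]\neq T[i+1]$ the type is immediate, otherwise it equals the type of $T[i+1]$), so Step~1 uses $O(1)$ workspace, and it writes only into $\SA[n-n_S\ldots n-1]$, which is legitimate output space. Since $n_S\le n_L$ and $n_S+n_L=n$, we have $n_S\le n/2$, so the two halves $\SA[0\ldots n_S-1]$ and $\SA[n-n_S\ldots n-1]$ are disjoint and Step~2 can use $\SA[0\ldots n_S-1]$ as scratch for the merge buffer; thus mergesort itself needs only $O(1)$ additional workspace (a few indices). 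No bit operations and no extra arrays are used.

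Next I would bound the time. Step~1 is a single linear scan, $O(n)$. For Step~2, mergesort on $n_S$ elements performs $O(n_S\log n_S)=O(n\log n)$ key comparisons and element moves; the only nontrivial point is the cost of a single key comparison, since a key is an S-substring and these can be long. Here the key observation is that an S-substring $T[i\ldots j]$ has length $j-i+1$, where $j$ is determined by the S-substring structure, and — crucially — consecutive S-substrings in $T$ overlap only in a single endpoint character. Hence $\sum (\text{length of S-substring starting at }T[i]) \le 2n$, i.e., the total length of all S-substrings is $O(n)$, so a single S-substring has length $O(n)$ and a character-wise comparison costs $O(n)$ in the worst case, giving $O(n_S\log n_S)\cdot O(n)$ — which is too weak if taken naively.

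The main obstacle is therefore to argue the comparison cost is genuinely $O(\ell)$ where $\ell$ is the (short) length of the two S-substrings being compared, and that the overall comparison cost stays $O(n\log n)$. The honest way to see it: in mergesort each level compares each element against at most one neighbor, and a comparison of two S-substrings stops at the first differing character (or when one substring ends — possible since the sentinel $T[n-1]=0$ is unique and terminates its substring), so it costs $O(\min$ of the two lengths$)\le O(\text{individual S-substring length})$; summing the lengths of all S-substrings is $O(n)$ per mergesort level and there are $O(\log n)$ levels, so the total comparison time is $O(n\log n)$. Combined with the $O(n_S\log n_S)=O(n\log n)$ bookkeeping moves and the $O(n)$ cost of Step~1, the whole step runs in $O(n\log n)$ time and $O(1)$ workspace, proving Lemma~\ref{lm:length}. (If one wants the cleaner bound that does not rely on charging comparisons to short substrings, one can note that the S-substrings sorted here are exactly the keys later renamed by rank, and replacing full-length comparison by a bounded comparison only affects correctness if two distinct S-substrings agree on a long prefix — impossible, again by the sentinel — so the $O(n)$-per-level argument is valid.)
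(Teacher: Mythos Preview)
Your argument is essentially the paper's: $O(1)$ workspace because the two halves of $\SA$ suffice for mergesort's buffer, and $O(n\log n)$ time because each of the $O(\log n)$ merge levels costs $O(n)$, which in turn follows from the total length of all S-substrings being at most $2n$. One small imprecision worth tightening: the claim ``each level compares each element against at most one neighbor'' is literally false (in a merge, one element can be compared many times). The correct accounting---which both you and the paper are implicitly relying on---is that each comparison advances at least one pointer, the comparison cost is at most the length of the advancing element's S-substring, and each element advances exactly once per level; hence the per-level cost is bounded by the sum of all S-substring lengths, which is $O(n)$. With that fix your proof matches the paper's.
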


\begin{proof}
Step 1 does not need any extra space and costs linear time, because we can compute the type of each character in $O(1)$ time during the right-to-left scan of $T$.
Moreover, we know mergesort needs linear workspace.
Hence, it is sufficient to use $\SA[0\ldots n_S-1]$ as the workspace for mergesort.
For Step 2, it suffices to show that the time spent for comparison process in any recursive level of mergesort (there are $O(\log n)$ recursive levels) can be bounded by $O(n)$. In any level, each S-substring is compared to exactly one other S-substring. The length of the S-substrings can be obtained according to Observation \ref{ob:nexts}. Recall that each character of $T$ is scanned at most twice since it only be scanned when identifying the length of its adjacent predecessor S-substring and itself. Thus the comparison process takes $O(n)$ time in any level because the total length of all S-substrings is less than $2n$.
\end{proof}

\subsection{Construct the reduced problem $T_1$ from the sorted S-substrings}
\label{sec:rogel2}
In this section, we construct the reduced problem $T_1$ by renaming the sorted S-substrings. After Section \ref{sec:rogel1}, all S-substrings have been sorted in
$\SA[n-n_S\ldots n-1]$.
The construction of $T_1$ consists of the following
two steps:
\begin{enumerate}
  \item
  We rename the S-substrings by their ranks.
  First let the rank of $\SA[n-n_S]$ be 0. We scan $\SA[n-n_S+1\ldots n-1]$ from left to right. When scanning $\SA[i]$,
  we compare S-substring beginning with $T[\SA[i]]$ and S-substring beginning with $T[\SA[i-1]]$.
  If they are different, let the rank of $\SA[i]$ be the rank of $\SA[i-1]$ plus one. Otherwise, the rank of $\SA[i]$ is the same as that of $\SA[i-1]$.
  We store the rank of $\SA[i]$
  in $\SA[i-n+n_S]$.
  \item Next, we use the heapsort to sort $\SA[n-n_S\ldots n-1]$
  (the sorting key for $\SA[i]$ is $\SA[i]$ itself).
  When we exchange two entries
  (say, $\SA[i]$ and $\SA[j]$, $i,j \in[n-n_S\ldots n-1]$)
  in $\SA[n-n_S\ldots n-1]$
  during heapsort,
  we also exchange the corresponding two entries
  (i.e., $\SA[i-n+n_S]$ and $\SA[j-n+n_S]$)
  in $\SA[0\ldots n_S-1]$.
  Note that we use heapsort here since it is in-place, so we do not need any extra space.
\end{enumerate}

After the above two steps, we get the reduced problem $T_1$ in $\SA[0\ldots n_S-1]$.

\begin{lemma}
$T_1$ can be constructed in $O(n\log n)$ time and $O(1)$ workspace.
\end{lemma}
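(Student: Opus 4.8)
The plan is to verify the two resource bounds for the two-step procedure of Section~\ref{sec:rogel2}; its correctness is immediate from the construction, since after Step~1 the region $\SA[0\ldots n_S-1]$ holds, in sorted-S-substring order, the rank of each S-substring, and Step~2 permutes this region into the left-to-right order of the corresponding S-type positions of $T$, which is precisely $T_1$.

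\emph{Workspace.} In Step~1 the rank of $\SA[i]$ is written into $\SA[i-n+n_S]$; for $i\in[n-n_S\ldots n-1]$ the target lies in $\SA[0\ldots n_S-1]$, which is disjoint from the source region $\SA[n-n_S\ldots n-1]$ because $n_S\le n/2$ (we assumed $n_S\le n_L$). The only scratch memory is a constant number of index variables used to walk two S-substrings during a character-wise comparison. Step~2 uses heapsort, which is an in-place comparison sort; carrying the companion entry of $\SA[0\ldots n_S-1]$ along with each swap in $\SA[n-n_S\ldots n-1]$ adds only a constant number of extra assignments per swap and no extra memory. Hence the construction uses $O(1)$ workspace.

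\emph{Running time.} Step~2 is the easy half: heapsort on $n_S$ keys (the key of $\SA[i]$ being the integer $\SA[i]$ itself, compared in $O(1)$ time) performs $O(n_S\log n_S)$ comparisons and swaps, each now $O(1)$, for a total of $O(n_S\log n_S)=O(n\log n)$ since $n_S\le n$. For Step~1 I would use the usual amortization: although a single S-substring can be long, each S-substring takes part in at most two of the $n_S-1$ adjacent-pair comparisons (once as the ``$\SA[i]$'' side, once as the ``$\SA[i-1]$'' side), and within a comparison we read characters only up to the first mismatch or the end of the shorter S-substring, so the total number of character reads is at most twice $\sum(\text{S-substring lengths})<2n$, i.e. $O(n)$ — here we use that consecutive S-substrings overlap in exactly one character. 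Finding where each S-substring ends during the scan is done via Observation~\ref{ob:nexts}, which, exactly as in the proofs of Lemma~\ref{lem:ct1} and Lemma~\ref{lm:length}, touches each character of $T$ at most twice. Thus Step~1 runs in $O(n)$ time, and the whole construction in $O(n)+O(n\log n)=O(n\log n)$ time.

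The only mildly delicate point is this amortization in Step~1: one must ensure that the bookkeeping for locating S-substring boundaries via Observation~\ref{ob:nexts} reuses scanning work across consecutive comparisons rather than rescanning long runs of $T$, so that the overhead stays $O(n)$ rather than blowing up. This is handled identically to the proof of Lemma~\ref{lm:length}, and the lemma follows.
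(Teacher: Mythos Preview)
Your proof is correct and follows essentially the same approach as the paper's: Step~1 costs $O(n)$ by the same amortization argument used in Lemma~\ref{lm:length} (each S-substring participates in at most two adjacent-pair comparisons, total S-substring length below $2n$), and Step~2 costs $O(n\log n)$ since heapsort is in-place with $O(1)$-time key comparisons. Your write-up is in fact more detailed than the paper's, which simply cites Lemma~\ref{lm:length} for Step~1 and calls Step~2 obvious.
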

\begin{proof}
In Step 1, each S-substring beginning with $T[\SA[i]]$ is compared with S-substring beginning with $T[\SA[i+1]]$.
So each S-substring only participates in two comparisons. Now the argument is similar to a comparison process in a recursive level of mergesort in Lemma \ref{lm:length}, thus it costs linear time.
Obviously, Step 2 takes $O(n\log n)$ time and $O(1)$ workspace.
\end{proof}

\subsection{Sort the S-suffixes by solving $T_1$ recursively}
\label{sec:rogel3}
In this section, we solve $T_1=\SA[0\ldots n_S-1]$ recursively to obtain
the order of all S-suffixes in $\SA[n-n_S\ldots n-1]$. For the recursive step, we use $\SA_1=\SA[n-n_S\ldots n-1]$ as the output space for $T_1$.
After the recursive call, $\SA_1$ stores the suffix array of $T_1$.
We need to restore their names back to the indices of S-suffixes in $T$ they represented.
This step can be done as follows.

\begin{enumerate}
  \item First, we scan $T$ from right to left.
  We maintain a counter $sum$ for the number of S-type characters we have scanned so far.
  Initially $sum$ is 0. If $T[i]$ is S-type, we increase $sum$ by 1 and
  place $\Suf(i)$ into $\SA[n_S-sum]$ (i.e., let $\SA[n_S-sum]\leftarrow i$). Now $\SA[0\ldots n_S-1]$ stores the indices of all S-suffixes of $T$.
  \item Then for $i\in[n-n_S,n-1]$, let $\SA[i]\leftarrow \SA[\SA[i]]$.
\end{enumerate}

Now, we have obtained all S-suffixes in the lexicographical order in $\SA[n-n_S\ldots n-1]$.

\subsection{Induced sort all suffixes of $T$}
\label{sec:rogel4}

From Section \ref{sec:rogel3}, we have obtained the sorted S-suffixes in $\SA[n-n_S\ldots n-1]$. Now, we sort all suffixes from these sorted S-suffixes.

\topic{Preprocessing}
First, we scan $T$ from right to left to place all indices of L-suffixes into $\SA[0\ldots n-n_S-1]$.
Then, we sort $\SA[0\ldots n-1]$ (the sorting key of $\SA[i]$ is $T[\SA[i]]$ i.e., the first character of \suf{\SA[i]}) using the mergesort, with
the merging step implemented by the stable, in-place, linear time merging algorithm developed by Salowe and Steiger~\cite{Salowe1987}.
After this sorting step, we show some useful observations.

\begin{observation}\label{ob:allsuf}
All suffixes of $T$ have been sorted by their first characters in $\SA$, i.e., in their corresponding buckets.
\end{observation}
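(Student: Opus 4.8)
The plan is to establish the claim in two short steps: first, that immediately after the preprocessing the array $\SA[0\ldots n-1]$ is a permutation of $\{0,\ldots,n-1\}$; second, that sorting this array by the key $T[\SA[i]]$ is, by the very definition of a bucket, the operation of placing every suffix into its bucket.

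For the first step I would argue as follows. By Section~\ref{sec:rogel3}, the subarray $\SA[n-n_S\ldots n-1]$ already holds the indices of all $n_S$ S-suffixes, each exactly once, arranged in lexicographic order. The right-to-left scan of $T$ in the preprocessing writes the index $i$ of every L-suffix $\Suf(i)$ into $\SA[0\ldots n-n_S-1]$; since there are exactly $n_L=n-n_S$ L-suffixes and no index is written twice, this subarray ends up containing precisely the indices of all L-suffixes. As every suffix is either an L-suffix or an S-suffix, the concatenation $\SA[0\ldots n-1]$ is a permutation of $\{0,\ldots,n-1\}$.

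For the second step, recall that a bucket is defined as the maximal contiguous block of $\SA$ occupied by the suffixes sharing a common first (bucket) character, and that buckets occur in $\SA$ in increasing order of their bucket character. Sorting $\SA[0\ldots n-1]$ with the stable, in-place, linear-time merging algorithm of~\cite{Salowe1987} using the key $T[\SA[i]]$ thus rearranges the indices so that all suffixes with the same first character form one contiguous block and the blocks are ordered by that character — which is exactly the statement that every suffix now lies inside its own bucket. I would additionally remark that, because the merge is stable and the S-suffix indices entered the sort already in lexicographic order while the L-suffix indices had been written before them, within each bucket the L-suffixes precede the S-suffixes and the S-suffixes remain in lexicographic (hence final) order at the tail of the bucket; this is consistent with Property~\ref{prop:ls} and is what the subsequent induced-sorting step relies on.

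There is no substantial obstacle: the only points needing care are (i) checking that the two halves written during the preprocessing are disjoint and together exhaust $\{0,\ldots,n-1\}$, which follows from $n_L+n_S=n$ and the fact that the scan touches each L-suffix exactly once, and (ii) invoking the stability of the merging algorithm of~\cite{Salowe1987} to obtain the L-before-S ordering inside each bucket, not merely the bucket partition. Both are immediate, so the observation follows.
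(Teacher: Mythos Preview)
Your proposal is correct and follows the same reasoning the paper implicitly relies on: the observation is stated in the paper without proof, as it is immediate from the preprocessing description (the two halves of $\SA$ together hold all indices, and sorting by the key $T[\SA[i]]$ is by definition a bucket sort by first character). Your additional remarks about stability and the L-before-S ordering inside each bucket actually anticipate the paper's separate Observation~\ref{ob:lsuf} and Lemma~\ref{lem:gels}, but this does no harm.
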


\begin{observation}\label{ob:lsuf}
All indices of L-suffixes beginning with the same character in $\SA$ are in increasing order, due to the stableness of the above sorting algorithm.
\end{observation}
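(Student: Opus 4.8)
The plan is to obtain Observation~\ref{ob:lsuf} directly from the stability of the merging-based sort invoked in the preprocessing step, so that the only substantive thing to check is the order in which the L-suffix indices sit in $\SA$ \emph{just before} that sort runs.

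First I would look closely at the placement that immediately precedes the sort: scanning $T$ from right to left and writing the indices of the L-suffixes into $\SA[0\ldots n-n_S-1]$. Carrying out this scan with the same counter convention used for S-suffixes in Section~\ref{sec:rogel3} (the $\SA[n_S-sum]$ assignment there), the $k$-th L-suffix met during the scan --- equivalently, the L-suffix whose starting index is the $k$-th largest --- is written into position $n-n_S-k$. Hence, once the prefix is completely filled, $\SA[0\ldots n-n_S-1]$ holds the L-suffix indices in strictly increasing order: if $\Suf(i)$ and $\Suf(j)$ are L-suffixes with $i<j$, then $\Suf(i)$ occupies a strictly smaller position than $\Suf(j)$ in this prefix. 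Meanwhile the already-sorted S-suffixes occupy the disjoint block $\SA[n-n_S\ldots n-1]$, so at this moment every L-suffix precedes every S-suffix in $\SA$.

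Next I would invoke stability. The preprocessing sorts $\SA[0\ldots n-1]$ by the key $T[\SA[i]]$ using mergesort whose merge is the stable, in-place, linear-time routine of Salowe and Steiger~\cite{Salowe1987}. Take two L-suffixes $\Suf(i)$ and $\Suf(j)$ with $i<j$ and $T[i]=T[j]$; their keys are equal, and by the previous paragraph $\Suf(i)$ precedes $\Suf(j)$ in the input to the sort. A stable sort never reverses the relative order of two elements with equal keys, so $\Suf(i)$ still precedes $\Suf(j)$ afterwards. Since after the sort all suffixes beginning with a fixed character form one contiguous bucket (Observation~\ref{ob:allsuf}), this is exactly the assertion that, inside each bucket, the L-suffixes appear in increasing order of starting index, which is Observation~\ref{ob:lsuf}.

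I do not expect a real obstacle: once the ``increasing order before the sort'' claim is nailed down, the conclusion is a one-line consequence of stability. The only delicate point is the bookkeeping in the placement step --- making sure that a right-to-left scan of $T$ together with the chosen counter convention really leaves the L-suffixes in increasing (not decreasing) order across $\SA[0\ldots n-n_S-1]$ --- and noting that the interleaved, already-sorted S-suffixes are irrelevant here, since the argument only compares the positions of two L-suffixes that share a key. The symmetric branch $n_L<n_S$ is handled by the same reasoning with the roles of $L$ and $S$ exchanged.
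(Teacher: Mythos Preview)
Your proposal is correct and matches the paper's reasoning: the observation is stated without a separate proof, justified only by the phrase ``due to the stableness of the above sorting algorithm,'' and you have simply made explicit the implicit premise that the right-to-left placement of L-suffix indices into $\SA[0\ldots n-n_S-1]$ (with the same counter convention as in Section~\ref{sec:rogel3}) leaves them in increasing order before the stable sort runs. Your caveat about the bookkeeping is apt, since the paper does not spell out the filling direction in this step, but the analogous placement in Section~\ref{sec:rogel3} makes the intended convention clear.
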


\begin{lemma}\label{lem:gels}
All S-suffixes are already in their final position in $\SA$.
\end{lemma}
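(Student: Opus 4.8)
The plan is to prove the statement bucket by bucket, using only Property~\ref{prop:ls} and the stableness of the merging algorithm of \cite{Salowe1987}. Fix a bucket, say the one whose suffixes begin with character $c$, let $s_c$ be its size and let $k_c$ be the number of S-suffixes in it. First I would recall the shape of the final suffix array inside this bucket: by Property~\ref{prop:ls} all L-suffixes of the bucket precede all its S-suffixes, and the S-suffixes occur among themselves in lexicographic order; hence in the correct $\SA$ the S-suffixes of bucket $c$ occupy exactly the last $k_c$ entries of the bucket, in lexicographic order.

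Next I would inspect the contents of $\SA$ just before the stable sort in the preprocessing step. By Section~\ref{sec:rogel3} and the preprocessing, $\SA[0\ldots n-n_S-1]$ holds the indices of all L-suffixes while $\SA[n-n_S\ldots n-1]$ holds the indices of all S-suffixes already in lexicographic order. Two features of this input are what I would use: (i) every L-suffix index lies to the left of every S-suffix index; and (ii) the S-suffix indices are already sorted, so in particular the ones beginning with $c$ form a contiguous, lexicographically ordered block. Then I would apply the stable sort, whose key for position $i$ is $T[\SA[i]]$: inside bucket $c$ the relative order of entries is preserved, so by (i) the S-suffixes beginning with $c$ are pushed to the last $k_c$ slots of the bucket (no L-suffix of the same bucket can end up behind them), and by (ii) they remain in lexicographic order. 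This coincides with the description of the final $\SA$ inside bucket $c$ from the first step, so every S-suffix of bucket $c$ is in its final position; since this holds for every bucket, the claim follows.

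The step I would be most careful about — though it is not deep — is making precise that stableness alone simultaneously yields both the ``S-suffixes after L-suffixes'' placement inside a bucket and the correct internal order of those S-suffixes. It is essential here that before the sort all L-indices sit entirely to the left of all S-indices (feature (i)), and that the S-indices were already lexicographically sorted (feature (ii)); no information about the internal order of the L-suffixes is needed. Once these two observations are stated, the argument is a direct consequence of stableness and Property~\ref{prop:ls}.
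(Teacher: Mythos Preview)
Your proposal is correct and follows essentially the same approach as the paper's own proof: both rely on the fact that before the stable sort all L-suffix indices sit to the left of all (already lexicographically sorted) S-suffix indices, so stability plus Property~\ref{prop:ls} forces the S-suffixes to land in the tail of their buckets in the correct order. Your version is simply a more detailed, bucket-by-bucket unpacking of the paper's terse argument.
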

\begin{proof}
Before the sorting step, all sorted S-suffixes are in $\SA[n-n_S\ldots n-1]$ and all L-suffixes are in $\SA[0\ldots n-n_S-1]$. Because the merging step is stable, the S-suffixes are behind the L-suffixes in the same bucket and hence are already in their final positions in $\SA$ from Observation \ref{ob:allsuf} and Property~\ref{prop:ls}.
\end{proof}

\topic{Induced Sorting}
Now, we induce the order of all L-suffixes from the sorted S-suffixes (which are already in their final position in $\SA$ by Lemma \ref{lem:gels}) using induced sorting.
Now, we extend the interior counter trick in Section~\ref{sec:roint2}
to handle the read-only general alphabets.
We use five special symbols $\BH$ (Head of L-suffixes), $\BT$ (Tail of L-suffixes), $\Ept$ (Empty), $\Rone$ (one remaining L-suffix) and $\Rtwo$ (two remaining L-suffixes)
\footnote{
	Recall that the special symbol is only used to simplify the argument. See Appendix \ref{app:special} for the details. The only difference is that we need to use the in-place linear time merging algorithm~\cite{Salowe1987} to place all indices of the suffixes of $T$ in their corresponding buckets in $\SA$ (the merging key is the beginning character of the suffix), then we scan the $\SA$ once to know the bucket heads/tails for the five integers (instead of scanning the string $T$ for the (read-only) integer alphabets).
	}.
We do the following two steps to sort all L-suffixes:

\topic{Step 1. Initializing $\SA$}
Firstly, we initialize all buckets in $\SA$ by placing some special symbols in each bucket in order to inform us the number of L-suffixes in the bucket.
Concretely, we scan $T$ from right to left.
For each scanning character $T[i]$ which is L-type,
if bucket $T[i]$ has not been initialized,
we need to initialize bucket $T[i]$ (we will show how to identify the bucket is initialized or not in the end of this step). Before to initialize bucket $T[i]$, we first need to obtain the value $N_L$, which is the number of L-suffixes in this bucket.
Let $l$ denote the head of bucket $T[i]$ in $\SA$ (i.e. $l$ is the smallest index in $\SA$ such that $T[\SA[l]]=T[i]$) and $r$ denote the tail of bucket $T[i]$ in $\SA$ (i.e. $r$ is the largest index in $\SA$ such that $T[\SA[r]]=T[i]$).
Furthermore, we let $r_L$ denote the tail of L-suffixes in this bucket (i.e., $r$ is the largest index in $\SA$ such that $T[\SA[r_L]]=T[i]$ and $T[\SA[r]]$ is L-type).
Note that $N_L=r_L-l+1$.
Hence, it suffices to compute $l$ and $r_L$.
The following steps compute $l$ and $r_L$, respectively.
         \begin{enumerate}[(i)]
            \item We can find $l$ by searching $T[i]$ in $\SA$ (the search key for $\SA[i]$ is $T[\SA[i]]$) using \emph{binary search}. This uses $O(\log n)$ time from Observation \ref{ob:allsuf}.
            \item For $r_L$, since the bucket $T[i]$ has not been initialized, $\Suf(i)$ is the first L-suffix in its bucket being scanned. From Observation \ref{ob:lsuf}, $\Suf(i)$ must be stored in $\SA[r_L]$ (i.e., $\SA[r_L]=i$) since we scan $T$ from right to left.
                Hence, we can scan this bucket from $l$ to $r$ to find $r_L$ which satisfies $\SA[r_L]=i$.
          \end{enumerate}
After this, we have obtained the value of $N_L$. Now, we initialize the bucket $T[i]$ as follows:
\begin{enumerate}[(1)]
\item If $N_L = 1$, we do nothing (there is only one L-suffix in this bucket, and obviously it is in the final position).
\item If $N_L = 2$, let $\SA[l+1] = \BT$ (recall that $l$ is the head of bucket $T[i]$ and $r$ is the bucket tail, i.e., $\SA[l\ldots r]$ is the bucket $T[i]$. Moreover, $r_L$ is the tail of L-suffixes in this bucket)
\[
\begin{array}{cccccc}
\mathsf{Index}     & l & l+1(r_L) & l+2 &$\ldots$& r \\
\mathsf{Type}       & L &  L  & S& $\ldots$& S   \\
\SA       &\SA[l]  & \urb{\BT}  & \SA[l+2]  &$\ldots$& \SA[r]
\end{array}
\]
  \item If $N_L = 3$, let $\SA[l+1]=\BH$ and $\SA[l+2]=\BT$.
\[
\begin{array}{ccccccc}
\mathsf{Index}  & l & l+1 & l+2(r_L) &l+3&$\ldots$& r \\
\mathsf{Type}       & L &  L  & L  & S& $\ldots$& S   \\
\SA     & \SA[l] & \urb{\BH} & \urb{\BT}  & \SA[l+3] &$\ldots$& \SA[r]
\end{array}
\]
  \item If $N_L>3$, let $\SA[l+1]=\BH$, $\SA[l+2]=\Ept$ and $\SA[l+N_L-1]=\BT$.
\[
\begin{array}{cccccccccc}
\mathsf{Index}    & l & l+1 & l+2 & l+3 & $\ldots$  & $$l+N_L-1(r_L)$$   &$$l+N_L$$ & $$\ldots$$ & r\\
\mathsf{Type}     & L &  L  & L   & L &  $\ldots$ & L &  S& $\ldots$& S \\
\SA     & \SA[l] &  \urb{\BH}  & \urb{\Ept}   & \SA[l+3] & $\ldots$ &\urb{\BT} & \SA[l+N_L] &$\ldots$& \SA[r]
\end{array}
\]
\end{enumerate}
Note that we can find out whether the bucket $T[i]$ is already initialized or not in $O(\log n)$ time. We do a binary search find $l$, then check $\SA[l+1]$ is $\BH,\BT$ or others. If the bucket $T[i]$ has been initialized, $\SA[l+1]$ is $\BH$ or $\BT$. Otherwise, it has not been initialized yet
\footnote{
	Note that we are able to do the binary search in $\SA$, though there are special symbols (i.e.~$\BH, \BT,\Ept$) in $\SA$. Since the longest continuous special symbol entries in $\SA$ is $2$, i.e., any three of continuous entries in $\SA$ must have at least one suffix entry (this entry represents the index of a suffix of $T$).
}.
It is not hard to see that this initialization step uses $O(n\log n)$ time and $O(1)$ workspace.

\topic{Step 2. Sort all L-suffixes using induced sorting}
We scan $\SA$ from left to right to sort all L-suffixes. The step is similar to sorting all suffixes in Section \ref{sec:roint2}. The main difference is that we use binary search to find the bucket head. Specifically, we scan $\SA$ from left to right.
For every $\SA[i]$, let $j= \SA[i]-1$.
If $T[j]$ is L-type, then place \suf{j} into the LF-entry of its bucket,
and increase the head counter by one.
To specify how to place the L-suffix into the LF-entry of its bucket in $\SA$,
We only specify the case where $N_L>3$ for the bucket. The other cases with $N_L\leq 3$ are similar and simpler. Let $L_i$ denote the index of $i$-th L-suffix which needs to be placed into the LF-entry of this bucket. We distinguish the following four cases:

\begin{enumerate}[(1)]
    \item If $\SA[l+1]=\BH$ and $\SA[l+2]=\Ept$: The first L-suffix (i.e.~$L_1$) needs to be placed into this bucket. \\
         We let $\SA[l]=j$ and $\SA[l+2]=1$ (use $\SA[l+2]$ as the counter to denote the number of L-suffixes have been placed so far).  Recall that \suf{j} is the current L-suffix we want to place.
    \item If $\SA[l+1]=\BH$ and $\SA[l+2]\neq \Ept$: These L-suffixes except the first L-suffix ($L_1$) and the last two L-suffixes ($L_{N_L-1}$ and $L_{N_L}$) need to be placed.\\
     Let $c=\SA[l+2]$ (counter).  If $\SA[l+c+2] \neq \BT$, we let $\SA[l+c+2]=j$ and $\SA[l+2]=c+1$. Otherwise (this is the third to last L-suffix, i.e.~$L_{N_L-2}$), we shift these $c-1$ L-suffixes to the left by one position (i.e., move $\SA[l+3\ldots r_L-1]$ to $\SA[l+2\ldots r_L-2]$ ) and let $\SA[r_L-1]=j$ and $\SA[l+1]=\Rtwo$.
    \item If $\SA[l+1]=\Rtwo$: The penultimate L-suffix (i.e.~$L_{N_L-1}$) needs to be placed. \\
     We scan this bucket to find $r_L$ such that $\SA[r_L]=\BT$. Then, we move $\SA[l+2\ldots r_L-1]$ to $\SA[l+1\ldots r_L-2]$. After, we let $\SA[r_L-1]=j$ and $\SA[r_L]=\Rone$.
    \item Otherwise, the last L-suffix (i.e.~$L_{N_L}$) needs to be placed. \\
     We scan this bucket to find $r_L$ such that $\SA[r_L]=\Rone$. Then let $\SA[r_L]=j$.
  \end{enumerate}
\[
\begin{array}{cccccccccc}
\mathsf{Index}    & l & l+1 & l+2 & l+3 & $\ldots$  & $$l+N_L-1(r_L)$$   &$$l+N_L$$ & $$\ldots$$ & r\\
\mathsf{Type}     & L &  L  & L   & L &  $\ldots$ & L &  S& $\ldots$& S \\
\mathsf{Case}~(1):\\
\SA     & \SA[l] &  \BH  & \Ept   & \SA[l+3] & $\ldots$ &\BT & \SA[l+N_L] &$\ldots$& \SA[r] \\
\SA     & \urb{L_1}& \BH & \urb{1}   & \SA[l+3] & $\ldots$ &\BT & \SA[l+N_L] &$\ldots$& \SA[r] \\
\mathsf{Case}~(2):\\
\SA     & L_1& \BH & 1   & \SA[l+3] & $\ldots$ &\BT & \SA[l+N_L] &$\ldots$& \SA[r] \\
\SA     & L_1& \BH & \urb{2}   & \urb{L_2} & $\ldots$ &\BT & \SA[l+N_L] &$\ldots$& \SA[r] \\
\vdots           &&&\vdots&&&&&& \\
\SA     & L_1& \BH & \urb{N_L-3}   & L_2 & $\ldots$ &\BT & \SA[l+N_L] &$\ldots$& \SA[r] \\
\SA     & L_1& \urb{\Rtwo} & \urb{L_2}   & \urb{L_3} & $\ldots$ &\BT & \SA[l+N_L] &$\ldots$& \SA[r] \\
\mathsf{Case}~(3):\\
\SA     & L_1& \Rtwo & L_2   & L_3 & $\ldots$ &\BT & \SA[l+N_L] &$\ldots$& \SA[r] \\
\SA     & L_1& \urb{L_2} & \urb{L_3}   & \urb{L_4} & $\ldots$ &\urb{\Rone} & \SA[l+N_L] &$\ldots$& \SA[r] \\
\mathsf{Case}~(4):\\
\SA     & L_1& L_2 & L_3   & L_4 & $\ldots$ &\Rone & \SA[l+N_L] &$\ldots$& \SA[r] \\
\SA     & L_1& L_2 & L_3   & L_4 & $\ldots$ &\urb{L_{N_L}} & \SA[l+N_L] &$\ldots$& \SA[r] \\
\end{array}
\]

We have the following lemmas and theorem.
\begin{lemma}\label{lem:scan}
When we scan $\SA[i]$ in the induced sorting step, we can identify whether $T[\SA[i]-1]$ is L-type or S-type in $O(1)$ time. The only exception is when $\Suf(\SA[i]-1)$ is the last L-suffix which needs to be inserted into the bucket $T[\SA[i]-1]$. This special case requires $O(N_{L})$ time, where $N_{L}$ denotes the number of L-suffixes in its bucket (i.e.~$N_{L}=r_L-l+1$).
\end{lemma}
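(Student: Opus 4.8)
The plan is to adapt the reasoning of Lemmas~\ref{lem:knowtype} and~\ref{lem:idls} to this induced-sorting step, keeping track of running time carefully so as to isolate the single $O(N_L)$ exception. Put $j=\SA[i]-1$ and begin with the one comparison of $T[j]$ against $T[j+1]=T[\SA[i]]$, which costs $O(1)$. If $T[j]\ne T[\SA[i]]$, the type of $\Suf(j)$ follows immediately from the definition of the types: S-type when $T[j]<T[j+1]$ and L-type when $T[j]>T[j+1]$. So this case is purely $O(1)$ and never triggers the exception.

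The remaining case is $T[j]=T[\SA[i]]=b$, in which $\Suf(j)$ lies in bucket $b$ and, its first two characters agreeing, has the same type as $\Suf(\SA[i])$. The heart of the argument is the equivalence: $\Suf(j)$ is L-type if and only if, at the moment $\SA[i]$ is scanned, not every L-suffix of bucket $b$ has yet been written into $\SA$. One direction is immediate — if $\Suf(j)$ is L-type it is an L-suffix of bucket $b$, and since each suffix is written exactly once, namely when the cell currently holding $\Suf(j+1)$ is scanned (which is now), $\Suf(j)$ is still unwritten. For the converse I would argue the contrapositive: if $\Suf(j)$ is S-type then $\Suf(\SA[i])$ is S-type, so by Property~\ref{prop:ls} the cell $\SA[i]$ lies strictly to the right of every cell that ever holds an L-suffix of bucket $b$; and each L-suffix $\Suf(x)$ of bucket $b$ is written when the cell holding $\Suf(x+1)$ is scanned, a cell strictly left of $\Suf(x)$'s final position (if $T[x+1]<b$ it is in an earlier bucket; if $T[x+1]=b$ then $\Suf(x+1)<\Suf(x)$, so $\Suf(x+1)$ is a smaller L-suffix of the same bucket and hence sits earlier); hence by the time the scan reaches the tail part of bucket $b$, all of its L-suffixes are already in place, which is what the equivalence asserts.

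Granting the equivalence, the type of $\Suf(j)$ is decided by asking whether bucket $b$ still has an outstanding L-suffix, and this is read off in $O(1)$ from the interior-counter state maintained for the bucket currently being filled (the counter cell, or, once the bucket is nearly full, the markers $\Rtwo$ and $\Rone$); if the answer is no, $\Suf(j)$ is already in place and the scan simply moves on. The sole situation where this is not $O(1)$ is precisely when $\Suf(j)$ is the final ($N_L$-th) L-suffix of bucket $b$: confirming that exactly one L-suffix remains, i.e.\ locating the $\Rone$ marker, and then inserting $\Suf(j)$ there is case~(4) of the interior counter trick, which scans the bucket and costs $O(N_L)$. This is the unique exceptional step, as claimed.

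I expect the converse direction of the equivalence to be the main obstacle, because while processing L-suffixes of bucket $b$ we may ourselves write further L-suffixes of bucket $b$ into cells the scan has not yet reached, so one must verify that no such write ever lands at or before the current cell — exactly what the observation ``the cell holding $\Suf(x+1)$ precedes $\Suf(x)$'s final position'' provides. A secondary point is that the $O(1)$ bound relies on the scan carrying along the head and the interior-counter state of the bucket it is currently inside, so that no binary search is performed for this purpose within the step; I would make that bookkeeping explicit.
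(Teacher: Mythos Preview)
Your proposal is correct and follows essentially the same line as the paper's proof: split on whether $T[j]\ne T[\SA[i]]$, and in the equal case reduce the question to whether all L-suffixes of the current bucket have been placed, reading this off the interior-counter markers at $\SA[l+1]$ with the lone exception of the final L-suffix (case~(4)), which forces an $O(N_L)$ scan for $\Rone$. The paper simply invokes Lemma~\ref{lem:idls} for the equivalence, whereas you reprove it from scratch; your remark that one must carry along the current bucket head (and a ``done'' flag once case~(4) fires) so that no binary search is incurred here is a detail the paper leaves implicit but is indeed needed for the $O(1)$ claim to hold on the S-type cells of the same bucket.
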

\begin{proof}
Let $j = \SA[i]-1$. First if $T[j] \neq T[j+1]$, by definition it is trivial. Otherwise, $T[j] = T[j+1]$. From Lemma \ref{lem:idls}, we only need to know whether all L-suffixes in the bucket $T[j]$ (i.e., bucket $T[\SA[i]]$) have already been sorted or not. In our algorithm, we use the extended interior counter trick which maintains the counters of the buckets. So we can identify whether all L-suffixes in the bucket $T[j]$ have already been sorted or not immediately except when $\Suf(j)$ is the last L-suffixes which needs to be placed into the bucket $T[j]$ (corresponding to case (4)). However, we can scan this bucket from left to right to identify whether the special symbol $R_1$ exists or not. If exists, which means there is one L-suffix remained, this must be the $\Suf(j)$. Otherwise, all L-suffixes in the bucket $T[j]$ have already been sorted. This scanning operation takes $O(N_{L})$ time.
\end{proof}

\begin{lemma}
All the suffixes can be sorted correctly from the sorted S-suffixes in $O(n\log n)$ time.
\end{lemma}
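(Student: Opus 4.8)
The plan is to establish the lemma in two parts, \emph{correctness} and the $O(n\log n)$ \emph{time bound}, using the abstract guarantee of induced sorting (Lemma~\ref{lem:ls}) together with the auxiliary facts already proved in this section (Lemma~\ref{lem:gels}, Observations~\ref{ob:allsuf}--\ref{ob:lsuf}, Lemmas~\ref{lem:idls} and~\ref{lem:scan}).

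For correctness I would start from Lemma~\ref{lem:gels}: after the preprocessing (the stable, in-place merge by first character), all S-suffixes already occupy their final positions in $\SA$, and by Observation~\ref{ob:allsuf} every suffix lies in its correct bucket. By Lemma~\ref{lem:ls} it then suffices to show that Step~2 faithfully simulates one left-to-right induced-sorting pass that, for each scanned $\SA[i]$ with $j=\SA[i]-1$ an L-suffix, writes $j$ into the current LF-entry of bucket $T[j]$ and advances that bucket's LF-pointer. I would prove this by a loop invariant over the scan position $i$: inside each not-yet-completed L-bucket $\SA[l\ldots r_L]$ the L-suffixes placed so far, together with the $O(1)$ interior marker/counter cells created in Step~1 (the cells carrying $\BH$, $\BT$, $\Ept$, $\Rone$, $\Rtwo$ or the running counter), occupy a prefix of the bucket's L-region and are in sorted order, while a completed bucket has its whole L-region correctly filled. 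One then checks the four placement cases: case~(1) writes the new index at the LF-entry pointed to by the counter and increments the counter; cases~(2)--(4) handle the last three insertions, where the counter/marker cells themselves must be reclaimed, via the one-position left shift of the $\le N_L$ stored indices and the marker transition $\BH \to \Rtwo \to \Rone$. The key point is that a special symbol is ever written only on a cell that is currently free — Step~1 reserves exactly the cells $\SA[l+1]$, $\SA[l+2]$, $\SA[l+N_L-1]$ for $N_L>3$, and analogously for $N_L\le 3$ — and that cell is overwritten or shifted away precisely when the bucket needs it for a genuine index, so no L-suffix index is ever destroyed and the ``virtual'' LF-pointer encoded by the counter always matches the textbook one. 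Type queries ``is $\Suf(\SA[i]-1)$ an L-suffix?'' are answered by Lemma~\ref{lem:scan} (which reduces to Lemma~\ref{lem:idls} plus the counters), and the bucket head $l$ of bucket $T[j]$ is located by binary search on first characters, valid by Observation~\ref{ob:allsuf} and by the fact that at most two consecutive cells of $\SA$ carry special symbols. Hence after the pass $\SA$ holds every L-suffix in sorted order, interleaved with the already-final S-suffixes, which by Property~\ref{prop:ls} (L before S within a bucket) is exactly the suffix array of $T$.

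For the running time I would account for each component. The preprocessing scan plus the Salowe--Steiger in-place merge~\cite{Salowe1987} cost $O(n\log n)$. In Step~1, for each of the at most $n$ L-type characters we perform an $O(\log n)$ binary search to find $l$ and test whether its bucket is initialized; the one-time scan of a bucket to locate $r_L$ and to plant its $O(1)$ markers costs $O(N_L)$, summing to $O(n)$ over all buckets; thus Step~1 is $O(n\log n)$. In Step~2 the outer scan of $\SA$ is $O(n)$; each type query is $O(1)$ except, once per bucket, the last-L-suffix case of Lemma~\ref{lem:scan}, which costs $O(N_L)$ and sums to $O(n)$; each L-suffix insertion needs one $O(\log n)$ binary search for its bucket head (once per L-suffix, hence $O(n_L\log n)$ in total) plus an $O(N_L)$ shift/scan only in cases~(2)--(4), which occur $O(1)$ times per bucket and sum to $O(n)$. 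Adding these contributions gives $O(n\log n)$, and all auxiliary data fit within the $O(1)$ workspace already argued for, so the lemma follows.

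The main obstacle is the correctness loop invariant: one must verify, case by case and including the boundary buckets with $N_L\le 3$, that the interior-counter bookkeeping never overwrites a genuine suffix index and that the left shifts in cases~(2)--(4) restore the contiguity of the sorted prefix before the next insertion into that bucket, so that the encoded LF-pointer stays synchronized with the standard induced-sorting pointer throughout. Everything else — the time accounting, the validity of binary search in the presence of at most two consecutive special symbols, and the type identification — follows routinely from the lemmas already established.
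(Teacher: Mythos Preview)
Your proposal is correct and follows essentially the same approach as the paper: correctness via Lemma~\ref{lem:gels} and Lemma~\ref{lem:ls}, and the time bound by charging each L-suffix an $O(\log n)$ binary search plus $O(1)$ for the counter trick, with the per-bucket shifts and scans in the final insertions summing to $O(n)$. Your treatment is considerably more detailed than the paper's own proof (which is four sentences and simply cites the lemmas), in particular the explicit loop invariant for the interior-counter simulation; one minor wording issue is that case~(2) itself fires many times per bucket, but its expensive shift branch fires only once, which is what your $O(n)$ sum actually needs.
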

\vspace{-1.5mm}
\begin{proof}
All S-suffixes have been sorted correctly according to Lemma \ref{lem:gels}.  All L-suffixes can be sorted correctly from the sorted S-suffixes using induced sorting steps according to Lemma \ref{lem:ls}. For the time complexity, in the scanning process, we use $O(\log n)$ time to do binary search to find the head of its bucket for each L-suffix (identified by Lemma \ref{lem:scan}), and use the extended  interior counter trick to find the LF-entry in constant time except for the last two L-suffixes of the bucket. We need to scan all L-suffixes in this bucket in order to find the final position (corresponding to case (4) in Step 2) and shift these L-suffixes by one position (corresponding to case (3) in Step 2). This only takes $O(n)$ time overall since each bucket is scanned at most twice. To sum up, this sorting step takes $O(n\log n)$ time.
\end{proof}

\vspace{-1mm}
\begin{theorem}
Our Algorithm takes $O(n\log n)$ time and $O(1)$  workspace to compute the suffix array for string $T$ over general alphabets, where $T$ is read-only and only comparisons between characters are allowed.
\end{theorem}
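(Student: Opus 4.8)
The plan is to assemble the per-step guarantees established throughout Section~\ref{sec:gel} into a single induction on the recursion depth, exactly as was done for Theorem~\ref{thm:int}. For correctness I would chain the steps of the $n_S\le n_L$ branch: by Lemma~\ref{lm:length} all S-substrings end up sorted in $\SA[n-n_S\ldots n-1]$; renaming them by rank (Section~\ref{sec:rogel2}) yields a string $T_1$ whose suffixes are in order-preserving bijection with the S-suffixes of $T$, so the recursive call on $T_1$ (Section~\ref{sec:rogel3}) returns the S-suffixes of $T$ in sorted order; the stable in-place merge then puts those S-suffixes into their final positions in $\SA$ (Lemma~\ref{lem:gels}, via Property~\ref{prop:ls}); and finally the extended interior counter trick induces the order of all L-suffixes from the sorted S-suffixes correctly by Lemma~\ref{lem:ls}, which the preceding lemma of Section~\ref{sec:rogel4} already certifies. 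The symmetric branch ($n_L<n_S$) is identical with the roles of $L$ and $S$ interchanged.

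For the running time I would set up the recurrence $T(n)=T(n_1)+O(n\log n)$, where $n_1\le n/2$ because we always recurse on whichever of the two sides ($S$ or $L$) is no larger than $n/2$. Each non-recursive step costs $O(n\log n)$: the S-substring mergesort by Lemma~\ref{lm:length}, the heapsort-based construction of $T_1$ in Section~\ref{sec:rogel2}, and the induced-sorting step of Section~\ref{sec:rogel4}, in which each L-suffix triggers one $O(\log n)$ binary search for its bucket head (Lemma~\ref{lem:scan}) while the $O(N_L)$-time special cases contribute only $O(n)$ in total, amortized over buckets, by the preceding lemma. Unrolling, $T(n)=\sum_{i\ge 0}O\!\big((n/2^i)\log(n/2^i)\big)=O(n\log n)$, since the geometric series in $n/2^i$ converges and $\log(n/2^i)\le\log n$.

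For the workspace I would argue, again as in Theorem~\ref{thm:int}, that every individual step uses only $O(1)$ words beyond $\SA$ and the (read-only) input: mergesort uses the free half of $\SA$ as scratch, heapsort is in-place, the Salowe--Steiger merge is in-place, and all bookkeeping of the interior counter trick (the $\BH,\BT,\Ept,\Rone,\Rtwo$ markers and the head counters) lives inside $\SA$ itself. Distinct steps reuse the same $O(1)$ workspace, and the recursive call reuses both that workspace and, by induction, the appropriate sub-array of $\SA$. The reduced-problem sizes $n_i$ are not stored per level; a single global counter suffices, recovered on the way back up by rescanning, exactly as in the proof of Theorem~\ref{thm:int}. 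The five special symbols are accommodated with $O(1)$ extra workspace as in Appendix~\ref{app:special} (with the one modification, noted there, that bucket heads/tails of the symbol characters are found by scanning $\SA$ after the in-place merge rather than by scanning $T$).

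I expect the only delicate points --- none of them genuine obstacles, given the lemmas already in hand --- to be (i) confirming that the $O(\log n)$ factors introduced by binary search and heapsort collapse into a single $O(n\log n)$ across all recursive levels rather than stacking to $O(n\log^2 n)$, which the geometric decay of the problem sizes handles; and (ii) checking that using one physical array $\SA$ simultaneously as input scratch, recursion workspace, and output never causes a conflict, which follows from the disjointness at each level of the prefix holding $T_1$, the suffix holding $\SA_1$, and the middle region used as temporary space.
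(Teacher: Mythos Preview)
Your proposal is correct and follows essentially the same approach as the paper's own proof: set up the recurrence $T(n)=T(n/2)+O(n\log n)$, unroll it to $O(n\log n)$, and observe that every step and every recursive level reuses the same $O(1)$ workspace (including not storing the $n_i$ across levels). Your write-up is in fact more detailed than the paper's brief argument, which omits the explicit correctness chaining and the discussion of points (i) and (ii); nothing you add is superfluous or off-track.
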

\vspace{-1.5mm}
\begin{proof}
All steps in our algorithm take $O(n\log n)$ time. Besides, the size of recursive problem $T_1$ is no larger than half of $|T|$. We have $T(n) = T(n/2)+ O(n\log n)$, thus $T(n) =O(n\log n +\frac{n}{2}\log\frac{n}{2}+\cdots) = O(n\log n)$. For workspace, every step uses $O(1)$ workspace, and in the recursive subproblem we can also reuse the $O(1)$ workspace. Moreover, at the same recursive level, the different steps can reuse the $O(1)$ workspace too.
\end{proof}

\section{Conclusion}
\label{sec:con}

In this paper, we present optimal in-place algorithms for suffix sorting over (read-only) integer alphabets and read-only general alphabets.
Concretely, we provide the first optimal linear time in-place suffix sorting algorithm for (read-only) integer alphabets. Our algorithms solve the open problem posed by Franceschini and Muthukrishnan in ICALP 2007~\cite{franceschini2007place}.
For the read-only general alphabets, we provide an optimal in-place $O(n\log n)$ time suffix sorting algorithm, which recovers the result obtained by Franceschini and Muthukrishnan \cite{franceschini2007place}. Our algorithm is arguably simpler.

There is a surge of interests in developing external memory algorithms for suffix sorting in recent years
(see e.g.,~\cite{Ferragina2012,Nong2015}).
Many such algorithms are extensions of existing lightweight internal memory algorithms.
It would be interesting to investigate the external memory setting and see whether our tricks and data structures are applicable in this setting.
We also plan to consider other string processing problems that are tightly connected with suffix array, such as compressed suffix arrays, longest common prefixes, Burrows-Wheeler transform and Lempel-Ziv factorization.


\section*{Acknowledgments}
The authors would like to thank Ge Nong and Gonzalo Navarro for helpful suggestions.


\newpage

\bibliographystyle{alpha}
\bibliography{sabib}

\newpage

\appendix
\renewcommand{\appendixname}{Appendix~\Alph{section}}

\section{Running Examples}
\label{app:eg}

\subsection{Induced sorting all L-suffixes from the sorted S-suffixes}
\label{app:inducedsort}

In this appendix, we give a running example for the standard induced sorting step which needs the bucket array and type array explicitly.
Assume that all indices of the sorted S-suffixes are already in their correct positions in
$\SA$ (i.e., in the tail of their corresponding buckets in $\SA$).
We scan $\SA$ from left to right (i.e., from $\SA[0]$ to $\SA[n-1]$).
When we scan $\SA[i]$, let $j=\SA[i]-1$.
If $T[j]$ is L-type, i.e., $\Suf(j)$ is an L-suffix (indicated by the type array),
we place the index of $\Suf(j)$ (i.e. $j$) into the LF-entry of bucket $T[j]$,
and then let the LF-pointer of this bucket $T[j]$ point to the next entry.
The LF-pointers are maintained by the bucket array.
If $\Suf(j)$ is an S-suffix, we do nothing (since all S-suffixes are already sorted in the correct positions).

The idea of induced sorting is that the lexicographical order between $\Suf(i)$ and $\Suf(j)$ are decided by the order of $\Suf(i+1)$ and $\Suf(j+1)$ if $\Suf(i)$ and $\Suf(j)$ are in the same bucket (i.e., $T[i]=T[j]$).
Considering two L-suffixes $\Suf(i)$ and $\Suf(j)$ in the same bucket, we have $\Suf(i+1)<\Suf(i)$ and $\Suf(j+1)<\Suf(j)$ by the definition of L-suffix. Since we scan $\SA$ from left to right, $\Suf(i+1)$ and $\Suf(j+1)$ must appear earlier than $\Suf(i)$ and $\Suf(j)$. Hence the correctness of induced sorting is not hard to prove by induction.
Note that the order of $\Suf(i)$ and $\Suf(j)$ with $T[i]\neq T[j]$ is already correct, since we always place the L-suffixes in their corresponding buckets.

\topic{Example}
We use the following running example to demonstrate the induced sorting step.
Consider a string $T[0\ldots 12]=``2113311331210"$ (the integer alphabets).
\[
\begin{array}{cccccccccccccc}
\mathsf{Index}  & 0 & 1 & 2 & 3 & 4 & 5 & 6 & 7 & 8 & 9 & 10& 11& 12 \\
  T    & 2 & 1 & 1 & 3 & 3 & 1 & 1 & 3 & 3 & 1 & 2 & 1 & 0 \\
\mathsf{Type}   & L & S & S & L & L & S & S & L & L & S & L & L & S
\end{array}
\]
$T[2]$ is S-type since $T[2]=1<T[3]=3$. The S-substrings are $\{11,1331,11,1331,1210,0\}$.
The S-suffixes are $\{\Suf(1),\Suf(2),\Suf(5),\Suf(6),\Suf(9),\Suf(12)\}$.

Now, we show the induced sorting step in our running example.
Suppose all indices of the sorted S-suffixes (i.e., $1,2,5,6,9,12$) are already stored in the tail of their corresponding buckets in $\SA$:
($\Ept$ denotes an Empty entry.)
\[
\begin{array}{cccccccccccccc}
\mathsf{Index}  & 0 & 1 & 2 & 3 & 4 & 5 & 6 & 7 & 8 & 9 & 10& 11& 12 \\
  T    & 2 & 1 & 1 & 3 & 3 & 1 & 1 & 3 & 3 & 1 & 2 & 1 & 0 \\
\mathsf{Type}  & L & S & S & L & L & S & S & L & L & S & L & L & S \\
  \SA   &(\urb{12})&(\Ept& \urb{1} & \urb{5} & \urb{9} & \urb{2}& \urb{6})&(\Ept& \Ept)&(\Ept & \Ept & \Ept & \Ept) \\
\mathsf{Bucket} &(0)&(1 & 1 & 1 & 1 & 1 & 1)&(2 & 2)&(3 & 3 & 3 & 3)
\end{array}
\]
(The entries between a pair of parentheses denote a bucket in $\SA$ which are these suffixes that start with the same character. The heads of bucket $0,1,2,3$ are $0,1,7,9,$ respectively.)

The scanning process is as follows. An arrow on top of a number indicates
that it is the current entry we are scanning.
\[
\begin{array}{cccccccccccccc}
\mathsf{Index}  & 0 & 1 & 2 & 3 & 4 & 5 & 6 & 7 & 8 & 9 & 10& 11& 12 \\
\mathsf{Type}   & L & S & S & L & L & S & S & L & L & S & L & L & S \\
  \SA   &(\vrb{12})&(\urb{11}& 1 & 5 & 9 & 2& 6)&(\Ept& \Ept)&(\Ept & \Ept & \Ept & \Ept) \\
  \SA   &(12)&(\vrb{11}& 1 & 5 & 9 & 2 &6)&(\urb{10}& \Ept)&(\Ept & \Ept & \Ept & \Ept) \\
  \SA   &(12)&( 11 &\vrb{1}& 5 & 9 & 2 &6)&( 10& \urb{0})&(\Ept & \Ept & \Ept & \Ept) \\
  \SA   &(12)&( 11 &1& \vrb{5} & 9 & 2 &6)&( 10& 0)&(\urb{4}& \Ept & \Ept & \Ept) \\
  \SA   &(12)&( 11 &1& 5 & \vrb{9} & 2 &6)&( 10& 0)&(4& \urb{8} & \Ept & \Ept) \\
  \SA   &(12)&( 11 &1& 5 &  9 & 2 & 6)&( 10& 0)&(\vrb{4}& 8 & \urb{3} & \Ept) \\
  \SA   &(12)&( 11 &1& 5 &  9 & 2 & 6)&( 10& 0)&( 4& \vrb{8} & 3 & \urb{7}) \\
\end{array}
\]
We first scan $\SA[0]=12$. We place $11$ (since $T[11]$ is L-type) to the LF-entry of bucket $1$ (i.e., $\SA[1]$), note that the LF-pointer of bucket $1$ initially points to $\SA[1]$ (head of bucket $1$).
Next, we scan $\SA[1]=11$, and we place $10$ ($T[10]$ is also L-type) to the LF-entry of its bucket (i.e., bucket $2$), and so on.\QEDB

\subsection{Induced sorting all L-suffixes from the sorted LMS-suffixes}
\label{app:lfromlms}
In this appendix, we show the induce sorting step which sorting the L-suffixes from LMS-suffixes in our example.
Note that the empty entries can be ignored, and all L-suffixes can still be sorted correctly.

\topic{Example}
Suppose all LMS-suffixes (i.e., $1,5,9,12$) are already sorted in the tail of their corresponding bucket in $\SA$:
($\Ept$ denotes an Empty entry.)
\[
\begin{array}{cccccccccccccc}
\mathsf{Index} & 0 & 1 & 2 & 3 & 4 & 5 & 6 & 7 & 8 & 9 & 10& 11& 12 \\
  T    & 2 & 1 & 1 & 3 & 3 & 1 & 1 & 3 & 3 & 1 & 2 & 1 & 0 \\
\mathsf{Type}  & L & S & S & L & L & S & S & L & L & S & L & L & S \\
  \SA   &(\urb{12})&(\Ept& \Ept & \Ept & \urb{1} & \urb{5}& \urb{9})&(\Ept& \Ept)&(\Ept& \Ept & \Ept& \Ept) \\
\mathsf{Bucket} &(0)&(1 & 1 & 1 & 1 & 1 & 1)&(2 & 2)&(3 & 3 & 3 & 3)
\end{array}
\]
The scanning process is as follows. An arrow on top of a number indicates
that it is the current entry we are scanning.
When we are scanning an empty entry in $\SA$, we ignore this entry (i.e., do nothing).
\[
\begin{array}{cccccccccccccc}
\mathsf{Index}  & 0 & 1 & 2 & 3 & 4 & 5 & 6 & 7 & 8 & 9 & 10& 11& 12 \\
\mathsf{Type}  & L & S & S & L & L & S & S & L & L & S & L & L & S \\
  \SA   &(\vrb{12})&(\urb{11}& \Ept& \Ept & 1 &5& 9)&(\Ept& \Ept)&(\Ept& \Ept & \Ept& \Ept) \\
  \SA   &(12)&(\vrb{11}& \Ept& \Ept & 1 & 5 &9)&(\urb{10}& \Ept)&(\Ept& \Ept & \Ept& \Ept) \\
  \SA   &(12)&( 11 & \Ept& \Ept &\vrb{1}& 5 & 9)&( 10& \urb{0})&(\Ept& \Ept & \Ept& \Ept) \\
  \SA   &(12)&( 11 & \Ept& \Ept& 1& \vrb{5} & 9 )&( 10& 0)&(\urb{4}& \Ept& \Ept & \Ept) \\
  \SA   &(12)&( 11 & \Ept& \Ept & 1& 5 & \vrb{9})&( 10& 0)&(4& \urb{8} & \Ept& \Ept) \\
  \SA   &(12)&( 11 &\Ept& \Ept &  1 & 5 & 9)&( 10& 0)&(\vrb{4}& 8 & \urb{3} & \Ept) \\
  \SA   &(12)&( 11 &\Ept& \Ept &  1 & 5 & 9)&( 10& 0)&( 4& \vrb{8} & 3 & \urb{7}) \\
\end{array}
\]\QEDB

\section{Handling Special Symbols}
\label{app:special}
We use at most five special symbols (e.g. U ($\Unique$), E ($\Empty$)) in this paper.
The special symbols are only used to simplify the argument, and we do not need any additional assumption.
Now we describe how to handle these special symbols.
Note that we introduce these special symbols and use them in our interior counter trick
(see Section \ref{secint:putlms}, \ref{secint:sortt} and \ref{sec:roint2}).
Recall that, if one uses the bucket array (which needs extra $\max\{|\Sigma|,n/2\}$ words workspace)
without using the interior counter trick, then the suffix array $\SA$ would only contain
the indices of $T$ (i.e., $\{0,1,\ldots,n-1\}$) (see the preliminary section).
Now the key point is to distinguish whether an entry in $\SA$ is one of these five integers (chosen to replace the special symbols) or just an index as before.

First, consider the simpler case where $n< 2^{\lceil\log n\rceil} -5$.
We can simply use integers $\{n,n+1,\ldots,n+4\}$ as these five special symbols
since they are different from all indices (identifiable) and each of them can be stored in one entry of $\SA$ same as an index of $T$.

Otherwise, we use any five integers (belong to $[0,n-1]$) as these five special symbols.
Without loss of generality, we assume that these five integers are $\{n-5,n-4,\ldots,n-1\}$.
Then we use ten extra variables (as the previous bucket array) to indicate the head/tail of the five buckets
(which the five suffixes $\{\Suf(n-5),\ldots,\Suf(n-1)\}$ belong to)
in $\SA$ and their LF/RF-entries. Thus we do not need the interior counter trick for these five buckets.
Note that we can obtain the head/tail of these five buckets by scanning $T$ once to count how many characters are smaller/larger than these five characters $\{T[n-5],\ldots,T[n-1]\}$, respectively.
To identify an entry of $\SA$, if the entry belongs to one of these five buckets,
it is just an index.
Otherwise, we check its value: if it equals to one of $\{n-5,n-4,\ldots,n-1\}$,
then it is a special symbol since none of these five integers belongs to this bucket.

There is one more subtlety.
In our interior counter trick, there is a counter for each bucket
to count how many suffixes have been placed into this bucket.
Thus there are three types of entries in $\SA$: 1) indices of $T$; 2) these five integers (as special symbols);
3) counters for the buckets.
We want to point out that the position of each counters is always fixed to be adjacent to the special symbol.
Besides, the value of any counter is less than its bucket size minus 2
since the counter is deleted when we place the last two indices into its bucket (see e.g. the figure in Section \ref{sec:roint2}).
Moreover, it is not hard to verify that the counter can only conflict with the special symbol $E$
(which happens when we first insert an index to a bucket).
Thus we choose to use $n-1$ to denote the special symbol $E$.
There is no conflict since the counter is always less than $n-2$.

\section{Restoring the original string $T$}
\label{app:rec}

In this appendix, we show that we can restore the string $T$ in our first algorithm
for  the integer alphabets $\{1,2,\ldots,\Sigma\}$.  First, we can see that upon the termination of our algorithm, suffix array $\SA$ contains the indices of all suffixes of $T$ which are in lexicographical order. Note that if we do not modify $T$, we will have the following observation.
\begin{observation}\label{ob:recover}
For each suffix $\Suf(\SA[i])$ in $\SA$, let $b_i$ denote its bucket character (i.e., the first common character).
We have that $T[\SA[i]]=b_i$.
\end{observation}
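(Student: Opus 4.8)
The plan is to obtain Observation~\ref{ob:recover} by simply unfolding the definitions of \emph{bucket} and \emph{bucket character} from Section~\ref{sec:pre}, once we have recorded that, at termination, $\SA$ is the suffix array of the \emph{original} string $T$. This last point needs a word: the integer-alphabet algorithm overwrites $T$ during the renaming step of Section~\ref{secint:renamet}, but Lemma~\ref{lem:rename} guarantees that renaming does not change the lexicographic order of any suffixes, so the permutation finally stored in $\SA$ is also exactly the suffix array of the original $T$. Consequently the partition of $\SA$ into buckets with respect to the first characters of the \emph{original} suffixes is well defined, and $b_i$ is meaningful.

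Given that, the argument is immediate. First I would recall that, because $\SA$ lists all suffixes of $T$ in increasing lexicographic order, all suffixes sharing the same first character occupy a contiguous range of $\SA$; by definition that range is a bucket and its bucket character is the shared first character. Now fix an index $i$ and let $b_i$ be the bucket character of the bucket containing position $i$. The suffix stored at $\SA[i]$ is $\Suf(\SA[i]) = T[\SA[i]\ldots n-1]$, whose first character is precisely $T[\SA[i]]$. Since $\Suf(\SA[i])$ lies in that bucket, its first character equals $b_i$; hence $T[\SA[i]] = b_i$, as claimed.

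There is essentially no technical obstacle internal to the observation — it is a definitional unfolding. The substantive use, for which this observation is the stepping stone, is to convert it into an $O(n)$-time, $O(1)$-workspace routine that rewrites the original entries of $T$ from $\SA$ together with the renamed string $T'$ still residing in the input space. The delicate point there is the correspondence between buckets of $\SA$ and the numeric character values: $T'$ already encodes, for each position $j$, the head (if $T[j]$ was L-type) or the tail (if S-type) of $j$'s bucket, so the bucket boundaries of $\SA$ are recoverable, and Observation~\ref{ob:recover} tells us that every entry of a bucket originally carried that bucket's character; pinning down the numeric value of that character — in particular dealing with elements of $\Sigma$ that may not occur in $T$ — is the only place where the precise alphabet assumption $\Sigma=\{1,\ldots,|\Sigma|\}$ enters, and that is where I would expect the argument to require the most attention.
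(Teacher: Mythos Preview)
Your proposal is correct and matches the paper's treatment: the paper states this observation without proof, regarding it as immediate from the definition of bucket and bucket character once $\SA$ is the suffix array of $T$. Your additional care in invoking Lemma~\ref{lem:rename} to justify that $\SA$ is the suffix array of the \emph{original} $T$ (despite the renaming) is sound and slightly more explicit than the paper, which simply prefaces the observation with ``Note that if we do not modify $T$, we will have the following observation.''
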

The key point to recover $T$ is that we need to maintain the equality relationship of the characters of $T$. So if we modify $T$ to $T'$ under this condition such that $T'[i]=T'[j]$ (or $T'[i]\neq T'[j]$, resp.) if and only if $T[i]=T[j]$ (or $T[i]\neq T[j]$, resp.). Then, we can recover $T$ from $\SA$ and $T'$ using linear time (scan $\SA$ once) and $O(1)$ workspace from above Observation \ref{ob:recover}.
Now, we need to modify the first renaming step in our algorithm to rename each character $T[i]$ to be its bucket tail (note that this modification maintains the equality relationship). This change only leads to some minor change in the details in the later induced sorting step. In the induced sorting step, since we let all $T[i]$ points to its bucket tail, the induced sort LMS-suffixes or S-suffixes are the same as before.
The only thing we need to explain in the induced sorting step is that we induced sort L-suffixes from the sorted LMS-suffixes since there are no pointers which point to the bucket head (see Step 1 of Section \ref{secint:sortt} which sort all L-suffixes from the sorted LMS-suffixes using induced sorting). However, we can fix this step using our interior counter trick we developed. Now, we describe the details. We distinguish the buckets in $\SA$ into two types: one type does not contain LMS-suffixes, the other contains LMS-suffixes. These two types are easy to identify since the LMS-suffixes have already been sorted in the tail of their corresponding buckets in $\SA$.

\topic{Type 1. The buckets do not contain LMS-suffixes}
In this type, we initialize the bucket in the following steps. Scan $T$ from right to left. For every $T[i]$ which is L-type and its bucket is this type, do the following:
\begin{enumerate}[(1)]
    \item If $\SA[T[i]]= \Empty$, let $\SA[T[i]]= \Uniquei$ (unique L-type character in this bucket).
    \item If $\SA[T[i]] = \Uniquei$, let $\SA[T[i]] = \NUniquei$ and $\SA[T[i]-1] = 2$ (number of L-type characters in this bucket is $2$).
    \item If $\SA[T[i]] = \NUniquei$, increase $\SA[T[i]-1]$ by one. ($\SA[T[i]-1]$ denote the number of L-type characters in this bucket)
\end{enumerate}
After this initialization, the head of this type bucket can be indicated by $\SA[t]$ and $\SA[t-1]$, where $t$ is its bucket tail.

\topic{Type 2. The buckets contain LMS-suffixes}
In this type, we initialize the bucket in the following steps. Scan $T$ from right to left. For every $T[i]$ which is L-type and its bucket is this type, do the following:
\begin{enumerate}[(1)]
    \item If $\SA[T[i]]$ is an index, shift these LMS-suffixes (which are sorted in this bucket tail) to left by one position and let $\SA[T[i]]= \Uniqueii$ (unique L-type character in this bucket).
    \item If $\SA[T[i]] = \Uniqueii$, shift these LMS-suffixes (which have been shifted by one position) to left by one position again and let $\SA[T[i]-1] = 2$ (number of L-type characters in this bucket is $2$).
    \item If $\SA[T[i]] = \NUniqueii$, increase $\SA[T[i]-1]$ by one. ($\SA[T[i]-1]$ denote the number of L-type characters in this bucket)
\end{enumerate}
After this initialization, the head of this type bucket can be indicated by $\SA[t]$ and $\SA[t-1]$ too, where $t$ is its bucket tail.

Now, all L-suffixes can be sorted using induced sort like before, but their indices are not in their final positions in $\SA$. We need scan $T$ once more from right to left to compute the number of suffixes in each bucket, then shift these sorted L-suffixes to their bucket head (it is not hard to see that this shift step can be done in linear time). Now, all L-suffixes are placed in their final positions in $\SA$. Then using induced sort as before we can sort all S-suffixes as well. In conclusion, we have the following lemma.
\begin{lemma}
The original string $T$ can be restored using linear time and $O(1)$ workspace.
\end{lemma}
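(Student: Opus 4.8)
The idea is threefold: (i) modify the renaming step of Section~\ref{secint:renamet} so that the modified input $T'$ retains enough structure to be inverted, (ii) check that the rest of the algorithm still produces the correct $\SA$ under this modification — only one step needs reworking — and (iii) read $T$ back off $\SA$ and $T'$ in one pass. For (i), I would replace the renaming by the variant that maps \emph{every} character $T[i]$ to the index of the \emph{tail} of its bucket. Bucket tails depend only on the multiset of characters of $T$, and buckets occur in $\SA$ in increasing order of their bucket character, so this map is strictly increasing in the character value; in particular it preserves the equality relation among characters, and by (essentially) Lemma~\ref{lem:rename} — only simpler, since the map is monotone — the lexicographic order of all suffixes, hence the final $\SA$, is unchanged, provided every step of the algorithm still works. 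The only step that breaks is the inducing of L-suffixes from the sorted LMS-suffixes (Step~1 of Section~\ref{secint:sortt}), which used the fact that L-type characters were renamed to bucket \emph{heads}; now every character points to its bucket tail, so the bucket-head pointer is lost.

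For (ii), I would repair that step with the interior counter trick, handling separately the buckets that contain no LMS-suffix and those that do — the two are distinguishable at initialization time because the sorted LMS-suffixes already sit at the tails of their buckets. For a bucket of the first type I scan $T$ right-to-left and, at its tail $t$, collapse the number of its L-characters into the cells $\SA[t],\SA[t-1]$ using symbols $\Uniquei/\NUniquei$; for a bucket of the second type I do the same after first shifting the already-placed LMS-suffixes left by one or two positions, using $\Uniqueii/\NUniqueii$. After this initialization the head of every bucket is recoverable from $\SA[t],\SA[t-1]$, so the usual left-to-right induced scan can place each L-suffix at the current LF-entry of its bucket and advance the counter, exactly as in Sections~\ref{secint:putlms} and~\ref{secint:sortt}. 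This leaves each bucket's L-suffixes in a contiguous region that is not yet their final position, so one more right-to-left scan of $T$ to tally full bucket sizes lets me shift each such region to its bucket head in $O(n)$ total time. The subsequent inducing of S-suffixes from the sorted L-suffixes is untouched, since S-type characters still point to bucket tails as before; hence $\SA$ is correct.

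For (iii), I recover $T$ from $\SA$ and $T'$ in a single left-to-right scan. I keep a running bucket character $c$ and the $T'$-value $v$ of the current bucket (initialized so that $\SA[0]$, the lone sentinel suffix, receives $c=0$); for each $i$ I read $r=T'[\SA[i]]$ \emph{before} overwriting that cell, set $v\leftarrow r,\ c\leftarrow c+1$ whenever $r\neq v$, and in all cases set $T[\SA[i]]\leftarrow c$. Since $\SA$ is a permutation, cell $\SA[i]$ is never read again, and $v$ is held in a register rather than re-read, so no needed value of $T'$ is destroyed; correctness is Observation~\ref{ob:recover}, together with the fact that the $j$-th block of equal $T'$-values in $\SA$ is the bucket of the $j$-th smallest character of $T$, and for the alphabet $\{1,\dots,|\Sigma|\}$ that character is exactly $j$. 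Every phase uses $O(1)$ extra words and $O(n)$ time (the shifts and bucket scans each touch each bucket a constant number of times), which gives the lemma.

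The main obstacle I expect is step (ii): once the bucket-head pointers are gone, one must verify that the $\Uniquei/\NUniquei$ and $\Uniqueii/\NUniqueii$ encodings never collide with genuine suffix indices, that the shifts keep each bucket's entries contiguous while the already-sorted LMS-suffixes coexist inside the same buckets, and that the placement counter always fits in the cell freed before a bucket's last L-suffixes are inserted — the same bookkeeping as the interior counter trick of Sections~\ref{secint:putlms} and~\ref{sec:roint2}, only with more objects sharing each bucket. Handling the five special symbols is routine by Appendix~\ref{app:special}.
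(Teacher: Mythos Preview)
Your proposal is correct and follows essentially the same route as the paper: rename every character to its bucket tail so that equality of characters is preserved, repair the single broken step (inducing L-suffixes from sorted LMS-suffixes) with the interior counter trick split over the two bucket types via $\Uniquei/\NUniquei$ and $\Uniqueii/\NUniqueii$, shift the induced L-suffixes to their heads with one more pass, and then read $T$ back from $\SA$ and $T'$ in one scan. Your part~(iii) spells out the recovery scan more explicitly than the paper does, but it is exactly the intended implementation of Observation~\ref{ob:recover}.
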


\end{document}